\newcommand{\commentout}[1]{}
\newtheorem{thm}{Theorem}[section]
\newtheorem{lem}[thm]{Lemma}
\newtheorem{prop}[thm]{Proposition}
\newtheorem{cor}[thm]{Corollary}
\newtheorem{ex}[thm]{Example}
\newtheorem{rmk}[thm]{Remark}
\newcommand{\nwc}{\newcommand*}
\nwc{\ben}{\begin{equation*}}
\nwc{\bea}{\begin{eqnarray}}
\nwc{\beq}{\begin{eqnarray}}
\nwc{\bean}{\begin{eqnarray*}}
\nwc{\beqn}{\begin{eqnarray*}}
\nwc{\beqast}{\begin{eqnarray*}}
\nwc{\eal}{\end{align}}
\nwc{\een}{\end{equation*}}
\nwc{\eea}{\end{eqnarray}}
\nwc{\eeq}{\end{eqnarray}}
\nwc{\eean}{\end{eqnarray*}}
\nwc{\eeqn}{\end{eqnarray*}}
\theoremstyle{remark}
\nwc{\nn}{\nonumber}
\nwc{\mb}{\mathbf}
\nwc{\ml}{\mathcal}
\newcommand{\lt}{\left}
\newcommand{\rt}{\right}
\nwc{\cle}{\preccurlyeq}
\nwc{\lb}{\llbracket}
\nwc{\rb}{\rrbracket}
\nwc{\modpi}{{{\rm mod}\,2\pi}}
\nwc{\tphi}{{{\phi}_0}}
\nwc{\mpc}{\,\mbox{MPC($\gamma$)}\,\,}
\nwc{\vep}{\varepsilon}
\nwc{\ep}{\epsilon}
\nwc{\vrho}{\varrho}
\nwc{\orho}{\bar\varrho}
\nwc{\vpsi}{\varpsi}
\nwc{\lamb}{\lambda}
\nwc{\om}{\omega}
\nwc{\Om}{\Omega}
\nwc{\al}{\alpha}
\nwc{\sgn}{\mbox{\rm sgn}}
\nwc{\IA}{\mathbb{A}} 
\nwc{\bi}{\mathbf{i}}
\nwc{\ba}{\mathbf{a}}
\nwc{\bc}{\mathbf{c}}
\nwc{\bmb}{\mathbf{b}}
\nwc{\bo}{\mathbf{o}}
\nwc{\IB}{\mathbb{B}}
\nwc{\IC}{\mathbb{C}} 
\nwc{\ID}{\mathbb{D}} 
\nwc{\IM}{\mathbb{M}} 
\nwc{\IP}{\mathbb{P}} 
\nwc{\II}{\mathbb{I}} 
\nwc{\IE}{\mathbb{E}} 
\nwc{\IF}{\mathbb{F}} 
\nwc{\IG}{\mathbb{G}} 
\nwc{\IN}{\mathbb{N}} 
\nwc{\IQ}{\mathbb{Q}} 
\nwc{\IR}{\mathbb{R}} 
\nwc{\IT}{\mathbb{T}} 
\nwc{\IZ}{\mathbb{Z}} 
\nwc{\cE}{{\ml E}}
\nwc{\cP}{{\ml P}}
\nwc{\cQ}{{\ml Q}}
\nwc{\cL}{{\ml L}}
\nwc{\cX}{{\ml X}}
\nwc{\cW}{{\ml W}}
\nwc{\cZ}{{\ml Z}}
\nwc{\cR}{{\ml R}}
\nwc{\cV}{{\ml V}}
\nwc{\cT}{{\ml T}}
\nwc{\crV}{{\ml L}_{(\delta,\rho)}}
\nwc{\cC}{{\ml C}}
\nwc{\cO}{{\ml O}}
\nwc{\cA}{{\ml A}}
\nwc{\cK}{{\ml K}}
\nwc{\cB}{{\ml B}}
\nwc{\cD}{{\ml D}}
\nwc{\cF}{{\ml F}}
\nwc{\cS}{{\ml S}}
\nwc{\cM}{{\ml M}}
\nwc{\cG}{{\ml G}}
\nwc{\cH}{{\ml H}}
\nwc{\bk}{{\mb k}}
\nwc{\bn}{{\mb n}}
\nwc{\bp}{{\mb p}}
\nwc{\bq}{{\mb q}}
\nwc{\bz}{\mb z}
\nwc{\bl}{{\mb l}}
\nwc{\bj}{{\mb j}}
\nwc{\bs}{{\mb s}}
\nwc{\by}{\mathbf{h}}
\nwc{\bZ}{\mathbf{Z}}
\nwc{\bF}{\mathbf{F}}
\nwc{\bE}{\mathbf{E}}
\nwc{\bV}{\mathbf{V}}
\nwc{\bY}{\mathbf Y}
\nwc{\br}{\mb r}
\nwc{\pft}{\cF^{-1}_2}
\nwc{\bU}{{\mb U}}
\nwc{\bG}{{\mb G}}
\nwc{\bg}{\mathbf{g}}
\nwc{\mbf}{\mathbf{f}}
\nwc{\mbe}{\mathbf{e}}
\nwc{\be}{\mathbf{e}}
\nwc{\ind}{\operatorname{I}}
\nwc{\mbx}{\mathbf{f}}
\nwc{\bb}{\mathbf{g}}
\nwc{\xmax}{f_{\rm max}}
\nwc{\xmin}{f_{\rm min}}
\nwc{\suppx}{\hbox{\rm supp} (\mbf)}
\nwc{\cI}{\IZ^2_N}
\nwc{\chis}{{\chi^{\rm s}}}
\nwc{\chii}{{\chi^{\rm i}}}
\nwc{\pdfi}{{f^{\rm i}}}
\nwc{\pdfs}{{f^{\rm s}}}
\nwc{\pdfii}{{f_1^{\rm i}}}
\nwc{\pdfsi}{{f_1^{\rm s}}}
\nwc{\thetatil}{{\tilde\theta}}
\nwc{\red}{\color{red}}
\nwc{\blue}{\color{blue}}
\nwc{\prox}{\hbox{prox}}
\nwc{\diag}{\hbox{\rm diag}}
\nwc{\supp}{{\hbox{\rm supp}}}
\nwc{\sloc}{J_{\rm f}}
\nwc{\bu}{{\mb u}}
\nwc{\bv}{{\mb v}}
\nwc{\cU}{\mathcal{U}}
\nwc{\cN}{\mathcal{N}}
\nwc{\bN}{\mathbf{N}}
\nwc{\mbm}{\mathbf{m}}
\nwc{\bw}{\mathbf{w}}
\nwc{\bom}{\mathbf{w}}
\nwc{\bt}{\mathbf{t}}
\nwc{\z}{y}
\nwc{\cY}{\mathcal{Y}}
\nwc{\bM}{\mathbf{M}}
\nwc{\half}{{1\over 2}}
\nwc{\Sf}{S_{\rm f}}
\nwc{\Jf}{J_{\rm f}}
\nwc{\nul}{\hbox{\rm null}_\IR}
\nwc{\spanR}{\hbox{\rm span}_\IR}
\nwc{\Arg}{\hbox{\rm Arg~}}
\nwc{\fdr}{S_{\rm f}}
\nwc{\phase}[1]{\exp\lt[i\measured #1\rt]}
\nwc{\im}{{\rm i}}
\begin{document}


 \title{
 3D Tomographic Phase  Retrieval and Unwrapping}

\author{Albert Fannjiang 
 \address{
Department of Mathematics, University of California, Davis, California  95616, USA. Email:  {\tt fannjiang@math.ucdavis.edu}
} 
}

\maketitle 

\begin{abstract} {  This paper develops uniqueness theory for 
3D  phase retrieval with finite, discrete measurement data for {strong phase objects} and { weak phase objects}, including:

(i) {\em Unique determination of (phase) projections from diffraction patterns} -- General measurement schemes with coded and uncoded apertures are proposed and shown to ensure unique reduction of diffraction patterns to the phase projection for a strong phase object (respectively, the projection for a weak phase object) in each direction separately without the knowledge of relative orientations and locations.    (ii) {\em Uniqueness for  3D phase unwrapping} -- General conditions for unique determination of a 3D strong phase object from its phase projection data are established, including, but not limited to, random tilt  schemes densely sampled from a spherical triangle of vertexes  in three orthogonal directions  and other deterministic tilt schemes. (iii) {\em Uniqueness for projection tomography} --   Unique determination of an object of $n^3$ voxels  from generic $n$ projections or $n+1$ coded diffraction patterns is proved. 

This approach of reducing 3D phase retrieval to the problem of (phase) projection tomography has the practical implication  of  enabling classification and alignment, when relative orientations are unknown,  to be carried out in terms of (phase) projections, instead of diffraction patterns.

The applications with the measurement schemes such as single-axis tilt, conical tilt, dual-axis tilt, random conical tilt and general random tilt are discussed. 
}

 \end{abstract}


\section{Introduction}

\commentout{
Diffraction plays a central role in structure determination by high resolution X-ray and electron microscopies due to high sensitivity of phase contrast mechanism \cite{Henderson95, fel,3DED19,femto-pulse15}. Compared to the real-space imaging with lenses such as transmission electron microcopy, lensless diffraction methods are  aberration free and have the potential for delivering equivalent resolution with fewer photons/electrons \cite{X-noise09,ED18}.

While single crystal X-ray diffraction remains as the most widely used technique for 3D structure determination, 
 limited crystallinity of many materials often makes it challenging  to obtain sufficiently large and well-ordered crystals for X-ray diffraction  \cite{min-crystal}. This obstacle motivates the development of coherent diffractive imaging of non-periodic structures. 
 
X-ray and electron diffractions for non-periodic objects  can be realized in  two imaging modalities:
 diffraction tomography and single-particle imaging/reconstruction (Figure \ref{fig0}).  The former employs a  sizable object capable of sustaining illuminations from multiple  directions while the latter deals with multiple copies of the particle, such as a biomolecule, in various orientations \cite{serial-ED20,ED-support19,micro-ED19, serial-X11,XFEL17,SPI18}. The two modalities are mathematically equivalent, except that in single-particle reconstruction there are various levels of uncertainty in the relative orientations and locations between the object and the measurement set-up depending on the sample delivery methods
  \cite{serial15,chip12,sponge13,micro-X14, fixed17,sample-delivery18}. As a result, single-particle imaging is sometimes referred to
  as {\em crypto-tomography} \cite{Elser09}. 

Due to the extremely short wavelengths of X-ray and electron waves, only intensity measurement data can be collected. To emphasize this aspect, we collectively refer to the two imaging modalities with X-rays and electrons as {\em 3D  phase retrieval}.

Phase retrieval is the process of estimating the phase of a wave from intensity measurements. As phase information is not directly measurable in most imaging systems, it must be inferred from intensity measurements.
\commentout{ which are typically squared magnitudes of complex-valued wavefronts. These measurements capture the intensity but lose the phase information of the wave, which often contains essential details about an object's internal structure or the refractive index variations in optical tomography. To retrieve this lost phase information, various algorithms, including iterative or optimization-based methods, are used. These algorithms often involve alternating between enforcing constraints in the spatial and Fourier domains, or they might utilize prior knowledge about the object.
} On the other hand, phase unwrapping is a process required when the phase of a wave is already known but is 'wrapped' due to its cyclical nature. The phase of a wave is typically measured modulo $2\pi$, which means that it repeats every $2\pi$ and only values between $-\pi$  and $\pi$ or between 0 and $2\pi$, are recorded. When the actual phase exceeds this range, it 'wraps' around, leading to ambiguities in the phase data. Phase unwrapping involves resolving these ambiguities to recover the true phase map. 
\commentout{Various algorithms, ranging from simple path-following to more advanced statistical or regularization-based methods, are used for phase unwrapping.

In summary, phase retrieval aims to estimate the phase information from intensity measurements, while phase unwrapping resolves the ambiguities in the measured phase values due to the inherent 'wrapping' of phase values within a specific range. Both processes are essential for recovering detailed, high-resolution images in applications like diffraction tomography, optical interferometry, and more.
}

Our objective  is a theory of uniqueness for 3D phase retrieval with finite, discrete measurement data for both strong phase and weak phase objects. To this end,  we propose pairwise diffraction measurement schemes and analyze the conditions for unique determination of  the phase projection for a strong phase object (respectively, the projection for a weak phase object) in each direction; In the case of a  strong phase object,  the rendered phase projection data  contain only the wrapped phase information and  we propose a framework and tilt schemes for the resulting 3D phase unwrapping problem; 
In the case of a weak phase object, we analyze the resulting problem of projection
tomography and derive explicit conditions for unique determination of the object  of $n^3$ voxels from $n$ projection data or $n+1$ coded diffraction patterns.
}

Diffraction is crucial in structure determination via high-resolution X-ray and electron microscopies due to the high sensitivity of the phase contrast mechanism  \cite{Henderson95, fel,3DED19,femto-pulse15}. Compared to real-space imaging with lenses, like that in transmission electron microscopy, lensless diffraction methods are aberration-free and have the potential to deliver equivalent resolution using fewer photons/electrons  \cite{X-noise09,ED18}.

Although single crystal X-ray diffraction is the most commonly used technique for 3D structure determination, the limited crystallinity of many materials often makes obtaining sufficiently large and well-ordered crystals for X-ray diffraction challenging   \cite{min-crystal}. This obstacle has inspired the development of coherent diffractive imaging for non-periodic structures.

X-ray and electron diffractions for non-periodic objects can be realized in two imaging modalities: diffraction tomography and single-particle imaging/reconstruction  (Figure \ref{fig0}).  The former involves a sizable object capable of enduring illuminations from various directions, while the latter handles multiple copies of a particle, such as a biomolecule, in different orientations  \cite{serial-ED20,ED-support19,micro-ED19, serial-X11,XFEL17,SPI18}. These two modalities are mathematically equivalent, except that in single-particle reconstruction, the uncertainty levels vary concerning the relative orientations and locations between the object and the measurement set-up and depend on the sample delivery methods
  \cite{serial15,chip12,sponge13,micro-X14, fixed17,sample-delivery18}. 

Since the wavelengths of X-ray and electron waves are extremely short, only intensity measurement data can be collected. To emphasize this aspect we refer to the two imaging modalities  as {\em 3D phase retrieval}.

Phase retrieval is the process of estimating the phase of a wave from intensity measurements because phase information cannot be directly measured in most imaging systems. In contrast, phase unwrapping is necessary when the phase of a wave is already known but is 'wrapped' due to its cyclical nature. The phase of a wave, typically measured modulo $2\pi$, repeats every $2\pi$ with values recorded between $-\pi$  and $\pi$, or between 0 and $2\pi$. When the actual phase exceeds this range, it 'wraps' around, creating ambiguities in the phase data. Phase unwrapping resolves these ambiguities to recover the true phase map.

Our goal is to develop a theory of uniqueness for 3D phase retrieval with finite, discrete measurement data for both strong phase and weak phase objects. To accomplish this, we propose pairwise diffraction measurement schemes and analyze the conditions necessary for the unique determination of the phase projection for a strong phase object, and the projection for a weak phase object in each direction. For a strong phase object, the provided phase projection data contain only the wrapped phase information, so we propose a framework and tilt schemes to address the resulting 3D phase unwrapping problem. For a weak phase object, we analyze the resulting problem of projection tomography and derive explicit conditions for the unique determination of the object of $n^3$ voxels from $n$ projection data or $n+1$ coded diffraction patterns.

\begin{figure}[t]
\centering
{\includegraphics[width=12cm]{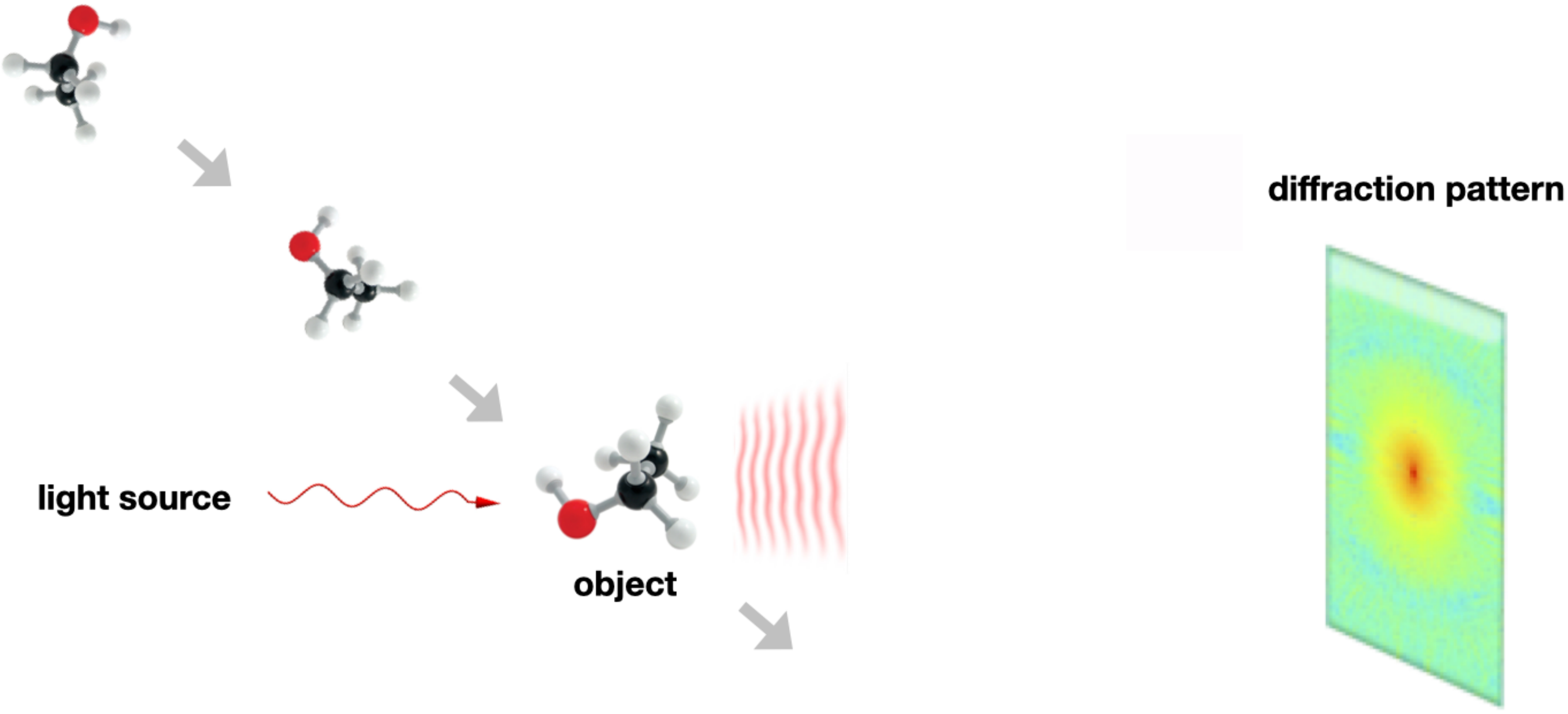}}
\caption{Serial crystallography: A stream of identical particles  in various orientations scatter the incident wave with diffraction patterns measured in far field.}
\label{fig0}
\end{figure}

\subsection{Forward model}\label{sec:model}

Let  $n(\br)\in \IC$ denote the complex refractive index at the point $\br\in \IR^3$. 
The real and imaginary components of $n(\br)$ describe the dispersive and absorptive aspects of the wave-matter interaction. The real part is related to electron density in the case of X-rays and Coulomb field in the case of electron waves. 

Suppose that  $z$ is  the optical axis in which the incident plane wave $e^{\im \kappa z}$ propagates. 
 For a quasi-monochromatic wave field $u$ such as coherent X-rays and electron waves, it is useful to write  $u=e^{\im\kappa z} v$ to factor out  the incident wave and focus on the modulation field (i.e. the envelope), described by $v$.
 
The modulation field $v$ satisfies 
 the paraxial wave equation   \cite{Paganin}
\beq
\label{para}
\im \kappa {\partial\over \partial z}v+\half \Delta_\perp v+\kappa^2 f v=0,\quad f:=(n^2-1)/2
\eeq
where $\Delta_\perp=\nabla_\perp^2,  \nabla_\perp=(\partial_x,\partial_y)$,  derived from the fundamental wave equation by the so called small-angle approximation (hence the term ``paraxial wave")  which requires the wavelength $\lambda$ to be smaller than the maximal distance $d$  over which the fractional variation of $f$ is negligible   \cite{strong-phase}. 

In view of  different scaling regimes involved in the set-up  (Figure \ref{fig0} and \ref{fig1}),  we  now break  up the forward model into two components: First, a large Fresnel number regime from the entrance pupil to  the exit plane; Second, a small Fresnel number regime  from  the exit plane  to the detector  plane.

 For the exit wave, consider the large Fresnel number regime
 \beq
 \label{lim2}
 N_{\rm F}={d^2\over \lambda \ell}\gg 1
 \eeq
 where $\ell$ is the linear size of the object. 
 By  rescaling the coordinates 
\beqn
z \longrightarrow \ell z,\quad (x,y)\longrightarrow d (x,y)
\eeqn
we  non-dimensionalize \eqref{para} as 
\beqn
\im {\partial\over \partial z}v+{1\over 4\pi} N^{-1}_{\rm F}\Delta_\perp v+\kappa \ell f v=0, 
\eeqn
which has a diminishing diffraction term $\Delta_\perp$ under \eqref{lim2}.

After dropping the $\Delta_\perp$ term, the reduced equation  in terms of the original coordinates before rescaling  is
\[
\im {\partial\over \partial z}v+\kappa f v=0,
\]
which can be solved by  integrating along the optical axis as
 \beq
v(\br)&=&e^{\im\kappa\psi(\br)} \label{rytov},\\
\label{wrap}\psi(x,y,z)&=&\int_{-\infty}^zf(x,y,z')dz'.  
\eeq
Alternatively, \eqref{rytov}-\eqref{wrap} can be derived by  stationary phase analysis  \cite{strong-phase} or
the {high-frequency  Rytov} approximation \cite{CT}.

The exit wave is given by $u=e^{\im\kappa z} v$  evaluated at the object's rear boundary (say,  $z=0$). At $z=+\infty$,  \eqref{wrap} is called  the {\em ray transform}, or simply the {\em projection},  of the object 
$f$ in the $z$ direction and  \eqref{rytov} will be called the {\em phase projection}  which equals the exit wave, up to a constant phase factor  \cite{Natterer}.

By allowing significant phase fluctuations with arbitrary $\kappa |\psi|$,  \eqref{rytov}-\eqref{wrap} is an  improvement over the {\em weak-phase-object approximation}  
\beq
\label{Born}
e^{\im\kappa\psi}&\approx &1+\im\kappa\psi 
\eeq
often used in cryo-electron microscopy (cryo-EM)  \cite{Frank06, Glaeser,strong-phase10}.  Following the nomenclature in \cite{Glaeser} and \cite{Spence}, 
we call  \eqref{rytov}-\eqref{wrap} the {\em strong-phase-object} approximation, noting, however, that
$f$ is a complex-valued function in general (thus $e^{\im \kappa\psi}$  not a ``phase" object {\em per se}). 
The strong-phase-object approximation is often invoked the high-frequency forward scattering problems such  as diffraction tomography and X-ray diffractive imaging (see, e.g. \cite{Wolf01,ptycho-tomo10, strong-phase11, ptycho-tomo17,Xray}). In view of
\eqref{Born},  the weak-phase-object approximation is the first order Born approximation of the strong-phase-object approximation. The applicability and accuracy of the Born and Rytov approximations  has been well studied \cite{Keller, Rytov89, Rytov92, Rytov98,RG}.

However,  the exit wave described by the phase projection \eqref{rytov} yields only the information of the projection of $f$ modulo $2\pi/\kappa$, and therein arises  the
problem of {\em phase unwrapping}, which is fundamentally unsolvable unless additional prior information is known (see Section \ref{sec:unwrap}) and poses a major road block to the implementation 
of  diffraction tomography.
The solution for phase unwrapping is critical  in revealing the depth dimension of the object. In contrast, phase unwrapping problem is not present in computed tomography \cite{CT}, which neglects diffraction, or , cryo-EM which operates under  the weak-phase-object approximation \eqref{Born}. 

{ Between the strong- and weak-phase-object approximations, there is a family of hybrid approximations which take the form
\beq\label{hybrid}
v_q= \Big(1+{\im\kappa\psi\over q}\Big)^q,\quad q\in \IR, 
\eeq
\cite{Lu85,Born-Rytov}. Clearly, the weak-phase-object approximation corresponds to $q=1$ while $v_q$ tends to the phase projection \eqref{rytov}  as $q\to\infty.$ In the hybrid approximation, the complex phase $\psi$ is given by
\beqn
\psi_q={q\over\im \kappa} \lt(v_q^{1/q}-1\rt) 
\eeqn
which, for integers $q\ge 2$, has multiple values due to the $q$-th root of a complex number. This leads to a ($q$-th order) phase unwrapping problem similar to that for the strong-phase-object approximation.  The hybrid approximation with an intermediate value of $q$  can be used to incorporate the different features of the Born and Rytov models. Although not a focus of the present work, we will further elaborate the phase unwrapping problem associated with \eqref{hybrid} with integers $q\ge 2$ in Section \ref{sec:hybrid}. }

\begin{figure}
\centering
{\includegraphics[width=14cm]{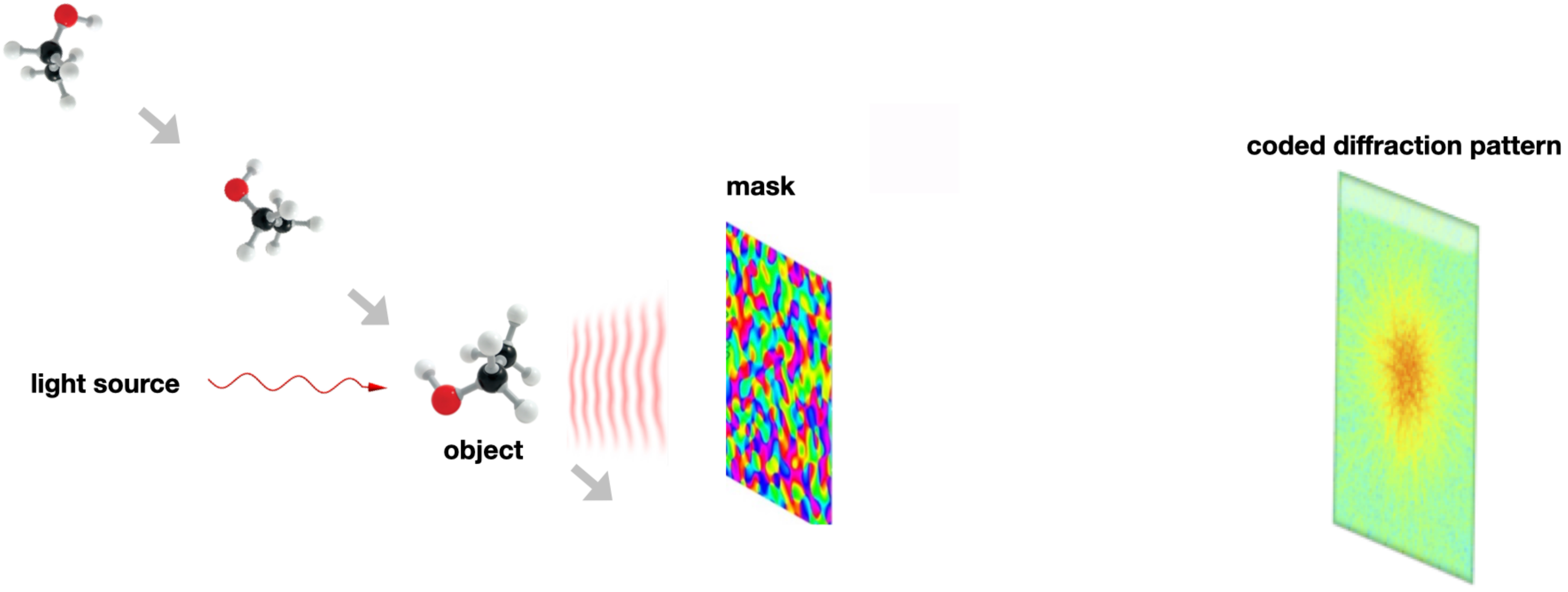}}
\caption{ Serial crystallography with a coded aperture}
\label{fig1}
\end{figure}

After passing through the object and the mask $\mu$ immediately behind, the exit wave  \eqref{rytov}-\eqref{wrap}
becomes  the masked exit wave $\mu e^{\im  \kappa \psi}$ at the exit plane $z=0$ and then undergoes the free space propagation (with $n=1, f=0$) for $z>0$ described by
\beqn
\im {\partial\over \partial z}v+{1\over 2\kappa} \Delta_\perp v=0,\quad v(x,y,0)=\mu e^{\im  \kappa \psi}.
\eeqn
The solution is given  by convolution with the Fresnel kernel as
\beqn
\nn v(x,y,z)&=&{1\over \im \lambda z}\int_{\IR^2} e^{{\im\kappa\over 2z} (|x-x'|^2+|y-y'|^2)}\mu(x',y') e^{\im  \kappa \psi(x',y',0)}dx'dy'
\eeqn
and hence,  after writing out the quadratic phase term, 
\beq
u(x,y,z)&=& {e^{\im \kappa z}\over \im \lambda z}e^{{\im\kappa \over 2z} (x^2+y^2)} \int_{\IR^2} e^{-{\im\kappa\over z} (xx'+yy')}e^{{\im\kappa\over 2z} (|x'|^2+|y'|^2)}\mu(x',y') e^{\im  \kappa \psi(x',y',0)}dx'dy'.\label{Fresnel}
\eeq

Let the detector plane $z=L$ to be sufficiently far away from the exit plane $z=0$ so that  the Fresnel number is small:
\beq
\label{lim3} N_F={\ell^2\over\lambda L} \ll 1. 
\eeq  
Then the second integrand (the quadratic phase factor)  in \eqref{Fresnel} is approximately 1  because  the integration  is carried out in the support of $\mu$ which is taken to be a square of size $\ell$ around the origin. On the other hand, the first integrand in \eqref{Fresnel} (the cross phase factor)  has the effect of the Fourier transform $\cF$  if the coordinates are properly rescaled to reflect the fact that the detector area is usually much larger than $\ell^2$. 

Since only the intensities of $u$ are measured, 
the phase factors  $e^{\im \kappa z}e^{{\im\kappa \over 2z} (x^2+y^2)}$ in \eqref{Fresnel} drop out and
 \beq
|u(x,y,L)|^2=|v(x,y,L)|^2\sim |\cF[\mu e^{\im  \kappa \psi} ]|^2,\label{data}
\eeq
up to  a scale factor  $(\lambda L)^{-1}$, which can be neglected in our analysis. 

 Depending on the context we shall refer to either a diffraction pattern or a projection as a  ``snapshot".

\subsection{Contributions}\label{outline1}

This paper presents a {\em discrete} framework for analyzing {\em discrete, finite} measurement data analogous to \eqref{data} and develops a uniqueness theory for 3D phase retrieval and unwrapping. 

Our main contribution in this paper is as follows:
\begin{itemize}
\item[\bf  1) (Phase) projection recovery.] 
We introduce pairwise measurement schemes with both coded and un-coded apertures and derive precise conditions
for unique determination of  (i) the phase  projection  for strong phase objects (Theorem  \ref{thm1} \& \ref{thm2}) and (ii) the projection, up to a phase factor, for  weak phase objects (Theorem \ref{thm3} \& \ref{thm4}). 

\item[\bf 2) Phase unwrapping.]  We propose a framework for analyzing 
the phase unwrapping problem when the given data are phase projections and derive generic conditions for unique phase unwrapping (Theorem \ref{tom2}).
We provide explicit  tilt schemes for phase unwrapping, including, but not limited to, random tilt schemes densely sampled from a spherical triangle with vertexes  in three orthogonal directions   (Section  \ref{sec:random-tilt})
and other deterministic tilt schemes (Section \ref{sec:deterministic-tilt})

\item[\bf 3) Projection tomography.] We show that any generic set of $n$ projections or generic set of $n+1$ coded diffraction patterns uniquely determine the object (Theorem  \ref{tom2-weak} \& \ref{thm:born}).

\end{itemize}
Our numerical simulation shows that randomly initialized alternating projection algorithm with random tilt schemes 
can robustly reconstruct 3D objects at high noise levels   (Section \ref{sec:num}).

\subsection{Outline}\label{outline2}
{The rest of the paper can be outlined  as follows.

In Section \ref{sec:discrete}, we set up a discrete framework   for tomography which is needed for the uniqueness question with finite, discrete measurements.

In Section \ref{sec4},  we prove  that with pairwise measurement schemes
  \beq
 e^{\im \kappa  g_\bt}&=&e^{\im \kappa  f_\bt}\quad  \hbox{in the case of  strong phase objects} \label{1.18'}
 \eeq
 and, for a constant $\theta_0\in \IR$,
 \beq\label{PT}
 g_\bt &=&e^{\im \theta_0} f_\bt \quad  \hbox{ in the case of  weak phase objects} 
 \eeq
 if $f, g\in O_n$  produce the same diffraction patterns   for $\bt\in \cT$, thus reducing 3D phase retrieval to the problem of (phase) projection tomography.  In Section \ref{sec:schemes},
we introduce pairwise measurement schemes for single-particle imaging where each particle is destroyed after one illumination.
 
In  Section \ref{sec:unwrap}  we find conditions on $\cT$  that ensure $g=f$  if \eqref{1.18'} holds. This is a uniqueness theory for phase unwrapping
and in Section \ref{sec:tilt} we propose various concrete measurement schemes that conveniently realize the uniqueness conditions.

In Section \ref{sec:weak}, we find  conditions on $\cT$  that ensure $g=e^{\im\theta_0} f$ for some constant $\theta_0\in \IR$  independent of $\bt$ if  \eqref{PT} holds. This is a unique theory for projection tomography. 

In Section \ref{sec:num}, we show numerically that with the proposed random tilt measurement scheme  alternating projection algorithm can robustly reconstruct a 3D weak-phase object from noisy data of a similar count to that required by the uniqueness theorem. 

We discuss  the design of 3D tomographic phase unwrapping algorithms  and  applications to single-particle imaging with unknown orientations in Section \ref{sec:final}.
}

\section{Discrete set-up}\label{sec:discrete}
Imagine an object defined on a cube of size $\ell$ in $\IR^3$. If we want to discretize the object, what would be a proper grid spacing? Obviously, the finer the grid  the higher the fidelity of the discretization. The grid system, however,  would be unnecessarily large if the grid spacing is  much smaller  than the resolution length which is the smallest feature size resolvable by the imaging system. 

In a diffraction-limited imaging system such as X-ray crystallography  the resolution length is roughly $\lambda/2$. In a radiation-dose-limited system such as electron diffraction, the resolution length can be considerably larger than $\lambda/2$.

Now suppose we choose $\lambda/2$ to be the grid spacing (the voxel size). 
For this grid system to represent accurately the object continuum, it is necessary that the grid spacing is equal to or smaller than the maximal distance $d$ over which the fractional variation of the object is negligible. On the other hand,
the underlying assumption for the paraxial wave equation \eqref{para} is exactly $\lambda \le d$ \cite{strong-phase}.
Hence  the fractional variation of the object  within a voxel  as well as  between adjacent grid points is negligible, two desirable properties of a grid system.
In other words, the grid system with spacing $\lambda/2$ gives an accurate representation of the object continuum under the assumption of the paraxial wave equation without resulting in  unnecessary complexity.  

We will, however,  let the discrete object to take independent, arbitrary value in each voxel,
except for Section \ref{sec:unwrap} where  we assume  the so called Itoh condition that the difference in the object between two adjacent grid points is less than $\pi/\kappa$ in order to obtain uniqueness for phase unwrapping.

Let $\lambda/2$  be the unit of length and
the grids  (the voxels) be labelled by  integer triplets  $(i,j,k)$. 
 In  this length unit, the wavenumber $\kappa$ has the value $\pi$ (i.e.  $\pi/\kappa=1$). The number $n$  of grid points in each dimension is about $2\ell/\lambda$ which may be  large for a strong phase object.

Let $\lb k,l\rb$ denote the integers
between and including the integers $k$ and $l$. 
Let  $O_n$ denote the class of discrete complex-valued objects
\beq
\label{1.1}
O_n:=\{f: f(i,j,k)\in \IC,  i,j,k\in \IZ_n\}
 \eeq
 where
\beq
\label{1.2}
\IZ_n&=&\left\{\begin{array}{lll}
\lb-n/2, n/2-1\rb && \mbox{if $n$ is an even integer;}\\
 \lb-(n-1)/2, (n-1)/2\rb && \mbox{if $n$ is an odd integer.}
 \end{array}\rt.
\eeq
To fix the idea, we consider the case of odd $n$ in the paper.

Following  the framework in \cite{discrete-X} we discretize the projection geometry given in Section \ref{sec:model}. 


We define three families of line segments, the $x$-lines, $y$-lines, and $z$-lines. 
The $x$-lines, denoted by $\ell_{(1,\alpha,\beta)}(c_1,c_2)$ with $ |\alpha|, |\beta|<1$,  are  defined by
\beq
\label{1.3'}
\ell_{(1,\alpha,\beta)}(c_1,c_2): \lt[\begin{matrix}
y\\
z\end{matrix}\rt]=\lt[\begin{matrix}\alpha x+c_1\\  \beta x+c_2
\end{matrix}\rt] && c_1, c_2\in \IZ_{2n-1}, \quad x\in \IZ_n
\eeq
 To avoid wraparound of $x$-lines with , we can zero-pad $f$ in a larger lattice $\IZ^3_p$ with $p\ge 2n-1.$ This is particularly important when it comes to define the ray transform by a line sum (cf. \eqref{2.8}-\eqref{2.10}) since wrap-around is unphysical.

Similarly, a $y$-line and a $z$-line are defined as
\beq
\ell_{(\alpha,1,\beta)}(c_1,c_2):  \lt[\begin{matrix}
x\\
z\end{matrix}\rt]=\lt[\begin{matrix}\alpha y+c_1\\  \beta y+c_2
\end{matrix}\rt]
 && c_1, c_2\in \IZ_{2n-1}, \quad y\in \IZ_n,\\
\ell_{(\alpha,\beta,1)}(c_1,c_2): 
 \lt[\begin{matrix}
x\\
y\end{matrix}\rt]=\lt[\begin{matrix}\alpha z+c_1\\  \beta z+c_2
\end{matrix}\rt] && c_1, c_2\in \IZ_{2n-1}, \quad z\in \IZ_n, 
\eeq
with $ |\alpha|, |\beta|<1$.

Let $\widetilde f$ be the continuous interpolation of $f$
given by
\beq\label{1.5}
  \widetilde f(x,y,z)&=&\sum_{i\in \IZ_n} \sum_{j\in \IZ_n}\sum_{k\in \IZ_n} f(i,j,k) D_p(x-i)D_p(y-j)D_p(z-k),\quad x,y,z,\in \IR,
\eeq
where 
 $D_p$ is the $p$-periodic Dirichlet kernel given by
\beq
D_p(t)={1\over p} \sum_{l\in \IZ_{p}} e^{\im 2\pi l t/p}
&=&\lt\{\begin{matrix}1,& t=mp,\quad m\in \IZ\\
{\sin{(\pi t)}\over p\sin{(\pi t/p)}},& \mbox{else}.
\end{matrix}\rt.\label{Dir}
\eeq
In particular, $[D_p(i-j)]_{i,j\in \IZ_p}$ is the $p\times p$ identity matrix. 
Because $D_p$ is a continuous $p$-periodic function, so is $\widetilde f$. 
However, we will only use the restriction of $\widetilde f$ to one period cell  $[-(p-1)/2, (p-1)/2]^3$ to define the discrete projections and  avoid the wraparound effect. 

We define the discrete projections as the following line sums
\beq
 \label{2.8} 
 f_{(1,\alpha,\beta)}(c_1,c_2)&=& \sum _{i\in \IZ_n} \widetilde f(i,\alpha i+c_1,\beta i+c_2),\\
f_{(\alpha,1,\beta)}(c_1,c_2)&=& \sum _{j\in\IZ_n} \widetilde f (\alpha j+c_1,j, \beta j+c_2)\label{2.9}\\
f_{(\alpha,\beta,1)}(c_1,c_2)&=& \sum _{k\in \IZ_n} \widetilde f(\alpha k+c_1,\beta k+c_2, k)\label{2.10}
\eeq
with $ c_1, c_2\in \IZ_{2n-1}$.

The 3D discrete Fourier transform $\widehat f$ of the object $f\in O_n$,  is given by
\beq
\widehat f(\xi,\eta,\zeta)&=&\sum_{i,j,k\in \IZ_n} f(i,j,k)e^{-\im 2\pi(\xi i+\eta j+ \zeta k)/p}= \sum_{i,j,k\in \IZ_p} f(i,j,k)e^{-\im 2\pi(\xi i+\eta j+ \zeta k)/p}\label{1.10'}
\eeq
where
the range of the Fourier variables $\xi,\eta,\zeta$ can be extended from the discrete interval $\IZ_p$ to the continuum  $[-(p-1)/2, (p-1)/2]$. Note that by definition, $\widehat f$ is a $p$-periodic band-limited function. 
The associated 1D and 2D (partial) Fourier transforms are similarly defined $p$-periodic band-limited functions.

\subsection{Fourier slices and common lines}

The Fourier slice theorem concerns the 2D discrete Fourier transform $\widehat f_{(1,\alpha,\beta)}$ defined as
\beq
\widehat f_{(1,\alpha,\beta)}(\eta,\zeta)&=& \sum_{j,k\in \IZ_{2n-1}} f_{(1,\alpha,\beta)}(j,k)e^{-\im 2\pi(\eta j+ \zeta k)/p},
\eeq
 and the 3D discrete Fourier transform given in \eqref{1.10'}. 

It is straightforward, albeit somewhat tedious, to derive the discrete Fourier slice theorem which plays an important role in our analysis.

\begin{prop}\label{thm:slice}\cite{discrete-X}
(Discrete Fourier slice theorem) For a given family of $x$-lines $\ell_{(1,\alpha,\beta)}$ with fixed slopes $(\alpha,\beta)$ and variable intercepts $(c_1,c_2)$. Then the 2D discrete Fourier transform  $\widehat f_{(1,\alpha,\beta)}$ of the $x$-projection $f_{(1,\alpha,\beta)},$ given in \eqref{2.8},   and the 3D discrete Fourier transform $\widehat f$ of the object $f$ satisfy the equation
\beq\label{proj-x}
\widehat f_{(1,\alpha,\beta)}(\eta,\zeta)&=& \widehat f(-\alpha \eta-\beta \zeta, \eta,\zeta),\quad \eta,\zeta\in \IZ.
\eeq
Likewise, we have
\beq
\widehat f_{(\alpha,1,\beta)}(\xi,\zeta)&=& \widehat f(\xi, -\alpha \xi-\beta \zeta, \zeta),\quad  \xi,\zeta\in \IZ;\\
\widehat f_{(\alpha,\beta,1)}(\xi,\eta)&=& \widehat f(\xi,\eta,  -\alpha \xi-\beta \eta).\label{proj-z},\quad \xi, \eta\in \IZ. 
\eeq
\end{prop}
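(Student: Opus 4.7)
The plan is to prove the first identity \eqref{proj-x} directly from the definitions of the 2D DFT of the projection, the line sum \eqref{2.8}, and the Dirichlet-kernel interpolation \eqref{1.5}; the other two identities follow by permuting the roles of the coordinates. Throughout, I would silently zero-pad $f$ from $\IZ_n^3$ into $\IZ_p^3$ so that the $c_1,c_2$ sums can be extended to run over $\IZ_p$ (the extra terms vanish because the line sums are defined on $\IZ_{2n-1}\subset \IZ_p$), which both aligns every index set and prevents wraparound in the ray transform.

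First I would substitute \eqref{2.8} into the definition of $\widehat f_{(1,\alpha,\beta)}$ and then plug in the Dirichlet interpolation \eqref{1.5}, so that after interchanging the finite sums one has
\begin{equation*}
\widehat f_{(1,\alpha,\beta)}(\eta,\zeta) = \sum_{i\in\IZ_n}\sum_{i',j',k'\in\IZ_n} f(i',j',k')\, D_p(i-i') \cdot A_\eta(i,j')\cdot B_\zeta(i,k'),
\end{equation*}
where $A_\eta(i,j'):=\sum_{c_1\in\IZ_p} D_p(\alpha i+c_1-j')e^{-\im 2\pi \eta c_1/p}$ and $B_\zeta(i,k')$ is the analogous sum in $c_2$. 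The heart of the argument is to evaluate these two inner sums cleanly.

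To handle $A_\eta$, I would expand $D_p$ using its Fourier series \eqref{Dir}, swap the order of summation, and apply the orthogonality relation $\sum_{c_1\in\IZ_p} e^{\im 2\pi(l-\eta)c_1/p}=p\,\delta_{l,\eta}$; this collapses $A_\eta(i,j')$ to $e^{\im 2\pi\eta(\alpha i-j')/p}$, and likewise $B_\zeta(i,k')=e^{\im 2\pi\zeta(\beta i-k')/p}$. Since $D_p(i-i')=\delta_{i,i'}$ on $\IZ_p$, the $i'$-sum collapses to $i'=i$. The remaining expression becomes
\begin{equation*}
\sum_{i,j',k'\in\IZ_n} f(i,j',k')\, e^{-\im 2\pi\bigl[(-\alpha\eta-\beta\zeta)i+\eta j'+\zeta k'\bigr]/p},
\end{equation*}
which is precisely $\widehat f(-\alpha\eta-\beta\zeta,\eta,\zeta)$ by \eqref{1.10'}. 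The $y$-line and $z$-line cases follow by the obvious permutation of the roles of $(x,y,z)$ and of $(\alpha,\beta)$.

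The main technical nuisance, and the step that deserves care, is the wraparound issue. The slopes satisfy $|\alpha|,|\beta|<1$, so $\alpha i+c_1$ and $\beta i+c_2$ are in general non-integers, and the sums over $c_1,c_2$ are genuinely over shifted non-lattice arguments of $D_p$; this is exactly why the Fourier-series representation of $D_p$ must be used rather than any integer evaluation. One must also check that extending $c_1,c_2$ from $\IZ_{2n-1}$ to $\IZ_p$ is consistent, which follows from the support of $f_{(1,\alpha,\beta)}$ and the zero-padding convention. Once these bookkeeping details are in place, the orthogonality argument yields the slice formula immediately.
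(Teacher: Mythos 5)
Your derivation is correct; note that the paper itself gives no proof of this proposition (it is quoted from \cite{discrete-X}), so there is nothing internal to compare against, but your computation --- substitute \eqref{2.8} and \eqref{1.5} into the 2D DFT, collapse the $i'$-sum via $D_p(i-i')=\delta_{i,i'}$, and evaluate the $c_1,c_2$ sums by expanding $D_p$ in its Fourier series and using orthogonality of characters on $\IZ_p$ --- is exactly the intended ``straightforward, albeit somewhat tedious'' argument. Two small bookkeeping points: since the paper fixes $p=2n-1$, the index sets $\IZ_{2n-1}$ and $\IZ_p$ already coincide, so no extension of the $c_1,c_2$ range is needed (and your stated justification that the ``extra terms vanish'' would not literally hold for $p>2n-1$, as the Dirichlet kernel is not compactly supported); and the orthogonality step pins $l=\eta$ only for $\eta\in\IZ_p$, so the identity as you derive it holds on one period $\eta,\zeta\in\IZ_p$ and extends to all integers only via the $p$-periodicity conventions of the paper.
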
 

\begin{rmk}\label{rmk:common}
For the general domain $\IR^2$, it is not hard to derive 
the following results 
\beq\label{proj-x'}
\widehat f_{(1,\alpha,\beta)}(\eta,\zeta)&=&\sum_{j,k\in \IZ_p} \widehat f(-\alpha j-\beta k, j,k)D_p(\eta-j)D_p(\zeta-k),\quad\eta,\zeta\in \IR;\\
\widehat f_{(\alpha,1,\beta)}(\xi,\zeta)&=&\sum_{j,k\in \IZ_p} \widehat f(\xi, -\alpha j-\beta k, \zeta)D_p(\xi-j)D_p(\zeta-k),\quad\xi,\zeta\in \IR\label{proj-y'};\\
\widehat f_{(\alpha,\beta,1)}(\xi,\eta)&=& \sum_{j,k\in \IZ_p} \widehat f(j,k,  -\alpha j-\beta k)D_p(\xi-j)D_p(\eta-k), \quad\eta,\xi\in \IR,\label{proj-z'}
\eeq
in  the form of interpolation by the grids in the respective Fourier slices. 
From  \eqref{Dir} it follows that the right hand side of \eqref{proj-x'}-\eqref{proj-z'} are Laurent polynomials of 2 trigonometric variables (e.g. $e^{\im 2\pi \eta/p}, e^{\im 2\pi \zeta/p}$ for \eqref{proj-x'}),
 and that  
\eqref{proj-x'}-\eqref{proj-z'} reduce to \eqref{proj-x}-\eqref{proj-z} when the trigonometric variables are integer powers of $e^{\im 2\pi/p}$.

Recalling the view of discretization espoused at the beginning of this section and returning  to the original scale in the continuous setting, we note that
\beq
\label{large}
\lim_{p\to\infty}pD_p(pt)=\delta(t),\quad t\in \IR,
\eeq
the Dirac delta function. By \eqref{large}  and rescaling the standard, continuous version of Fourier slice theorem is recovered from \eqref{proj-x'}-\eqref{proj-z'}. 
\end{rmk}

For ease of notation, we denote  by $\bt$ the direction of projection, $(1,\alpha,\beta), (\alpha,1,\beta)$ or $(\alpha,\beta,1)$ in the reference frame attached to the object. Let $P_\bt$ denote the origin-containing (continuous) plane orthogonal to $\bt$ in the Fourier space. The standard common line is defined by $L_{\bt,\bt'}:= P_{\bt}\cap P_{\bt'}$ for $\bt, \bt'$ not parallel to each other. 

By a slight abuse of notation, the common-line property implied by  Proposition \ref{thm:slice} can be succinctly  stated as 
\beq\label{common}
\widehat f_\bt(\bk)&=&\widehat f_{\bt'} (\bk'), \quad \bk \in P_{\bt}\cap\IZ^2,\quad \bk'\in P_{\bt'}\cap \IZ^2,
\eeq
where $\bk,\bk'$ are the corresponding coordinates and may differ from each other depending on the parametrization of $P_\bt$ and $P_{\bt'}$. 

For example, let $\bt\sim (\alpha,\beta,1)$ and $\bt'\sim (\alpha',\beta',1)$. The Fourier slices  are given by
\beq
\alpha\xi+\beta\eta+\zeta=0,\quad \alpha'\xi+\beta'\eta+\zeta=0,\label{same}
\eeq
with the correspondence  $\bk=\bk'=(\xi,\eta)\in \IZ^2$. 

For a different configuration, let $\bt\sim (\alpha,\beta,1)$ and $\bt'\sim (1,\beta',\gamma')$. The Fourier planes are given  by
\beq\label{diff}
\alpha\xi+\beta\eta+\zeta=0,\quad \xi+\beta'\eta+\gamma'\zeta=0,
\eeq
with  the correspondence  $\bk=(\xi,\eta)\in \IZ^2$ and $\bk'=(\eta,\zeta)\in \IZ^2$.

For non-integral points, however, the common lines are perturbed  by interpolation \eqref{proj-x'}-\eqref{proj-z'}. For  \eqref{same} and \eqref{diff}, the ``common lines" can be generalized  respectively as the traces of
the two-dimensional surfaces defined by 
\beq
\label{common1}
L_{\bt,\bt'}(f)&:=&\Big\{\widehat f_{(\alpha,\beta,1)}(\xi,\eta)=\widehat f_{(\alpha',\beta',1)}(\xi',\eta')\Big\}\subseteq \IR^2\times \IR^2\\
\label{common2}
L_{\bt,\bt'}(f)&:=& \Big\{\widehat f_{(\alpha,\beta,1)}(\xi,\eta)=\widehat f_{(1,\beta',\gamma')}(\eta',\zeta')\Big\} \subseteq \IR^2\times \IR^2.
\eeq
By the common line property \eqref{common},  $L_{\bt,\bt'}$ in \eqref{common1} contains  the set defined by $(\xi',\eta')=(\xi,\eta)\in \IZ^2$ while $L_{\bt,\bt'}$ in \eqref{common2} contains  the set defined by
\beq
\eta'=\eta,\quad \zeta'=-\alpha\xi-\beta\eta\in \IZ, \quad (1-\alpha\gamma')\xi=(\beta\gamma'-\beta')\eta,\quad (\xi,\eta)\in \IZ^2.
\eeq
By \eqref{common}, $0\in L_{\bt,\bt'}(f)$ for any $\bt,\bt'$.  In view of \eqref{large}, the trace of $L_{\bt,\bt'}$ on either Fourier slice is
near $P_{\bt}\cap P_{\bt'}$ for sufficiently large $n$. 

We shall refer to $L_{\bt,\bt'}(f)$ as the {\em common set}  for the Fourier slices of $f$  orthogonal to $\bt,\bt'$. The notion of common sets will be used to formulate a technical condition for Theorem \ref{thm2}.

\subsection{Diffraction pattern}\label{sec:coded}

Let $\cT$ denote the set of directions $\bt$ employed in the 3D diffraction measurement, which can be coded (as in Figure \ref{fig1}) or uncoded (as in Figure \ref{fig0}). To fix the idea, {  let $p=2n-1$} in \eqref{Dir}. 

Let the Fourier transform $\cF$ of the projection $e^{\im\kappa f_\bt(\bn)}$ be written as 
\beq
\label{over-sampling}
F_\bt(e^{-\im 2\pi\bw})=\sum_{\bn\in \IZ_p^2} e^{-\im 2\pi \bn\cdot\bw} e^{\im\kappa f_\bt(\bn)},\quad \bw\in \Big[-\half,\half \Big]^2.
\eeq
 {In the absence of a random mask ($\mu\equiv 1$)}, the intensities of the Fourier transform  can be written as  \beq
|F_\bt(e^{-\im 2\pi\bw})|^2= \sum_{\bn\in   \IZ_{2p-1}^2}\lt\{\sum_{\bn'\in \IZ_p^2} e^{\im\kappa f_\bt(\bn'+\bn)}e^{-\im\kappa \overline{f_\bt(\bn')}}\rt\}
   e^{-\im 2\pi \bn\cdot \bom}, \quad \bom\in  \Big[-\half,\half \Big]^2,
   \label{auto}
   \eeq
which is called  the uncoded diffraction pattern in the direction $\bt$. 
 Here and below the over-line notation means
complex conjugacy. The expression in the brackets  in \eqref{auto} is the autocorrelation function of $e^{\im\kappa f_\bt}$.

The diffraction patterns are then uniquely determined by  sampling on the grid
\beq
\label{nyquist1}
\bw\in {1\over 2p-1} \IZ_{2p-1}^2
\eeq
or by Kadec's $1/4$-theorem  on any following  irregular grid \cite{Young}
\beq
\label{nyquist2}
\{\bw_{jk},\,\, j,k\in \IZ_{2p-1}: |(2p-1)\bw_{jk}-(j,k)|<1/4\}.  
\eeq
With the regular \eqref{nyquist1} or irregular \eqref{nyquist2} sampling, the diffraction pattern contains the same information as does  the autocorrelation function of $f_\bt$.

  \section{Pairwise reduction to (phase) projection}\label{sec4}
  
  We assume the following on the coded aperture
 
 \begin{quote}
{\bf Mask Assumption:} The mask function is given by $\mu(\bn)=\exp[\im \phi(\bn)]$ with independent, continuous random variables $\phi(\bn)\in \IR$. 
 
 \end{quote}

Let $f^*_\bt$ denote the projection $f_\bt$ translated by some $\bl_\bt \in \IZ^2$, i.e.
\beq
\label{loc}
f^*_\bt (\bn):= f_\bt (\bn+\bl_\bt),\quad   \quad\hbox{subject to}\,\,\supp ( f^*_\bt) \subseteq \IZ_n^2.
\eeq

We assume that each snapshot  is taken for $ f^*_\bt$ (not $f_\bt$) with a shift $\bl_\bt$ due to variability in sample delivery. 

In Section \ref{sec:pair} and \ref{sec:born}, we show for two different imaging geometries how a pair of diffraction patterns can uniquely determine
the respective (phase) projections. 
  
  \subsection{Strong phase objects} \label{sec:pair}

The following theorem says that  the two diffraction patterns, one coded and one uncoded,  uniquely determine the underlying phase projection.

\begin{thm}\label{thm1} Let $f, g\in O_n$ and assume the Mask Assumption.  
  Suppose that for any  $ \bt$ 
\beq
\label{43}
|\cF(e^{\im\kappa  g^*_{\bt}})|^2&=&|\cF(e^{\im\kappa  f^*_{\bt}})|^2\\
|\cF(\mu\odot e^{\im\kappa  g^*_\bt})|^2&=&|\cF(\mu\odot e^{\im\kappa  f^*_\bt})|^2\label{44}
\eeq
Then $
e^{\im\kappa  g_\bt}=e^{\im\kappa  f_\bt}$ almost surely. 
 \end{thm}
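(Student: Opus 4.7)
The plan is to treat \eqref{43} and \eqref{44} as a pair of 2D Fourier phase-retrieval problems for the compactly supported signals $\tilde h := e^{\im\kappa f^*_\bt}$ and $\tilde h' := e^{\im\kappa g^*_\bt}$, and to show that the combination of the two intensity measurements together with the random phase mask leaves only the undetectable global-phase ambiguity that the claim implicitly absorbs.

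First I would invoke the classical 2D Hayes theorem for the coded measurement \eqref{44}. Because the mask entries $\mu(\bn)=e^{\im\phi(\bn)}$ are independent and continuously distributed, the $z$-transform of $\mu\odot\tilde h$ (and similarly of $\mu\odot\tilde h'$) is a bivariate Laurent polynomial with generic coefficients, hence almost surely irreducible. The Hayes uniqueness theorem then forces $\mu\odot\tilde h'$ to equal $\mu\odot\tilde h$ up to a global phase $e^{\im\theta'}$, an integer translation $\bs'\in\IZ^2$, and possibly a conjugate-reversal (twin), producing an identity of the form
\[
\mu(\bn)\,\tilde h'(\bn) \;=\; e^{\im\theta'}\,\mu(\bn-\bs')\,\tilde h(\bn-\bs')
\]
or its conjugate-reversed analogue. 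Applying the same theorem to the unmasked identity \eqref{43} determines $\tilde h'$ in terms of $\tilde h$ up to a (possibly different) triple of trivial parameters $(\theta,\bs,\text{twin flag})$, and substituting this representation of $\tilde h'$ into the coded equation produces an algebraic constraint
\[
\frac{\mu(\bn)}{\mu(\bn-\bs')} \;=\; c(\bn),
\]
with $c(\bn)$ a deterministic function of $\tilde h$ (respectively a conjugate-linear analogue in the twin case). Independence and continuity of the $\phi(\bn)$ force each such relation to hold on a set of probability zero, and the union over the finitely many candidate shifts, reflection centres and the binary twin flag remains null. Hence almost surely $\bs'=0$ with no twin, so $\mu\odot\tilde h' = e^{\im\theta}\mu\odot\tilde h$; dividing entrywise by the nowhere-zero $\mu$ gives $\tilde h' = e^{\im\theta}\tilde h$, equivalently $e^{\im\kappa g_\bt} = e^{\im\kappa f_\bt}$ up to the unavoidable global phase.

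The main obstacle I expect is the twin alternative, where the coded identity becomes conjugate-linear rather than $\IC$-linear in $\mu$; the usual ``generic polynomial is nonzero'' reasoning must therefore be re-expressed in the real coordinates $(\cos\phi(\bn),\sin\phi(\bn))$ on the finite-dimensional torus, and one has to verify that the corresponding real-analytic subvariety has strictly lower dimension and hence vanishing product measure. The finiteness of the support of $\tilde h$, which bounds the set of candidate shifts $\bs'$ and reflection centres $\bn_0$, is essential so that this exceptional event remains null after taking the union over all such candidates.
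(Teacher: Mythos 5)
There is a genuine gap, and it occurs at the pivotal step of your argument. You apply a Hayes-type irreducibility/uniqueness theorem to the \emph{uncoded} identity \eqref{43}, claiming it determines $\tilde h'=e^{\im\kappa g^*_\bt}$ from $\tilde h=e^{\im\kappa f^*_\bt}$ up to a trivial triple (global phase, shift, twin). But $\tilde h$ is a fixed, deterministic array built from an arbitrary $f\in O_n$; there is no randomness in it, so its $z$-transform has no reason to be irreducible, and for reducible $z$-transforms the uncoded intensity data admit nontrivial ambiguities (conjugate-flipping subsets of factors) beyond shift/twin/global phase. Indeed the paper's remark after Corollary \ref{cor2} emphasizes that with a uniform mask the uncoded data do not pin down the phase projection. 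The paper avoids this entirely: it applies the uniqueness lemma (Lemma \ref{prop1}, where the randomness of $\mu$ does guarantee the trivial-ambiguity dichotomy) only to the \emph{coded} pattern \eqref{44}, and then uses \eqref{43} merely as an equality of autocorrelation functions into which the coded-pattern characterization is substituted; the resulting identities in the independent continuous phases $\phi(\bn)$ are shown to fail almost surely unless the shift $\mbm_\bt$ vanishes and the twin alternative is excluded. Your combination step goes in the wrong direction (substituting the uncoded-derived representation into the coded equation), and it cannot be repaired without first justifying the uncoded Hayes step, which is false in general.

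The second gap is in the conclusion. You end with $\tilde h'=e^{\im\theta}\tilde h$ ``up to the unavoidable global phase,'' but the theorem asserts exact equality $e^{\im\kappa g_\bt}=e^{\im\kappa f_\bt}$ with no phase factor, and the phase is in fact avoidable here: because the objects are zero-padded, $f_\bt$ and $g_\bt$ vanish on $\IZ_p^2\setminus(\IZ_{\ell_\alpha}\times\IZ_{\ell_\beta})$, so the phase projections equal $1$ there; evaluating $e^{\im\kappa g_\bt(\bn)}=e^{\im\theta_\bt}e^{\im\kappa f_\bt(\bn)}$ at such $\bn$ forces $\theta_\bt\in 2\pi\IZ$. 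This support argument is the paper's final step and is missing from your proposal. (A smaller issue: your treatment of the twin case as a ``lower-dimensional real-analytic subvariety'' is only a sketch; the paper handles it by an explicit pairing of the phase terms $\phi(\bl)$, $\phi(-\bl+\mbm_\bt)$ and showing the paired factors differ from their conjugates in a manner independent of $f_\bt$.)
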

 The proof of Theorem \ref{thm1} is given in Appendix \ref{appA}. 
 
The next theorem says that  the two coded diffraction patterns in two different directions uniquely determine the two corresponding phase projections.

 \begin{thm}\label{thm2}  Let $f, g\in O_n$  such that for $\bt,\bt'$ not parallel to each other, the intersection 
$C_{\bt,\bt'}:=L_{\bt,\bt'}(f)\cap L_{\bt,\bt'}(g)$ contains some  $(\bk,\bk')\neq 0$ such that 
the slope of either $\bk$ or $\bk'$ is not a fraction over $\IZ_p$. Let the Mask Assumption hold true. 
 Suppose
 that 
\beq\label{43-2}
|\cF(\mu \odot e^{\im\kappa  g^*_{\bt'}})|^2&=&|\cF(\mu \odot e^{\im\kappa  f^*_{\bt'}})|^2\\
|\cF(\mu\odot e^{\im\kappa  g^*_\bt})|^2&=&|\cF(\mu\odot e^{\im\kappa  f^*_\bt})|^2.\label{44-2}
\eeq
Then $
e^{\im\kappa  g_\bt}=e^{\im\kappa  f_\bt}$ and $
e^{\im\kappa  g_{\bt'}}=e^{\im\kappa  f_{\bt'}}$ almost surely. 
 \end{thm}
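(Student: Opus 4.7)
The plan is to follow the template of Theorem~\ref{thm1}: extract from each coded diffraction measurement a pointwise equality of phase projections up to a residual ambiguity, then use the common-line hypothesis to eliminate that ambiguity. Because only one coded pattern is available per direction (rather than the uncoded-plus-coded pair used in Theorem~\ref{thm1}), the residual per direction will include both a global phase and a translation, and the common-line hypothesis is what ties the two directions together tightly enough to remove them.

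First I would show that under the Mask Assumption, equation~\eqref{44-2} almost surely implies $e^{\im\kappa g_\bt^*}=e^{\im\theta_\bt}\,e^{\im\kappa f_\bt^*}$ for some $\theta_\bt\in\IR$, and analogously $e^{\im\kappa g_{\bt'}^*}=e^{\im\theta_{\bt'}}\,e^{\im\kappa f_{\bt'}^*}$. This single-coded-pattern uniqueness follows by adapting the coded-pattern portion of the proof of Theorem~\ref{thm1}: the continuous randomness of $\mu$ breaks translation and conjugate-reflection ambiguities generically, leaving only a global phase. Dividing out the nowhere-zero mask and undoing the two shifts (with difference $\bs_\bt$) rewrites this as
\[
e^{\im\kappa g_\bt}(\bn) = e^{\im\theta_\bt}\,e^{\im\kappa f_\bt}(\bn+\bs_\bt),
\]
and similarly in direction $\bt'$. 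Because $f,g\in O_n$ are complex-valued, matching moduli pointwise forces $\Im g_\bt(\bn)=\Im f_\bt(\bn+\bs_\bt)$ exactly, while the real parts agree modulo $2\pi/\kappa$; taking 2D discrete Fourier transforms yields
\[
\widehat g_\bt(\bw)=e^{\im 2\pi\bs_\bt\cdot\bw/p}\,\widehat f_\bt(\bw)+R_\bt(\bw),
\]
where $R_\bt(\bw)$ collects a DC constant together with the integer-valued wrapping artifacts.

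Next I would apply the common-line hypothesis. It supplies a nonzero pair $(\bk,\bk')\in L_{\bt,\bt'}(f)\cap L_{\bt,\bt'}(g)$ with the slope of at least one of $\bk,\bk'$ not a fraction over $\IZ_p$, and at this pair both $\widehat f_\bt(\bk)=\widehat f_{\bt'}(\bk')$ and $\widehat g_\bt(\bk)=\widehat g_{\bt'}(\bk')$ hold by definition. Substituting the displayed Fourier relation and observing that the wrapping artifact $R_\bt(\bk)$ must vanish at this non-fractional frequency (since integer-valued wrapping functions cannot produce a nonzero Fourier value there), the common-line equalities reduce to $e^{\im 2\pi\bs_\bt\cdot\bk/p}\widehat f_\bt(\bk)=e^{\im 2\pi\bs_{\bt'}\cdot\bk'/p}\widehat f_{\bt'}(\bk')$. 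Dividing out the generically nonzero value $\widehat f_\bt(\bk)$ gives $\bs_\bt\cdot\bk\equiv\bs_{\bt'}\cdot\bk'\pmod p$, and a Diophantine argument using the irrational-slope hypothesis forces $\bs_\bt=\bs_{\bt'}=0$. With shifts eliminated, equating the DC components and invoking the 3D consistency (a single object $g$ cannot carry an arbitrary direction-dependent global phase on all of its projections simultaneously) pins down $\theta_\bt\equiv\theta_{\bt'}\equiv 0\pmod{2\pi}$, yielding the desired conclusion.

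The principal obstacle will be the Fourier-level phase-wrapping bookkeeping of $R_\bt$: these integer artifacts can in principle contribute at every Fourier mode, and the non-fractional-slope condition on $(\bk,\bk')$ is tailored precisely to select a common-line frequency at which they are forced to cancel, reducing the argument to a single clean exponential identity. Making that cancellation rigorous, together with verifying that the generic nonvanishing of $\widehat f_\bt(\bk)$ at the chosen $\bk$ is compatible with the almost-sure claim of the theorem, will be the most delicate technical step.
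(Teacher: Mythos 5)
Your opening step is where the argument breaks down. You assert that the single coded pattern \eqref{44-2} alone, under the Mask Assumption, almost surely forces $e^{\im\kappa g^*_\bt}=e^{\im\theta_\bt}e^{\im\kappa f^*_\bt}$, by ``adapting the coded-pattern portion of the proof of Theorem \ref{thm1}.'' But one coded diffraction pattern does not do this: Lemma \ref{prop1} (the paper's basic tool) says that a single coded pattern determines the masked phase projection only up to a global phase, a translation $\mbm_\bt$, \emph{and} a conjugate reflection; and in the proof of Theorem \ref{thm1} it is the \emph{uncoded} pattern (the autocorrelation identity \eqref{43}) that eliminates the translation and the reflection. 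In Theorem \ref{thm2} no uncoded pattern is available, so per direction all three ambiguities survive, and the entire content of the theorem is that the common-set coupling between the two directions removes them. The paper does this by writing each alternative of \eqref{1.8} in Fourier form (\eqref{2.20'}, \eqref{2.20}), restricting to $(\bk,\bk')\in C_{\bt,\bt'}$, and observing that the resulting identities (\eqref{a.45}, \eqref{2.28}) equate a nontrivial linear combination of the continuous random variables $\widehat\phi(\bk),\widehat\phi(\pm\bk')$ with a quantity taking values in a discrete set for fixed $f$ --- impossible almost surely unless the coefficients vanish, which rules out the reflection alternatives and yields $\mbm_\bt\cdot\bk\equiv 0$ and $\mbm_{\bt'}\cdot\bk'\equiv 0 \pmod p$.

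Two further steps would also fail even granting the first. You invoke the non-fractional-slope hypothesis to claim the wrapping residue $R_\bt(\bk)$ vanishes at $\bk$; this is unjustified, since a function taking values in integer multiples of $2\pi/\kappa$ on $\IZ_p^2$ can have nonzero Fourier coefficients at any mode. In the paper that hypothesis plays a different role: it is exactly what upgrades the congruences $\mbm_\bt\cdot\bk\equiv 0$, $\mbm_{\bt'}\cdot\bk'\equiv 0\pmod p$ to $\mbm_\bt=\mbm_{\bt'}=0$. Moreover your single relation $\bs_\bt\cdot\bk\equiv\bs_{\bt'}\cdot\bk'\pmod p$ cannot force both shifts to vanish, and dividing by $\widehat f_\bt(\bk)$ requires a nonvanishing condition on the deterministic object $f$ that the theorem does not assume. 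Finally, the removal of the residual constants $\theta_\bt,\theta_{\bt'}$ is accomplished in the paper by the zero-padding/support argument, not by an appeal to 3D consistency across directions.
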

 The proof of Theorem \ref{thm2} is given in Appendix \ref{appB}.

\begin{cor}\label{cor2}
If for each $\bt\in \cT$ there is a $\bt'\in \cT$ to satisfy Theorem \ref{thm2}, then  $
e^{\im\kappa  g_\bt}=e^{\im\kappa  f_\bt}$ for all $\bt\in \cT$. 
\end{cor}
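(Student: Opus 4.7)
The proof is a direct reduction to Theorem \ref{thm2}. The plan is: fix an arbitrary $\bt\in\cT$; by hypothesis select a partner $\bt'=\bt'(\bt)\in\cT$ so that the intersection $C_{\bt,\bt'}=L_{\bt,\bt'}(f)\cap L_{\bt,\bt'}(g)$ contains some $(\bk,\bk')\neq 0$ whose slope in at least one of the two Fourier-slice coordinates is not a fraction over $\IZ_p$. Because the coded diffraction patterns agree for every direction in $\cT$, both \eqref{43-2} and \eqref{44-2} hold for the pair $(\bt,\bt')$. An application of Theorem \ref{thm2} then yields $e^{\im\kappa g_\bt}=e^{\im\kappa f_\bt}$ (and simultaneously $e^{\im\kappa g_{\bt'}}=e^{\im\kappa f_{\bt'}}$) almost surely. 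Letting $\bt$ range over all of $\cT$ gives the stated conclusion for every direction.

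The only mildly delicate point, and the one I would flag as the main (if minor) obstacle, is the almost-sure qualifier. Each invocation of Theorem \ref{thm2} excludes a null event in the probability space of the random mask phases $\{\phi(\bn)\}$, and we are invoking the theorem once for each $\bt\in\cT$. Since $\cT$ is finite in the discrete tomographic set-up of Section \ref{sec:discrete}, the union over $\bt\in\cT$ of these exceptional null events is again null, so a standard union bound guarantees that $e^{\im\kappa g_\bt}=e^{\im\kappa f_\bt}$ holds simultaneously for all $\bt\in\cT$ outside a single null set. No additional measurability argument is needed.
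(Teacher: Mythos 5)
Your proposal is correct and is exactly the argument the paper has in mind (the paper treats Corollary \ref{cor2} as immediate from Theorem \ref{thm2} and supplies no written proof): apply Theorem \ref{thm2} to each $\bt\in\cT$ with its partner $\bt'$, and discard the union of the finitely many exceptional null sets. Your extra remark on the almost-sure qualifier is the right (and only) point needing care, and it is handled correctly since $\cT$ is finite in the discrete set-up.
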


 Note that Theorem \ref{thm1}, \ref{thm2} and Corollary \ref{cor2} do not hold
 for a uniform  mask ($\mu=$ cost.) because the chiral ambiguity and the shift ambiguity are present, i.e. both 
 $g(\cdot) =f(-\cdot)$ and $g(\cdot)=f(\cdot+\bl),\bl\in \IR^3, $ satisfy all the assumptions therein but 
 $
e^{\im\kappa  g_\bt}\not \equiv e^{\im\kappa  f_\bt}$ in general.

\commentout{
Notably, the {\em relative direction} of two projections determines the {\em relative position} of the common line
to the two Fourier slices, which individually are not uniquely identifiable due to extrinsic rotations. This important observation is a key to the reduction of phase retrieval to PT in each (unknown) direction of the coded aperture. 

\begin{thm}\label{thm3}
Consider a random phase mask $\mu(\bn)=\exp[\im \phi(\bn)]$ with independent, continuous random variables $\phi(\bn)\in \IR$. Let $\cT=\{\bt_0\}$ and $\cA=\{\bt_1\}$ with $\bt_1-\bt_0$ precisely known (but $\bt_0,\bt_1$ unknown). Then for $f\in O_n$, under the condition $C_\cA$ and location uncertainty,  
3D phase retrieval and PT are equivalent with respect to $\cT$ almost surely. 
\end{thm}
The proof of Theorem \ref{thm3} is  given in Appendix \ref{appA}. 

It is straightforward to extend Theorem \ref{thm3} to the case with arbitrary number of pairwise measurements. 
\begin{cor}\label{cor3}
Let $\cT$ be nonempty and $\cA=\cT-\ba$ where $\ba\in \IR^3$ is a known constant. Then for $f\in O_n$, under the condition $C_\cA$ and location uncertainty,  
3D phase retrieval and PT are equivalent with respect to $\cT$ almost surely. 
\end{cor}
}

Analogous results can be formulated for the hybrid approximation \eqref{hybrid} but we will omit the details here. 
Instead, we will present  the dark-field imaging under the weak-phase-object approximation next. 

\subsection{Weak phase objects}\label{sec:born}
Under the weak-phase-object assumption \eqref{Born} the exit wave is given by
\beq
\label{Fresnel2}
v_B(x,y)= 1-{\im\over 2\kappa}  \int d z'{f(x,y,z')}.
\eeq
 The coded diffraction pattern is given by
\beq
|\cF (\mu \odot v_B)|^2&=& |\cF\mu|^2+{1\over\kappa}\Im\{{ \overline{\cF\mu}\cdot \cF(\mu \int f dz')}\}+{1\over 4\kappa^2} |\cF(\mu\int fdz')|^2\label{fpt}
\eeq
where $\Im$ denotes the imaginary part. 

As
\eqref{fpt} represents the interference pattern between the reference wave $\cF(\mu)$ and the masked object wave $-\im\cF( \mu\int f dz')/(2\kappa)$, reconstruction based on the second term on the right hand side of \eqref{fpt} can be performed by conventional holographic techniques \cite{bright-field,Wolf69,Wolf70}. 

We take the diffraction pattern of the scattered waves
\beq
\label{born-pattern}
 |\cF (\mu\odot f_\bt) |^2,
 \eeq
 as measurement data, which is reminiscent of dark-field imaging in light and electron microscopies where the unscattered wave (i.e. $\cF \mu$) is removed from view \cite{Frank06,dark-field}.  Dark-field imaging mode arises naturally in
 X-ray coherent diffractive imaging due to the use of a beam stop for blocking the direct beam in order to protect the detector and enhance the measurement of weakly scattered intensities.

The next two theorems  are analogous to Theorem  \ref{thm1} and \ref{thm2}. A notable effect of the dark-field imaging is the appearance of
an undetermined phase factor absent in Theorem \ref{thm1} and \ref{thm2}. 
\begin{thm}\label{thm3}Let $f, g \in O_n$  and assume the Mask Assumption. 
 Suppose that $\supp(f_\bt)$ is not a subset of a line and  that 
\beq
\label{2.39} |\cF(  g^*_\bt)|^2&=&|\cF(  f^*_\bt)|^2\\
|\cF(\mu\odot  g^*_\bt)|^2&=&|\cF(\mu\odot  f^*_\bt)|^2\label{2.39'}
\eeq
Then almost surely $
{g_\bt}= e^{\im \theta_\bt} f_\bt$ for some constant $\theta_\bt\in \IR$. 
 \end{thm}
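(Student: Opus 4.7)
The strategy parallels the proof of Theorem~\ref{thm1}, but adapted to complex-valued projections rather than unit-modulus ones; the appearance of the undetermined phase $e^{\im\theta_\bt}$ reflects the fact that $f_\bt$ carries no pointwise modulus constraint, so a global scalar phase cannot be pinned down by diffraction intensities alone.

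First, I would use \eqref{2.39} to deduce that $f^*_\bt$ and $g^*_\bt$ share the same autocorrelation. Viewing these as 2D finitely supported sequences, their z-transforms $F^*_\bt(z_1,z_2)$ and $G^*_\bt(z_1,z_2)$ are Laurent polynomials with $F^*_\bt(z)\overline{F^*_\bt(1/\bar z)} = G^*_\bt(z)\overline{G^*_\bt(1/\bar z)}$ as polynomials. The support hypothesis --- $\supp(f_\bt)$ is not contained in a line --- guarantees that $F^*_\bt$ is genuinely two-dimensional, so the Hayes-Bruck-Sodin characterization of 2D phase retrieval applies: $G^*_\bt$ must be obtained from $F^*_\bt$ by a combination of (i) a unit-modulus scalar, (ii) a monomial shift, (iii) a global conjugate reversal $\bn\mapsto -\bn$, and (iv) swapping a subset of the irreducible factors of $F^*_\bt$ for their conjugate-reversed counterparts. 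Equivalently,
\[
g^*_\bt(\bn) \;=\; e^{\im\theta}\, h^*_\bt(\bn+\bl),
\]
where $h^*_\bt$ is either $f^*_\bt$ or one of its nontrivial Hayes-type ``flip'' companions.

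Second, I would bring in the coded-aperture equality \eqref{2.39'}, which asserts that $\mu\odot f^*_\bt$ and $\mu\odot g^*_\bt$ also share the same autocorrelation. Substituting the representation of Step~1 into this autocorrelation identity and expanding yields, for each candidate triple $(\theta,\bl,\text{flip pattern})$ other than the trivial one, a nontrivial polynomial equation in the random mask values $\{e^{\im\phi(\bn)}\}_{\bn\in\IZ_n^2}$. By the Mask Assumption, the phases $\phi(\bn)$ are independent continuous random variables, so each such nontrivial polynomial identity holds on a set of measure zero in the product measure. Taking a union over the finite collection of possible Hayes ambiguities, I conclude that almost surely only the trivial case survives: $\bl=0$, no conjugate reversal, no factor flip, leaving $g^*_\bt = e^{\im\theta}f^*_\bt$. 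Because the shifts $\bl_\bt$ built into $f^*_\bt$ and $g^*_\bt$ are referenced to the common mask in the lab frame, this pulls back to $g_\bt = e^{\im\theta_\bt} f_\bt$ with $\theta_\bt := \theta$.

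The main obstacle, as in Theorem~\ref{thm1}, lies in Step~2: verifying that \emph{every} nontrivial Hayes-type ambiguity produces a genuinely nonzero polynomial in the mask values. This requires a careful enumeration of the factorization ambiguities --- a priori combinatorially many in the reducible case --- together with an argument that each induced identity on $\{e^{\im\phi(\bn)}\}$ has at least one nonvanishing coefficient, so that the continuous joint distribution of the phases assigns it probability zero. The genericity (continuity, independence) of the random mask is essential here: a deterministic or structured mask could conceivably satisfy some of these identities, allowing spurious ambiguities beyond the global phase to persist.
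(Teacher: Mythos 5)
Your architecture inverts the paper's. The paper first applies the coded-pattern equality \eqref{2.39'} through Lemma \ref{prop2} (the characterization from \cite{unique}), which already collapses the ambiguity to just a mask-twisted shift or a mask-twisted conjugate flip, and then uses the \emph{uncoded} pattern \eqref{2.39} plus the mask randomness (the four-term grouping around a common $\phi(\bl)$ in \eqref{84} and \eqref{87}) to kill the shift and the flip. You instead start from the uncoded pattern, invoke the Hayes--Bruck factorization theory to enumerate the candidate $g^*_\bt$ (scalar, shift, conjugate reversal, irreducible-factor swaps), and then ask the coded pattern to eliminate every nontrivial candidate. Both orderings can be made to work, and yours has one structural advantage: since each Hayes candidate $h^*_\bt$ is a fixed deterministic object, substituting it into the coded autocorrelation identity produces $\sum_{\bn}\bigl[g^*_\bt(\bn+\bk)\overline{g^*_\bt(\bn)}-f^*_\bt(\bn+\bk)\overline{f^*_\bt(\bn)}\bigr]e^{\im(\phi(\bn+\bk)-\phi(\bn))}=0$ with \emph{deterministic} coefficients, so at most two summands share any given $\phi(\bl)$ (namely $\bn=\bl$ and $\bn=\bl-\bk$), versus the four the paper must track because its candidate $g$ carries factors of $\mu$.

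The gap is that you stop exactly where the proof has to be done. You name the verification that every nontrivial ambiguity yields a "genuinely nonzero polynomial" in the mask values as the main obstacle and then do not carry it out; that verification is the entire content of the theorem, and the paper spends all of Appendix C on its analogue. It does go through: conditioning on all phases except $\phi(\bl)$, the identity above has the form $\alpha e^{-\im\phi(\bl)}+\beta e^{\im\phi(\bl)}+\gamma=0$ with $\alpha=c_{\bl}e^{\im\phi(\bl+\bk)}$, $\beta=c_{\bl-\bk}e^{-\im\phi(\bl-\bk)}$ independent of $\phi(\bl)$, so continuity of $\phi(\bl)$ forces every coefficient $c_\bn=g^*_\bt(\bn+\bk)\overline{g^*_\bt(\bn)}-f^*_\bt(\bn+\bk)\overline{f^*_\bt(\bn)}$ to vanish for all $\bk\neq 0$. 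But notice what this buys you: once all lag products of $g^*_\bt$ and $f^*_\bt$ agree, the hypothesis that $\supp(f_\bt)$ is not contained in a line (hence has at least two, in fact three non-collinear, nonzero pixels) lets you read off $g^*_\bt=e^{\im\theta_\bt}f^*_\bt$ directly by dividing lag products at two base points --- no enumeration of irreducible factors, shifts, or flips is ever needed. So either complete Step 2 as above, in which case Step 1 and the whole Hayes apparatus become dead weight, or follow the paper and let the coded pattern do the coarse classification first. As written, the proposal identifies the right pressure point but does not supply the argument, and it also carries a combinatorial enumeration (the factor-swap companions of a possibly reducible two-variable polynomial) that a completed proof would never use.
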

 The proof of Theorem \ref{thm3} is given in Appendix \ref{appC}. 

  \begin{thm}\label{thm4} 
    Let $f, g\in O_n$ and suppose that $\widehat f(0)\neq 0$. Let the Mask Assumption hold true.  Suppose that neither $\supp(f_\bt)$ nor $\supp(f_{\bt'})$ is a subset of a line  and that 
\beq
\label{43'}
|\cF(\mu\odot g^*_{\bt'})|^2&=&|\cF(\mu\odot  f^*_{\bt'})|^2\\
|\cF(\mu\odot g^*_\bt)|^2&=&|\cF(\mu\odot f^*_\bt)|^2\label{44'}
\eeq
where $\bt$ and $\bt'$ are not parallel to each other. 
Then almost surely
\beq
\label{alt}
\hbox{either}\quad \lt({g_\bt}= e^{\im \theta_0} f_\bt\quad\&\quad 
g_{\bt'}= e^{\im \theta_{0}} f_{\bt'}\rt)\quad\hbox{or}\quad
f^*_{\bt}
=f^*_{\bt'}, 
\eeq
 for some constant $\theta_{0}\in \IR$ (the two in \eqref{alt} may both be true).\footnote{ The condition $\widehat f(0)\neq 0$ is missing in the statement of  the theorem in \cite{Born-tomo}.  The proof  is corrected and further elaborated in Appendix \ref{appD}.} 
 \end{thm}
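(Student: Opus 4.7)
The plan is to decouple the problem into two steps analogous to the structure of Theorem \ref{thm2} but adapted to the dark-field (weak-phase) setting: first use coded phase retrieval in each direction separately to reduce each intensity equation to a relation of the form $g^*_\bt = e^{\im\alpha_\bt} f^*_\bt$ and $g^*_{\bt'} = e^{\im\alpha_{\bt'}} f^*_{\bt'}$ up to a single unknown constant phase per direction, and then tie the two constants together by evaluating the 3D Fourier transforms at the origin using the discrete Fourier slice theorem (Proposition \ref{thm:slice}) together with the hypothesis $\widehat f(0)\neq 0$.

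For the first step I would argue as in the proof of Theorem \ref{thm3}: the Mask Assumption (a random phase mask with independent continuous phases) breaks the translation and conjugate-reflection ambiguities of 2D Fourier magnitudes, so each coded intensity equation forces $g^*_\bt$ and $f^*_\bt$ to coincide up to a global phase almost surely. The hypothesis that neither $\supp(f_\bt)$ nor $\supp(f_{\bt'})$ is a subset of a line plays the role of the 2D Hayes-type irreducibility condition, ensuring that the underlying Fourier-magnitude problem has only the trivial ambiguities to start with. The key departure from Theorem \ref{thm3} is that there is no uncoded pattern to lean on, so the mask argument alone must shoulder the full job of ruling out non-global-phase ambiguities.

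For the second step, applying Proposition \ref{thm:slice} at the zero-frequency gives $\widehat{f_\bt}(0,0) = \widehat f(0,0,0) = \widehat{f_{\bt'}}(0,0)$. The shift that defines the starred projections preserves the DC coefficient, so $\widehat{f^*_\bt}(0) = \widehat{f^*_{\bt'}}(0) = \widehat f(0)$ and likewise $\widehat{g^*_\bt}(0) = \widehat{g^*_{\bt'}}(0) = \widehat g(0)$. Substituting the per-direction relations from Step 1 yields $\widehat g(0) = e^{\im\alpha_\bt}\widehat f(0) = e^{\im\alpha_{\bt'}}\widehat f(0)$, and the assumption $\widehat f(0)\neq 0$ cancels the common factor to give $e^{\im\alpha_\bt} = e^{\im\alpha_{\bt'}} =: e^{\im\theta_0}$. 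This delivers the first branch of the conclusion after aligning the unknown translations $\bl_\bt,\bl_{\bt'}$ that define the starred projections.

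The second branch $f^*_\bt = f^*_{\bt'}$ appears as the degenerate case where the two coded intensity equations effectively collapse into one, and Step 1 then produces independent phase constants with no mechanism forcing them to agree. The main obstacle is executing Step 1 rigorously from a \emph{single} coded pattern per direction, since Theorem \ref{thm3} had the benefit of a companion uncoded pattern; one must expand the autocorrelation of $\mu\odot f^*_\bt$ and use the continuous joint distribution of the Mask Assumption to exclude the translation and conjugate-flip ambiguities on a measure-one event. Bookkeeping the unknown shifts $\bl_\bt,\bl_{\bt'}$ is a secondary concern, needed to convert $g^*_\bt = e^{\im\theta_0} f^*_\bt$ into $g_\bt = e^{\im\theta_0} f_\bt$ in the final statement, and is what motivates the ``or $f^*_\bt=f^*_{\bt'}$'' escape clause precisely where the DC-alignment step becomes vacuous.
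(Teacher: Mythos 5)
Your Step 1 contains a genuine gap that cannot be repaired: a \emph{single} coded diffraction pattern does not determine $f^*_\bt$ up to a global phase, even almost surely. By Lemma \ref{prop1}/\ref{prop2} the residual ambiguities of one coded pattern are $g^*_\bt(\bn)=e^{\im\theta_\bt}f^*_\bt(\bn+\mbm_\bt)\mu(\bn+\mbm_\bt)/\mu(\bn)$ and the conjugate-flip analogue; these produce \emph{exactly} the same intensity $|\cF(\mu\odot g^*_\bt)|^2$, so no expansion of the autocorrelation and no appeal to the continuity of the mask distribution can exclude them ``on a measure-one event'' from that one pattern. That is precisely why Theorem \ref{thm3} requires the companion uncoded pattern, and you correctly note its absence here but then assume the obstacle away. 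The paper's proof instead keeps both per-direction alternatives alive and resolves them \emph{jointly}: it uses the common-set identity $\widehat f^*_{\bt}(\bk)e^{-\im 2\pi\bk\cdot\bl_\bt/p}=\widehat f^*_{\bt'}(\bk')e^{-\im 2\pi\bk'\cdot\bl_{\bt'}/p}$ for $(\bk,\bk')\in C_{\bt,\bt'}$ to equate the two Fourier-slice expressions, and then separates random (mask-dependent) from deterministic terms to force the shift and flip factors to cancel across the two directions. The one configuration this cannot exclude is $\mbm_\bt=\mbm_{\bt'}\neq 0$ (or the matched double-flip case), and working that case out is exactly what yields the second branch $f^*_\bt=f^*_{\bt'}$. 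Your description of that branch as the two intensity equations ``collapsing into one'' is not the actual mechanism.

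Your Step 2 (equating the per-direction phases at DC via Proposition \ref{thm:slice} and $\widehat f(0)\neq 0$) does match the corresponding step in the paper when both shifts vanish, but the hypothesis $\widehat f(0)\neq 0$ is also load-bearing elsewhere: in the mixed cases (e.g.\ $\mbm_\bt=0$, $\mbm_{\bt'}\neq 0$, or one direction in the flip alternative) the random and non-random sums on the common line must vanish separately, which forces $\widehat f^*_\bt(\bk)=0$ on the common set and hence $\widehat f(0)=0$, contradicting the assumption. Without the common-line machinery your argument never reaches the point where either use of $\widehat f(0)\neq 0$ can be made.
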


\begin{cor}\label{cor4}
Let the assumptions of Theorem \ref{thm4} hold for any two non-parallel  $\bt,  \bt'\in \cT$.  Then 
\beq
\label{alt2}
\hbox{either}\quad \lt({g_\bt}= e^{\im \theta_0} f_\bt, \quad\forall \bt\in \cT\rt)\quad\hbox{or}\quad
(f^*_{\bt}
= f^*_{\bt'}, \quad\forall\bt,\bt'\in \cT)
\eeq
where $\theta_0$ is independent of $\bt\in \cT$. 
\end{cor}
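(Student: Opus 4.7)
The plan is to iterate Theorem \ref{thm4} across non-parallel pairs in $\cT^2$ and glue the pair-wise dichotomies into a single global one. A key enabling observation is that the assumption $\widehat f(0)\neq 0$, together with the Fourier slice theorem (Proposition \ref{thm:slice}), gives $\widehat f_\bt(0,0)=\widehat f(0)\neq 0$, so $f_\bt\not\equiv 0$ for every direction $\bt$; hence whenever an identity $g_\bt=e^{\im\theta}f_\bt$ holds, the phase $\theta$ is uniquely determined modulo $2\pi$. This rigidity of the phase is what will let me match phases extracted from distinct pairs.

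For bookkeeping I would call a non-parallel pair $(\bt,\bt')\in\cT^2$ of \emph{type A} when the first clause of \eqref{alt} holds with some common phase $\theta$, and of \emph{type B} when $f^*_\bt=f^*_{\bt'}$. Theorem \ref{thm4} guarantees that every such pair is of at least one of the two types (possibly both). I would then split the argument according to whether there exists a pair of type A that is \emph{not} of type B. If such a witness $(\bt_0,\bt_1)$ exists, let $\theta_0$ be its phase; for every other $\tau\in\cT$, I apply Theorem \ref{thm4} to both $(\bt_0,\tau)$ and $(\bt_1,\tau)$. If either of these pairs is of type A, the non-vanishing of $f_{\bt_0}$ or $f_{\bt_1}$ pins the common phase to $\theta_0$, giving $g_\tau=e^{\im\theta_0}f_\tau$. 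If both are only of type B, then $f^*_{\bt_0}=f^*_\tau=f^*_{\bt_1}$, contradicting the choice of the witness as not of type B. This establishes the first alternative of \eqref{alt2}.

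If no such witness exists, then by construction every non-parallel pair is of type B, which is the second alternative of \eqref{alt2}. Parallel pairs $\bt\parallel\bt'$ in $\cT$ can be handled either by assuming $\cT$ contains at most one representative per parallel class, or by chaining through any third non-parallel direction in $\cT$ to conclude $f^*_\bt=f^*_{\bt'}$.

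The main obstacle I anticipate is the witness-extension step of Case I. A priori, extending $g_\bt=e^{\im\theta_0}f_\bt$ from the witness pair to a generic $\tau\in\cT$ might seem to require a direct pair $(\bt_0,\tau)$ of type A with the same phase, which Theorem \ref{thm4} alone does not guarantee: that single pair might happen to be of type B only, and then $g_\tau$ need not be proportional to $f_\tau$ at all. The resolution is to use both members of the witness simultaneously and to invoke the fact that the witness is explicitly \emph{not} of type B (so $f^*_{\bt_0}\neq f^*_{\bt_1}$); this three-way argument is what rules out the degenerate possibility that $\tau$ is ``incompatible'' with the witness phase and locks the global phase to $\theta_0$.
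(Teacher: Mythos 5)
Your proposal is correct and follows essentially the same route as the paper: both rest on the phase-rigidity observation that $\widehat f(0)\neq 0$ together with the Fourier slice theorem forces the constant in $g_\bt=e^{\im\theta}f_\bt$ to be the same across all directions where the first clause of \eqref{alt} holds, and both then glue the pairwise dichotomies of Theorem \ref{thm4} into the global one. The paper phrases the gluing via a maximal subset $\cT_1\subseteq\cT$ carrying a common phase rather than via your ``witness pair of type A but not type B,'' but these are the same argument; your three-way extension step and your explicit treatment of parallel pairs merely spell out details the paper leaves implicit.
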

\begin{proof} First, let us make the following observation. Suppose ${g_\bt}= e^{\im \theta_\bt} f_\bt$ and ${g_{\bt'}}= e^{\im \theta_{\bt'}} f_{\bt'}$ for $\bt\neq \bt'$. By Proposition \ref{thm:slice}
$
\widehat f_{\bt}(0)=\widehat f_{\bt'}(0)=\widehat f(0)\neq 0,$
it follows from 
$\widehat g_{\bt}(0)=\widehat g_{\bt'}(0)$ that  $\theta_{\bt}=\theta_{\bt'}$. 

Consequently, 
let  $\cT_1\subseteq \cT$ be the {\em maximum} set of  all $\bt\in \cT$ for which ${g_\bt}= e^{\im \theta_0} f_\bt$ for $\theta_0\in \IR$ independent of $\bt\in \cT$. Note that the value of  $\theta_0$ is arbitrary. Since $\cT_1$ is maximal,  it follows that  $g_{\bt'}\neq e^{\im\theta_0}f_{\bt'}$ for any $\bt'\not\in \cT_1$.

Suppose the first alternative in \eqref{alt2} is not true, i.e. $\cT_1\neq \cT$. 
Consider any $\bt'\not\in \cT_1$ and $\bt\in \cT_1$. By Theorem \ref{thm4}, $f^*_{\bt}
= f^*_{\bt'}, $ 
 implying the second alternative in \eqref{alt2}.

\end{proof}  

\subsection{Pairwise measurement for single-particle imaging}\label{sec:schemes}
 \begin{figure}
\centering
\subfigure[Beam splitter with coded and uncoded apertures]{\includegraphics[width=16cm]{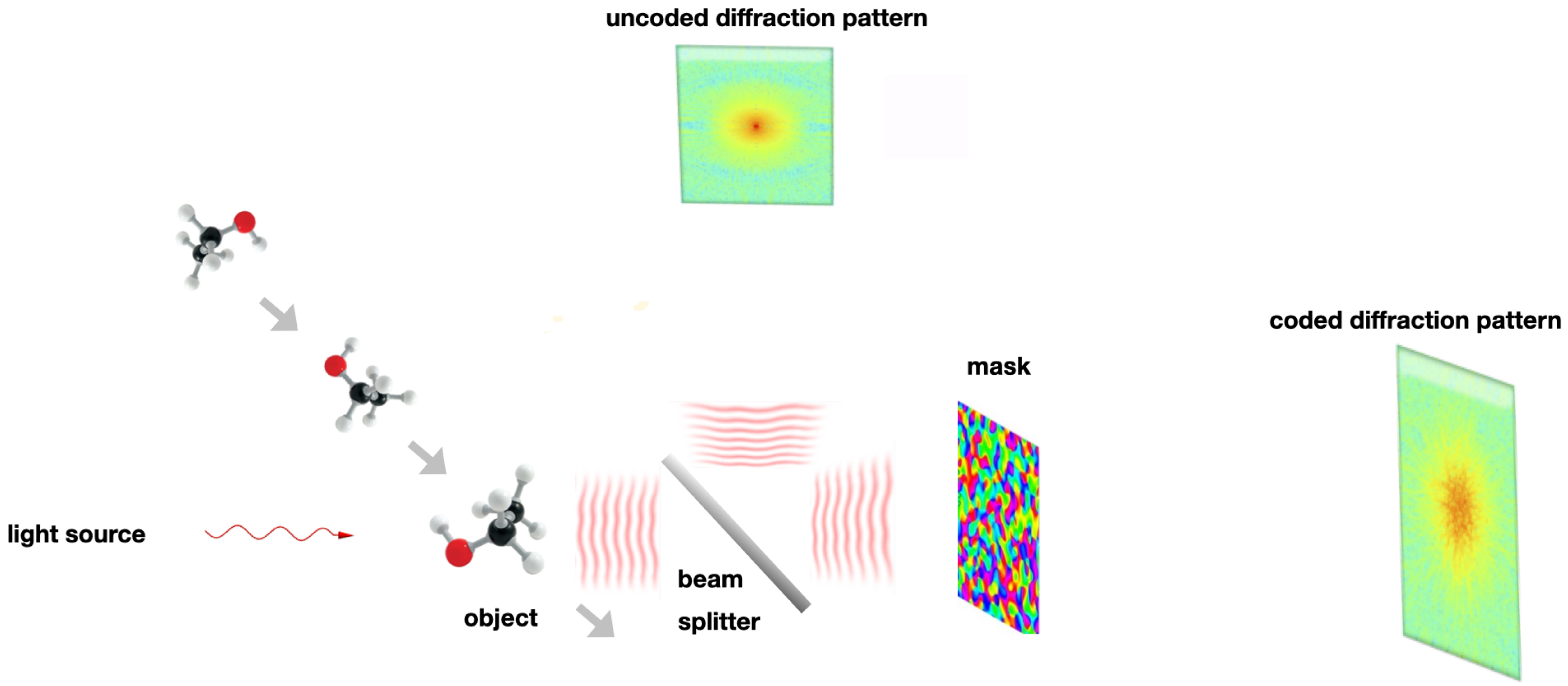}}
\vspace{1cm} \\
\subfigure[Two coded apertures in a known relative orientation]{\includegraphics[width=15cm]{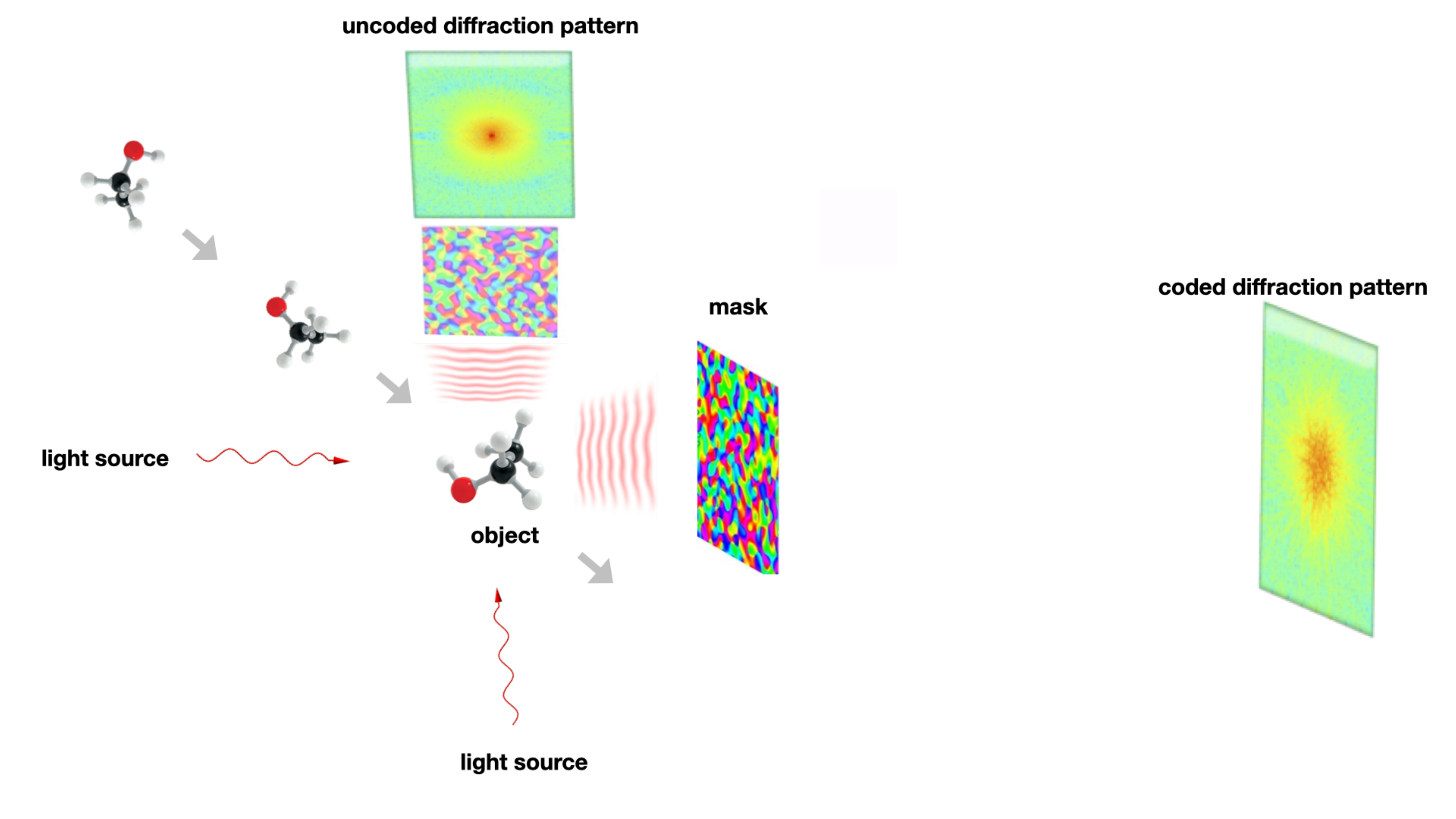}}
\caption{(a) Simultaneous measurement of one coded and one uncoded diffraction patterns with a beam splitter; (b) Simultaneous illumination of the object with two coded apertures in a known relative orientation.  
}
\label{fig3}
\end{figure}

In this section, we  introduce pairwise measurement schemes  
for single-particle imaging where each particle 
is destroyed after one illumination  \cite{XFEL18,serial-X11,XFEL17}. 
How can two diffraction patterns be measured, as assumed by Theorem \ref{thm1}, \ref{thm2}, \ref{thm3} and \ref{thm4},  if the particle is illuminated only once?

\subsubsection{Beam splitter}

For  Theorem \ref{thm1} \& \ref{thm3} with a unknown $\bt$, how can we be sure that the coded and uncoded diffraction patterns are
measured in the same direction? 

 Consider the measurement scheme stylized in Figure \ref{fig3}(a) where a beam splitter is inserted behind the object and  the mask placed  in only one of two light paths behind the splitter. Ideally, the beam splitter produces two identical beams to facilitate two snapshots of the same exit wave. The reader is referred to \cite{splitter13, splitter17, splitter20, splitter22} for recent advances in  X-ray splitters.

\subsubsection{Dual illuminations}

For Theorem \ref{thm2} \& \ref{thm4} we need a physical set-up that can render a pair of diffraction patterns in two different directions. 
This can be achieved by simultaneous illuminations  by two beams with both exit waves masked by the coded aperture as depicted in Figure \ref{fig3}(b).  Note that the two beams need not be coherent with each other since no interference between the two is called for.


 \begin{figure}
\subfigure[Tilt geometry for sequential illuminations]{\includegraphics[width=7.3cm]{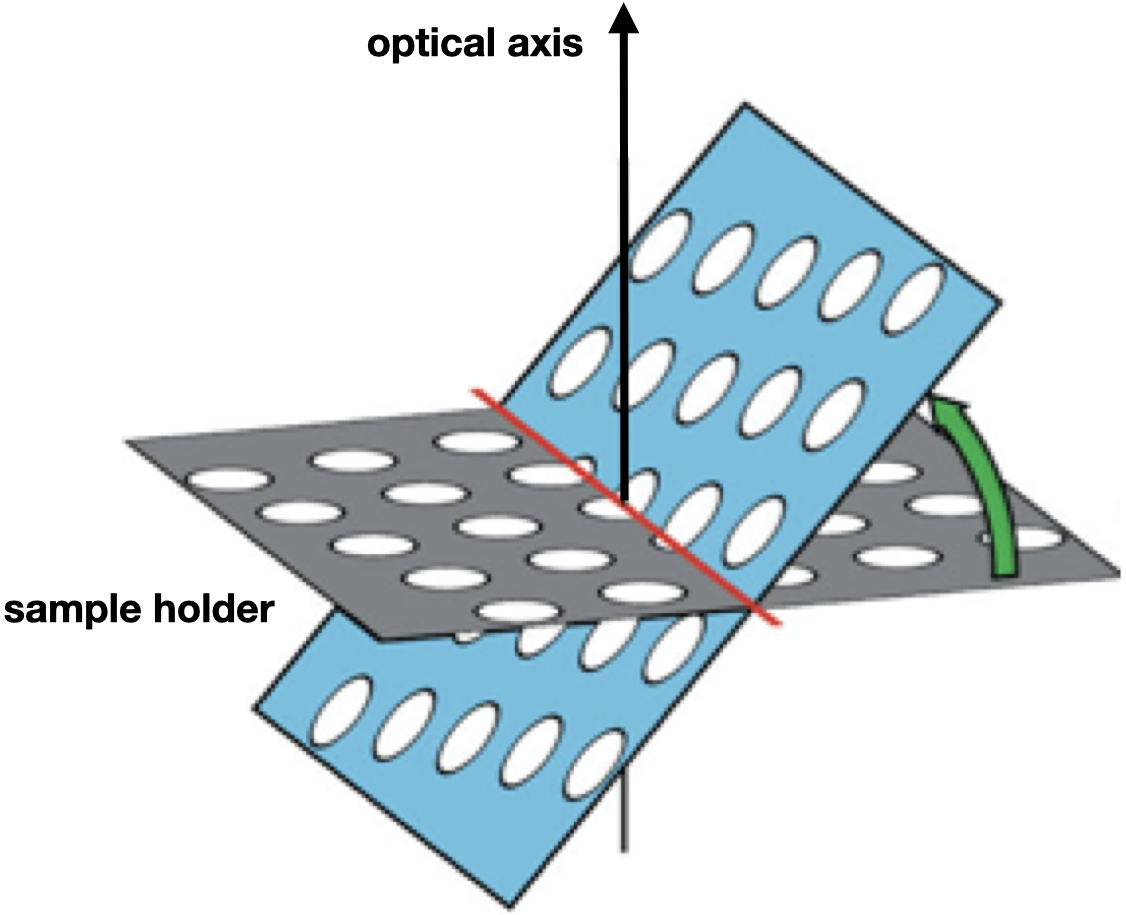}}
\caption{{\em Serial} data collection implemented by the random conical tilt and orthogonal tilt in cryo-EM both of which collect {\bf pairs} of measurement data in a fixed relative orientation corresponding to the angle about $50 \deg$ and $90\deg$, respectively,  between the two beams\cite{Frank06,OT10}.}
\label{fig:OT}
\end{figure}

If the particles can endure more than one dose of radiation then a fixed-target sample delivery can be implemented by 
the cryo-EM scheme  {\em random conical tilt} (RCT) or {\em orthogonal tilt} (OT)   \cite{Frank06,OT10}.
As shown in Figure \ref{fig:OT}, many identical particles are randomly located and oriented on a grid which can be precisely tilted about a tilt axis by a goniometer.
With dose-fractionated beams, 
the diffraction patterns of the identical particles in the two orientations are measured with the {\em coded} aperture in correspondence  with Figure \ref{fig3}(b).

Because only one exit wave is aimed at in Figure \ref{fig3}(a) 
instead of two in Figure \ref{fig3}(b), the exit wave reconstruction as guaranteed by Theorem \ref{thm1} and \ref{thm3} would be much more effective and robust
than that guaranteed by Theorem \ref{thm2} and \ref{thm4}. Indeed, the exit wave reconstruction from the two diffraction patterns collected in Figure \ref{fig3}(a) is equivalent to the phase retrieval problem well studied previously  \cite{sector,AP-phasing,DR-phasing}.

\subsection{Sector constraint}
The X-ray spectrum generally lies to the high-frequency side of various resonances associated with the binding of electrons so the complex refractive index can be written as 
\beq
\label{weak}
n=1-\delta+\im \beta,\quad 0<\delta, \beta \ll 1,
\eeq
where $\delta$ and $\beta$, respectively,  describe the dispersive and absorptive aspects of the wave-matter interaction. The component $\beta$ is usually much smaller than $\delta$ which is  often of the order of $10^{-5}$ for X-rays \cite{Jacobsen, Paganin}.

By \eqref{wrap} and \eqref{weak}, 
\beq
f=\half (n^2-1)\approx -\delta+\im \beta
\eeq
 and hence $f$ satisfies the so called sector condition introduced in \cite{unique}, i.e. the phase angle $\angle f (\bn)$   of $ f_\bt (\bn)$ for each $\bn$ satisfies 
\beq\label{sector}
\angle f(\bn)\in [a,b],\quad |a-b|< 2\pi,
\eeq
where $a$ and $b$ are two constants independent of $\bn$.
For example, for $\beta\ge 0$, $a=0 $ and $ b=\pi$. In particular,  if $\beta\ll \delta>0$, then $a\approx \pi$ and $b=\pi$. The sector condition is
a generalization of the constraint of positivity (of electron density) which is the cornerstone of the  ``direct methods"  in X-ray crystallography \cite{Hauptman86}. 

In view of \eqref{large}, the continuous interpolation $\widetilde f$ in  \eqref{1.5} satisfies
the sector condition 
\beq\label{sector'}
\angle \widetilde f(\bn)\in [\tilde a,\tilde b]\quad (\tilde a \approx a, \,\,\tilde b\approx b,\,\, p\gg 1). 
\eeq 

If $|\tilde b-\tilde a|\le \pi,$ the sector \eqref{sector'} is a convex set and hence the discrete projections
\eqref{2.8}-\eqref{2.10} also satisfy the section condition \eqref{sector'}. However, we can not expect the phase projection 
 $e^{\im \kappa f_\bt}$ to satisfy the sector condition regardless of $|\tilde b-\tilde a|$.

The sector condition \eqref{sector} enables reduction from a single coded diffraction pattern for a weak-phase object as stated below. 
\begin{thm}\cite{unique}\label{thm:born1} Let $f\in O_n$ with the singleton  $\cT=\{\bt\}$ for any $\bt$ such that the sector condition \eqref{sector'} is convex (i.e. $|\tilde b-\tilde a|\le \pi$).  Assume the mask function $\mu(\bn)=\exp[\im \phi(\bn)]$ to be uniformly distributed over the unit circle. Suppose that $\supp(f_\bt)$ is not a subset of a line and that for $g\in O_n$,  $ g^*_\bt $ produces the same coded  diffraction pattern as $ f^*_\bt$.  Then with  probability at least 
\beq
\label{sector2}
1-n^2\lt|{\tilde b-\tilde a\over 2\pi}\rt|^{\lfloor S_\bt/2\rfloor}\ge 1-n^2 2^{-\lfloor S_\bt/2\rfloor}, 
\eeq
 where $S_\bt$ is the number of nonzero pixels of $f_\bt$, 
we have $ {g_\bt}= e^{\im \theta_\bt} f_\bt$ for some constant $\theta_\bt\in \IR.$ 
 \end{thm}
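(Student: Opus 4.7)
The plan is to follow the standard two-step pattern for uniqueness of phase retrieval with a random coded mask: first reduce the data-consistency equation $|\cF(\mu\odot g^*_\bt)|^2=|\cF(\mu\odot f^*_\bt)|^2$ to a short list of algebraic ambiguities, and then show that the sector condition eliminates every remaining ambiguity except $g_\bt=e^{\im\theta_\bt}f_\bt$ with the claimed probability.

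First I would express the coded diffraction pattern as the autocorrelation of $\mu\odot f^*_\bt$ and pass to the associated two-variable Laurent polynomial $P(z_1,z_2)=\sum_{\bn}(\mu\odot f^*_\bt)(\bn)\,z_1^{n_1}z_2^{n_2}$; two arrays with supports in $\IZ_n^2$ producing the same autocorrelation correspond to polynomials whose moduli agree on the torus $|z_1|=|z_2|=1$. Because $\supp(f_\bt)$ is not a subset of a line, a generic-irreducibility argument for bivariate polynomials with random coefficients (the same machinery used for Theorem~\ref{thm3}) shows that almost surely the only ambiguities of $\mu\odot g^*_\bt$ relative to $\mu\odot f^*_\bt$ are an overall unimodular constant together with a possible \emph{twin}
\begin{equation*}
(\mu\odot g^*_\bt)(\bn)\;=\;c\,\overline{(\mu\odot f^*_\bt)(\bl-\bn)},\qquad |c|=1,\;\bl\in\IZ_{2n-1}^2.
\end{equation*}
The non-twin alternative yields exactly $g_\bt=c\,f_\bt$, the desired conclusion.

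Second I would apply the sector condition to rule out each twin. Solving for $g^*_\bt$ from the twin identity gives $g^*_\bt(\bn)=c\,e^{-\im\phi(\bn)-\im\phi(\bl-\bn)}\,\overline{f^*_\bt(\bl-\bn)}$, so the requirement $\angle g^*_\bt(\bn)\in[\tilde a,\tilde b]$ wherever $g^*_\bt(\bn)\neq 0$ translates into
\begin{equation*}
\arg c\,-\,\phi(\bn)\,-\,\phi(\bl-\bn)\,-\,\angle f^*_\bt(\bl-\bn)\;\in\;[\tilde a,\tilde b]\pmod{2\pi}.
\end{equation*}
Pairing the nonzero pixels into orbits $\{\bn,\bl-\bn\}$ of the involution $\bn\mapsto\bl-\bn$, each orbit imposes a single constraint on the sum $\phi(\bn)+\phi(\bl-\bn)$ (or on $2\phi(\bn)$ at a fixed point), which has probability $|\tilde b-\tilde a|/(2\pi)$ by the uniformity and continuity of the $\phi$'s. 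Distinct orbits involve disjoint mask coordinates, so the events are mutually independent, and there are at least $\lfloor S_\bt/2\rfloor$ of them among the $S_\bt$ nonzero pixels; hence conditionally on the shift $\bl$ the probability that the twin is sector-consistent is at most $(|\tilde b-\tilde a|/(2\pi))^{\lfloor S_\bt/2\rfloor}$.

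Finally, a union bound over the $O(n^2)$ admissible translations $\bl$ for which $\supp g^*_\bt\subseteq\IZ_n^2$ can hold yields the bound $n^2(|\tilde b-\tilde a|/(2\pi))^{\lfloor S_\bt/2\rfloor}$ on the failure event, giving the stated probability of success. The main obstacle will be the first step: justifying, under the Mask Assumption and the non-collinearity of $\supp(f_\bt)$, that the only non-trivial Fourier-magnitude ambiguity of $P$ is the twin listed above. This requires the generic-irreducibility lemma for two-dimensional polynomials with independent continuous-phase coefficients, and is exactly the technical core shared with Theorem~\ref{thm3}.
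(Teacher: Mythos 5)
First, a point of reference: the paper does not prove Theorem \ref{thm:born1} here; it is imported verbatim from \cite{unique}, so your proposal has to be measured against the argument of that reference and against the analogous machinery the paper develops in Appendices A and C. Your overall strategy --- reduce the single coded diffraction pattern to the short list of Fourier-magnitude ambiguities via irreducibility of the associated Laurent polynomial, then use the sector condition and the independence of the mask phases to eliminate the residual ambiguities with a union bound over $O(n^2)$ shifts --- is exactly the strategy of the cited reference, and your treatment of the twin image (pairing pixels into orbits of the involution $\bn\mapsto\bl-\bn$, one constraint per orbit, at least $\lfloor S_\bt/2\rfloor$ orbits) is the right computation.

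There is, however, a genuine gap in your first step. The characterization of Fourier-magnitude ambiguities (Lemma \ref{prop2}, quoted from \cite{unique}) is not ``global phase or twin'': the non-twin branch reads $(\mu\odot g^*_\bt)(\bn)=e^{\im\theta_\bt}(\mu\odot f^*_\bt)(\bn+\mbm_\bt)$ with an arbitrary spatial shift $\mbm_\bt$, so your assertion that ``the non-twin alternative yields exactly $g_\bt=c\,f_\bt$'' fails whenever $\mbm_\bt\neq 0$: in that case $g^*_\bt(\bn)=c\,f^*_\bt(\bn+\mbm_\bt)\,e^{\im\phi(\bn+\mbm_\bt)}e^{-\im\phi(\bn)}$, which is not a constant phase multiple of $f_\bt$. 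In Theorems \ref{thm1} and \ref{thm3} this shift is excluded using the second, uncoded diffraction pattern; here only one pattern is available, so the shift must also be excluded by the sector condition, by an argument parallel to your twin argument but with the involution orbits replaced by the chains $\bn,\bn+\mbm_\bt,\bn+2\mbm_\bt,\dots$ (taking every other link to extract $\lfloor S_\bt/2\rfloor$ independent constraints on the differences $\phi(\bn+\mbm_\bt)-\phi(\bn)$). This case is also where much of the $n^2$ union-bound factor in \eqref{sector2} comes from; a union over the reflection points $\bl$ of the twin alone does not account for it. A secondary point to tighten: the unknown unimodular constant $c$ is common to all of your per-orbit constraints, so the events ``orbit $j$ is sector-consistent'' are not literally independent events of probability $|\tilde b-\tilde a|/(2\pi)$ each; you need to quantify over $c$ (for instance, use one orbit to localize $\arg c$ and absorb the cost in the remaining orbits) or otherwise decouple the constraints before multiplying probabilities.
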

 
If $|\cT|>1$ and if the mask functions for different $\bt\in \cT$ are independently distributed,  then the probability for  Theorem \ref{thm:born1} to hold for all $\bt \in \cT$ is at least 
\[
\prod_{\bt\in \cT}\lt(1-n^2\lt|{\tilde b-\tilde a\over 2\pi}\rt|^{\lfloor S_\bt/2\rfloor}\rt). 
\]
For the sake of simplicity in measurement,  however, let $\mu$ be the same mask for all $\bt\in \cT$.
The probability for  Theorem \ref{thm:born1} to hold for all $\bt \in \cT$ can be roughly estimated as follows.

First note that for any two events $A$ and $B$, 
\beq
\label{64}
P(A\cap B)=P(A)+P(B)-P(A\cap B)\ge P(A)+P(B)-1,
\eeq
where $P(\cdot)$ is the probability of the respective event. Let $\cT=\{\bt_j:j=1,...,m\}$, $E_j$ be the event that
Theorem \ref{thm:born1} to hold for $\bt_j$ and $p_j=P(E_j)$. By Theorem \ref{thm:born1}, $p_j\ge 1-c_j$
where
\[
c_j=n^2\lt|{\tilde b-\tilde a\over 2\pi}\rt|^{\lfloor S_{\bt_j}/2\rfloor}, 
\]
and, hence by \eqref{64}, 
\beq
\label{64'}
P(E_1\cap E_2)\ge p_1+p_2-1\ge 1-2c,\quad c=\max_j c_j.
\eeq
Iterating the bound \eqref{64} inductively with $E_j, j=1,...,m$, we obtain
\[
P(\cap_{i=1}^{m} E_i)=P(\cap_{i=1}^{m-1} E_i\cap E_m)\ge 1-(m-1)c-c=1-mc. 
\]

 \begin{cor}\label{cor5} Suppose $\widehat f(0)\neq 0$. 
Theorem \ref{thm:born1} holds true for $\bt$ with the same constant $\theta_\bt=\theta_0\in \IR$ independent of  $\bt$ in any $ \cT$ with probability at least
\beq
\label{sector2'}
1- |\cT|n^2\lt|{\tilde b-\tilde a\over 2\pi}\rt|^{s/2},\quad s:=\min_j S_{\bt_j},
\eeq
where $s$ is the minimum sparsity (the least number of nonzero pixels) in all directions in $\cT$. 

 \end{cor}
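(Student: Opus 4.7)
The plan is to combine a union-bound-style probability estimate (already worked out in the paragraph preceding the corollary) with an argument, modeled on the proof of Corollary \ref{cor4}, that forces the per-direction phase constants $\theta_\bt$ to collapse to a single value once $\widehat f(0)\neq 0$.

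First, I would fix an enumeration $\cT=\{\bt_1,\ldots,\bt_m\}$ with $m=|\cT|$ and let $E_j$ denote the event that Theorem \ref{thm:born1} applies in direction $\bt_j$, i.e.\ that $g_{\bt_j}=e^{\im\theta_{\bt_j}}f_{\bt_j}$ for some $\theta_{\bt_j}\in\IR$. Theorem \ref{thm:born1} gives
\[
P(E_j)\ge 1-n^2\Big|\tfrac{\tilde b-\tilde a}{2\pi}\Big|^{\lfloor S_{\bt_j}/2\rfloor}\ge 1-c, \qquad c:=n^2\Big|\tfrac{\tilde b-\tilde a}{2\pi}\Big|^{s/2},
\]
since $S_{\bt_j}\ge s$. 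Iterating the elementary inequality $P(A\cap B)\ge P(A)+P(B)-1$ (displayed as \eqref{64} in the excerpt) exactly as in the paragraph preceding the corollary yields
\[
P\Big(\bigcap_{j=1}^m E_j\Big)\ge 1-mc=1-|\cT|\,n^2\Big|\tfrac{\tilde b-\tilde a}{2\pi}\Big|^{s/2},
\]
which matches the claimed bound \eqref{sector2'}.

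It remains to show that on the event $\bigcap_j E_j$ the constants $\theta_{\bt_j}$ can be chosen to coincide. Here I would borrow the Fourier-slice argument from the proof of Corollary \ref{cor4}. By Proposition \ref{thm:slice},
\[
\widehat f_{\bt_j}(0)=\widehat f(0),\qquad \widehat g_{\bt_j}(0)=\widehat g(0),
\]
for every $j$, so the identity $g_{\bt_j}=e^{\im\theta_{\bt_j}}f_{\bt_j}$ implies $\widehat g(0)=e^{\im\theta_{\bt_j}}\widehat f(0)$. Since $\widehat f(0)\neq 0$ by hypothesis, this forces
\[
e^{\im\theta_{\bt_j}}=\widehat g(0)/\widehat f(0)
\]
independently of $j$, so a single $\theta_0\in\IR$ serves for all $\bt\in\cT$. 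Combining this with the probability bound above yields the corollary.

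I do not expect any genuine obstacle here: the probability estimate is just the inductive use of \eqref{64} already carried out in the text, and the phase-consistency step is essentially the opening paragraph of the proof of Corollary \ref{cor4} with $\cT$ playing the role of $\cT_1$. The only point that deserves care is recording that the bound in Theorem \ref{thm:born1} is monotone in $S_{\bt_j}$, so replacing each $S_{\bt_j}$ by the minimum $s$ is legitimate and gives a uniform lower bound $1-c$ on each $P(E_j)$ before the union-type argument is invoked.
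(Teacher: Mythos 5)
Your proposal is correct and follows essentially the same route as the paper: the probability bound is the iterated use of \eqref{64} from the paragraph preceding the corollary, and the phase-consistency step is exactly the paper's argument via Proposition \ref{thm:slice} evaluated at the zero frequency together with $\widehat f(0)\neq 0$. The only cosmetic discrepancy (present in the paper as well) is the replacement of the exponent $\lfloor s/2\rfloor$ from Theorem \ref{thm:born1} by $s/2$ in \eqref{sector2'}, which does not affect the substance of the argument.
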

 
 \begin{proof}
 By Proposition \ref{thm:slice}
$
\widehat f_{\bt}(0)=\widehat f_{\bt'}(0)=\widehat f(0)\neq 0,$
it follows from 
$\widehat g_{\bt}(0)=\widehat g_{\bt'}(0)$ that  $\theta_{\bt}=\theta_{\bt'}$. 
Namely, $
g_\bt=e^{\im\theta_0} f_\bt$ for some constant $\theta_0$ independent of $\bt\in \cT$. 

 The proof is complete. 
 \end{proof}
 The bound \eqref{sector2'} is meaningful only if
 \beq
 \label{55'}
 |\cT|< n^{-2}\lt|{\tilde b-\tilde a\over 2\pi}\rt|^{-s/2}.
 \eeq
Usually $s$ is at least a multiple of $n$ (often $\cO(n^2)$),  \eqref{55'}  allows nearly exponentially large number of projections. As we will see in Corollary \ref{cor7} (ii), a far smaller number $m=n$  of projections suffices for unique determination of a weak phase object.

 Note that Theorem \ref{thm3}, \ref{thm4}, \ref{thm:born1}, Corollary \ref{cor2} and \ref{cor5} do not hold
 for a uniform  mask ($\mu=$ cost.) because the chiral ambiguity and the shift ambiguity are present, i.e. both 
 $g(\cdot) =f(-\cdot)$ and $g(\cdot)=f(\cdot+\bl),\bl\in \IZ^3, $ satisfy all the assumptions therein but $ g_\bt\not \equiv e^{\im\theta_\bt}  f_\bt$ in general.

\section{Phase unwrapping}
\label{sec:unwrap}

For a strong phase object, 
 \eqref{1.18'} naturally  leads  to the problem of phase unwrapping:
 \beq\label{37'}
 g_\bt(\bn)= f_\bt(\bn) \quad\mod 2\pi/\kappa,\quad \bn\in \IZ_p^2, 
\eeq
which may have infinitely many solutions. We seek conditions on  $\cT$ that can uniquely determine the 3D object in the sense that 
$
 g\equiv f.$

A basic approach appeals to the continuity of the projection's dependence on the direction $\bt$, which, in turn, is the consequence of continuous interpolation \eqref{1.5}. 

Let $\cT_\ep$ denote the
graph with the nodes given by  $\bt\in \cT$ and the edges defined between any two nodes $\bt_1,\bt_2\in \cT$ with $|\angle\bt_1\bt_2|\le \ep$ (such edges are called $\ep$-edges) where $\angle\bt_1\bt_2$ is the angle between $\bt_1$ and $\bt_2$. We call $\cT$ is $\ep$-connected if $\cT_\ep$ is a connected graph. We say that two nodes $\bt_1,\bt_2$ are $\ep$-connected if there is an $\ep$-edge between them. 

Suppose that $\cT$ is $\ep$-connected for certain $\ep$ (to be determined later). The continuous dependence of $g_\bt, f_\bt$ on $\bt$ implies that $|g_{\bt_1}-g_{ \bt_2}|$
and $|f_{\bt_1}-f_{ \bt_2}|$ are arbitrarily small if $\angle\bt_1\bt_2$ is sufficiently small. On the other hand,
$h_{\bt_1}-h_{ \bt_2}$ is an integer multiple of $2\pi/\kappa$ where $h_\bt := g_\bt- f_\bt$.  Then for  sufficiently small $\ep$, $h_\bt(\bn)$ is a constant for each $\bn$ and hence $ g_\bt- f_\bt$ is  independent of $\bt$.

We can give a rough estimate for the required closeness $\ep$ of two adjacent projections as follows.
In general, the gradient of the continuous extension $\widetilde f$ is $\cO(1)$ and hence the gradient of $f_\bt$ (being a sum of $n$ values of $f$) with respect to $\bt$ is $\cO(n)$.  Consequently,  $|f_{\bt_1}-f_{\bt_2}|$ can be made sufficiently small with $\ep=\cO(1/n)$ (with a sufficiently small constant).  

As pointed out at the beginning of Section \ref{sec:discrete}, if we make use of the property that the fractional variation of $f$ between adjacent grids is negligible, then  the gradient of the continuous extension $\widetilde f$ is $o(1)$ and  $|f_{\bt_1}-f_{\bt_2}|$ can be made sufficiently small with $\ep$ that may be much larger than $ 1/n.$

 The main result of this section will need the diversity condition:
 \beq
\label{100}\label{101}
&&\#\{\alpha_l \xi+\beta_l\eta: |\alpha_l|,|\beta_l|<1,\,\, l=1,\dots, n\} =n \hbox{ for each $\xi,\eta\in \IZ_p, $ $(\xi,\eta)\neq 0$}.
\eeq
In other words, the difference $(\alpha_l-\alpha_{k},\beta_l-\beta_{k}), l\neq k,$ is not expressible as $Z_p$-fractions.
\begin{thm}\label{tom2}
Let $\cT$ be an $\ep$-connected set of directions {\bf containing any} of the following three sets:
\beq
\label{x}
&&\{(1, \alpha_l,\beta_l): l=1,\dots,n\}\cup \{(0, \alpha_0,\beta_0),  (0,0,1)\}\\
\label{y}&&\{(\beta_l,1, \alpha_l): l=1,\dots,n \}\cup \{(\beta_0,0, \alpha_0), (1,0,0)\}\\
\label{z}&&\{(\alpha_l,\beta_l, 1): l=1,\dots,n \}\cup \{(\alpha_0,\beta_0,0), (0,1,0)\}
\eeq
under the assumption \eqref{100} and that $\alpha_0\neq 0, |\beta_0|<1$.  

Suppose that the maximum variation of the object $f$ between two adjacent  grid points is less than $\pi/\kappa$ (The 3D Itoh condition) and   \eqref{1.18'}  holds for a sufficiently small $\ep=\cO(1/n)$. Then 
$g=f$. \end{thm}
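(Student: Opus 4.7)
The plan is to combine the $\ep$-connectivity argument sketched before the theorem with the Fourier slice theorem (Prop.~\ref{thm:slice}) and the diversity condition \eqref{100} to shrink the support of $g - f$ iteratively, finishing with the 3D Itoh bound. First I would upgrade the wrapped identity: the 3D Itoh bound on $f$ controls the gradient of the Dirichlet interpolation $\widetilde f$, hence of each projection $f_\bt$ as a function of $\bt$. For $\ep = \mathcal{O}(1/n)$ sufficiently small and any $\ep$-edge $(\bt_1, \bt_2)$, the pixelwise variation $|h_{\bt_1}(\bn) - h_{\bt_2}(\bn)|$ (where $h_\bt := g_\bt - f_\bt$) is strictly less than $2\pi/\kappa$; being an integer multiple of $2\pi/\kappa$, it must vanish. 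By the $\ep$-connectivity of $\cT$, this gives $h_\bt = h$, a single function $h : \IZ_p^2 \to (2\pi/\kappa) \IZ$ shared across all $\bt \in \cT$.

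Next I would use the $n$ primary directions. Taking WLOG the set \eqref{z}, apply Prop.~\ref{thm:slice} to each $\bt_l = (\alpha_l, \beta_l, 1)$ to obtain, for every integer $(\xi, \eta) \neq 0$:
\[
\widehat{g-f}(\xi, \eta, -\alpha_l \xi - \beta_l \eta) = \widehat h(\xi, \eta), \quad l = 1, \ldots, n.
\]
Diversity \eqref{100} guarantees the $n$ values $\zeta_l = -\alpha_l \xi - \beta_l \eta$ are distinct mod $p$. Since $\widehat{g-f}(\xi, \eta, \cdot)$ is a trigonometric polynomial with $n$ exponentials $e^{-\im 2\pi k \zeta/p}$, $k \in \IZ_n$, matching a constant at $n$ distinct points forces it to be that constant identically; inverting in $\zeta$ yields $(g-f)(i, j, k) = h(i, j) \delta_{k, 0}$, so $g$ and $f$ agree off the slice $k = 0$.

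Then I would exploit the two auxiliary directions. The direction $(0, 1, 0)$ projects in $y$; equating $(g-f)_{(0,1,0)}$ with $h$ produces $h(c_1, c_2) = \delta_{c_2, 0} \sum_j h(c_1, j)$, forcing $h$ to be supported on $c_2 = 0$, whence $(g-f)(i, j, k) = U(i) \delta_{j, 0} \delta_{k, 0}$ for a 1D function $U$. The direction $(\alpha_0, \beta_0, 0)$, parametrized as an $x$- or $y$-line (feasible because $\alpha_0 \neq 0$ and $|\beta_0| < 1$), then yields a DFT identity of the form $\widehat U(-s\eta) = \widehat U(\eta)$ for all $\eta \in \IZ_p$, with slope $s$ determined by the parametrization. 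Since the $i = 0$ mode is trivially annihilated, the resulting linear system forces $U$ to be supported only at $i = 0$, so $g - f = U(0) \delta_{(0,0,0)}$ with $U(0) \in (2\pi/\kappa) \IZ$.

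Finally, if $U(0) \neq 0$ then $|U(0)| \geq 2\pi/\kappa$; since $g$ and $f$ agree at $(1, 0, 0)$,
\[
|g(1,0,0) - g(0,0,0)| = |f(1,0,0) - f(0,0,0) - U(0)| \geq 2\pi/\kappa - \pi/\kappa = \pi/\kappa,
\]
contradicting the Itoh-type regularity of $g$ invoked in the continuity step. Hence $U(0) = 0$ and $g = f$. I expect the main difficulties to lie in (i) justifying the identity $h_\bt = h$ across direction families with incompatible 2D index parametrizations (which must be navigated via intermediate $\ep$-connected directions in $\cT$), and (ii) formalizing the regularity of the candidate $g$ needed to close the concluding Itoh argument at a single residual grid point.
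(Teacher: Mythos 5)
Your overall strategy coincides with the paper's: $\ep$-connectivity plus the wrapped identity forces $h_\bt:=g_\bt-f_\bt$ to be a single $(2\pi/\kappa)\IZ$-valued function independent of $\bt$; the discrete Fourier slice theorem together with the diversity condition \eqref{100} yields a nonsingular Vandermonde system confining $g-f$ to a coordinate plane; the two auxiliary directions then shrink the support to a line and then to a single voxel; and the Itoh condition kills the residual multiple of Kronecker's delta. Your steps one, two and four are essentially the paper's argument, and you rightly flag the delicacy of identifying $h_\bt$ across line families with different $(c_1,c_2)$ parametrizations.

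The gap is in your line-to-point step. The paper orders the two auxiliary directions so that the \emph{generic} in-plane direction acts first (collapsing the plane onto a coordinate axis) and the remaining auxiliary direction is \emph{exactly that axis}; the final reduction is then a trivial end-on collapse, equating a doubly degenerate projection $\delta_{c_1,0}\delta_{c_2,0}\sum_k W(k)$ with the singly degenerate $\delta_{c_1,0}W(c_2)$. You invert this order: after $(0,1,0)$ you are left with support on the $x$-axis and must finish with the oblique direction $(\alpha_0,\beta_0,0)$, which yields $U(c)=\sum_i U(i)D_p(\gamma i+c)$ with $\gamma=\beta_0/\alpha_0$, i.e.\ $\widehat U(\eta)=\widehat U(-\gamma\eta)$ for $\eta\in\IZ_p$. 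Your assertion that this ``forces $U$ to be supported only at $i=0$'' is the crux and is not proved. It is also not automatic: $\widehat U(\eta)-\widehat U(-\gamma\eta)$ is an exponential sum with up to $2n-2$ frequencies $\{-i\}\cup\{i\gamma\}$, $i\in\IZ_n\setminus\{0\}$, constrained at only $p=2n-1$ integer points, so concluding that all coefficients vanish requires these frequencies to be pairwise distinct modulo $p$ --- a genericity condition on $\beta_0/\alpha_0$ that is not among the theorem's hypotheses ($\alpha_0\neq 0$, $|\beta_0|<1$ permits $\beta_0/\alpha_0$ to be a $\IZ_p$-fraction, where collisions occur and the counting fails, and the limiting value $\gamma=-1$ gives no information at all). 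The repair is simply to run the auxiliary directions in the paper's order, tracking the index correspondence between the $x$-, $y$- and $z$-line parametrizations carefully enough that the residual axis is the one named by the second auxiliary direction; as written, your resolution of the parametrization issue is precisely what creates the hard functional equation.
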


\begin{rmk}
As per the discussion in Section \ref{sec:discrete}, with $\lambda/2$ as the unit of length,  $\pi/\kappa=1$.  

The projection $f_\bt$ in a direction $\bt$, however, usually violates the 2D Itoh condition.  Hence 2D phase unwrapping for $f_\bt$ may not have a unique solution \cite{Itoh,2D-unwrap}. 

\end{rmk}

\begin{rmk}\label{rmk1:tom2}

As shown in the following proof, the $x,y$ and $z$ axes  in \eqref{x}, \eqref{y} and \eqref{z}, respectively,  show up in the analysis 
because  they are ``privileged" w.r.t.  $\IZ_n^3$ which is not isotropic. On the other hand, due to arbitrariness in choosing the orientation of the object frame,
we can always designate  one of the projection directions in $\cT$ as exactly one of the coordinate axes, say, (0,0,1), and discretize  the object domain into $\IZ_n^3$ accordingly. 

\end{rmk}

\begin{proof} It suffices to consider the case that $\cT$ contains the set \eqref{x}. 

Let \eqref{37'} hold true.  Then the $\ep$-connected schemes with sufficiently small $\ep$ ensure
\[
h_\bt(\bn):= g_\bt(\bn)- f_\bt(\bn) \quad  \hbox{is   independent of $\bt\in \cT$}.
\]
Intuitively, with  sufficiently diverse views in $\cT$, $h:=g-f$ must be  a multiple of Kronecker's delta function
as shown in the following analysis.  

Let $c(\cdot,\cdot)$ be  independent of $\alpha_l,\beta_l,$ such that
\beq\label{3.33}
\widehat h_{(1,\alpha_l, \beta_l)}(\eta,\zeta)&= &c(\eta,\zeta)
\eeq
and hence by Fourier Slice Theorem
\beq
\label{3.45}
\widehat h(- \eta,\zeta\in \IZ\alpha_l \eta-\beta_l \zeta, \eta,\zeta)&=&c(\eta,\zeta),\quad  \eta,\zeta\in \IZ_p.
\eeq
We want to show $c\equiv 0$.

Define the notation:
\beq
\label{3.35}
\widehat{h}_{\eta}(m,l)&=&\sum_{k} h(m,k,l) e^{-2\pi \im k\eta/{p}}\\
\widehat{h}_{\eta\zeta}(m)&=&\sum_{l} \widehat h_\eta (m,l) e^{-2\pi \im l\zeta/{p}}.\label{3.36}
\eeq
Clearly
\beq
\label{xg}
\widehat h(\xi,\eta,\zeta)&= &\sum_{m}\widehat h_{\eta\zeta}(m) e^{-2\pi \im m\xi /{p}}.
\eeq

By the support constraint $\supp(h)\subseteq \IZ_n^3$, \eqref{3.45} \& \eqref{xg} become the $n\times n$ Vandermonde  system 
\beq
\label{van'}
V \widehat h_{\eta\zeta}=c(\eta, \zeta) \II
\eeq
with   the all-one vector $\II$ and
\beq
V&=&[V_{ij}],\quad V_{ij}= e^{-2\pi \im \xi_ij/{p}}, \quad \xi_i=-\alpha_{i}\eta-\beta_{i}\zeta 
\eeq
for $\{\alpha_{i},\beta_{i}:i=1,\dots,n\}$.
The Vandermonde system is nonsingular if and only if $\{\xi_i: i=1, \dots,n\}$ has $n$ distinct members. 

Since the system \eqref{van'} has a unique solution,  we identify 
$\widehat h_{\eta\zeta}(\cdot)$ as 
\[
\widehat h_{\eta\zeta}(\cdot)=c(\eta,\zeta)\delta(\cdot).
\]
For $m\neq 0$, $\widehat{h}_{\eta\zeta}(m)=0$ for all $\eta, \zeta\in \IZ$ and hence $\widehat h_{\eta}(m,l)=0 $ for all $l$ and $m\neq 0$. Likewise for \eqref{3.35}, we select $n$ distinct values of $\eta$ to perform inversion of the Vandermonde system and obtain   
\beq
\label{amb2}
h(m, k, l)=0, \quad m\neq 0.
\eeq
 In other words, $h$ is supported on the $(y,z)$ plane. 
Consequently  the projection of $h$ in the direction of $(0,\alpha_0,\beta_0),$ with $\alpha_0\neq 0, $ would be part of  a line segment
and, hence by the assumption of $h_\bt$'s independence of  $\bt\in \cT$,
$h_\bt$ is also a line object for all $\bt\in \cT$. 

That is to say, $h$ is supported on the $z$-axis. 
Now that $(0,0,1)\in \cT$, the projection of $h$ in $(0,0,1)$ is Kronecker's delta function $\delta$ at the origin, $h_\bt$'s  independence of $\bt$  implies that  for some $q\in \IZ$, 
\beq
g(\bn)-f(\bn)={2\pi \over \kappa} q\delta(\bn)\label{amb}
\eeq
where $\delta$ is Kronecker's  delta function on $\IZ^3$. 

The ambiguity on the right hand side of \eqref{amb} 
can be further eliminated by limiting the maximum variation of the object between two adjacent grid points to
less than $\pi/\kappa$, the so called Itoh condition \cite{Itoh}. This can be seen as follows: If both $g$ and $f$ satisfy Itoh's condition as well as $
g(\bn)-f(\bn)=0$ for $\bn\neq 0$, 
then $|g-f|< 2\pi/\kappa$ at the origin, implying $q=0$ in \eqref{amb}. The proof is complete. 

 \end{proof}

\begin{figure}
\centering
{\includegraphics[width=7cm]{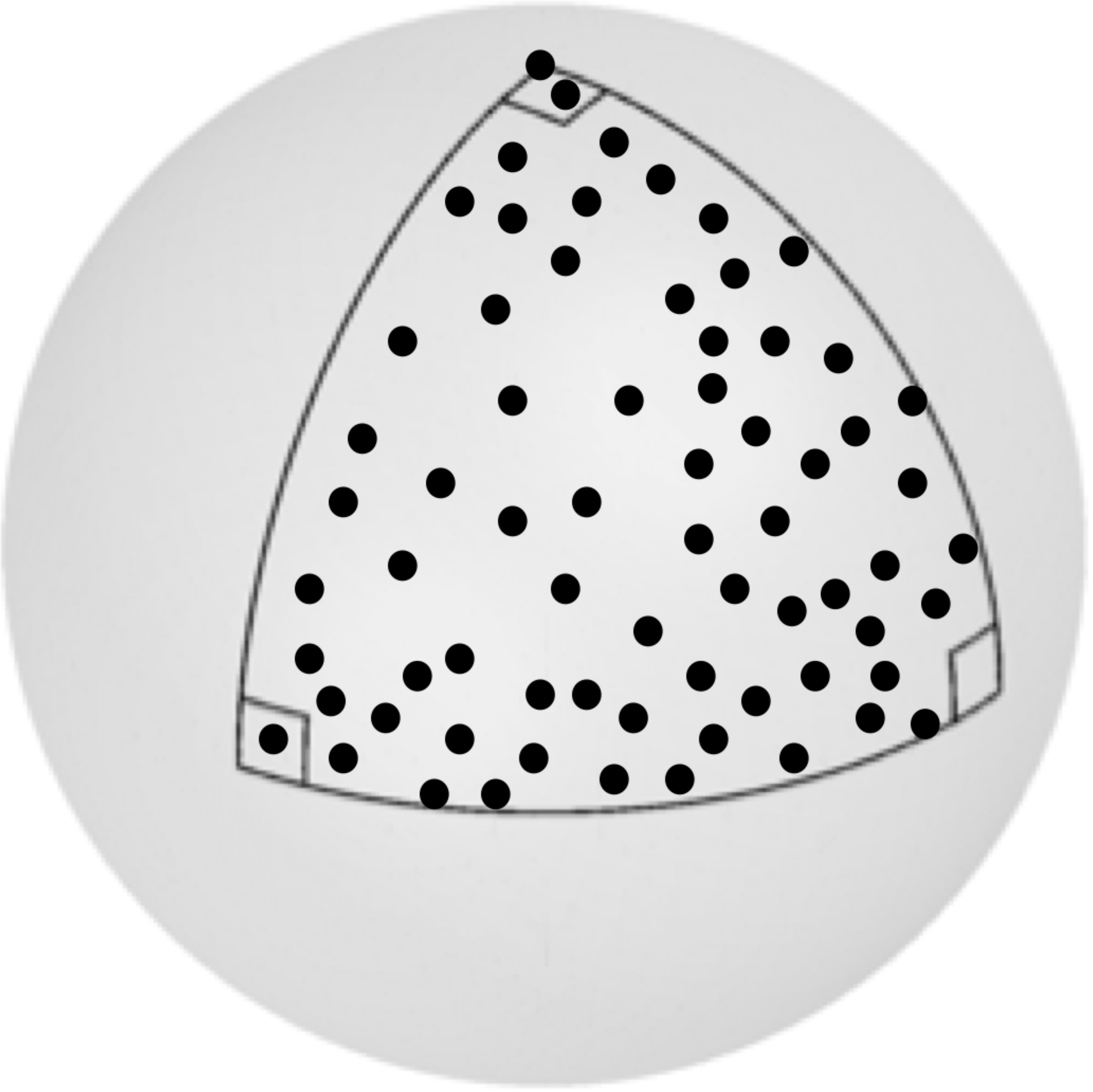}}
\caption{Unit sphere  representing all directions in the object frame. A sufficiently large set of randomly selected points from the spherical triangle (or a larger one) contain the scheme \eqref{x} and satisfy the conditions in Theorem \ref{tom2}.  Since $\ep=\cO(1/n)$ with a small constant, $|\cT|$ is 
at least a large multiple of $n$. 
}
\label{fig:sphere}
\end{figure}

In view of Theorem \ref{thm1}, \ref{thm2} and \ref{tom2}, we have  the following uniqueness results for 3D phase retrieval  with a strong phase object.

\begin{cor}\label{cor6} Let $\cT$ be a $\ep$-connected set of directions in Theorem \ref{tom2}  for a sufficiently small $\ep=\cO(1/n)$ satisfying \eqref{101}. Consider the class of objects in $O_n$  with the maximum variation between two adjacent  grid points less than $\pi/\kappa$. 

(i) Under the setting of Theorem \ref{thm1}, the pairs of coded and uncoded diffraction patterns corresponding to $\bt\in \cT$  uniquely determine the strong phase object almost surely .

(ii) Under the setting of Corollary \ref{cor2},  the  coded diffraction patterns corresponding to $\cT$ uniquely determine the strong phase object   almost surely.

\end{cor}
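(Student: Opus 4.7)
The plan is to compose the two ingredients already assembled: the pairwise reduction from diffraction intensities to phase projections (Theorem \ref{thm1} for part (i), Corollary \ref{cor2} for part (ii)) and the phase unwrapping uniqueness theorem (Theorem \ref{tom2}). Both parts follow the same two-step pattern, so I would just treat them in parallel.

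First I would apply the pairwise reduction. Fix any direction $\bt \in \cT$. Under the hypothesis of (i), the coded/uncoded pair of intensities at $\bt$ agrees for $f$ and $g$, so Theorem \ref{thm1} yields $e^{\im \kappa f_\bt} = e^{\im \kappa g_\bt}$ almost surely; taking the union over $\bt \in \cT$, this holds for all $\bt \in \cT$ simultaneously almost surely. For part (ii), the same conclusion holds directly by Corollary \ref{cor2}, since by assumption each $\bt \in \cT$ is paired with some $\bt' \in \cT$ meeting the common-set hypothesis of Theorem \ref{thm2}. In either case, we have reduced the problem to the wrapped identity
\[
g_\bt(\bn) = f_\bt(\bn) \pmod{2\pi/\kappa}, \qquad \bn \in \IZ_p^2, \;\; \bt \in \cT,
\]
which is precisely the input data for phase unwrapping.

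Second, I would verify that the hypotheses of Theorem \ref{tom2} are in force: $\cT$ is $\ep$-connected with $\ep = \cO(1/n)$ sufficiently small, it contains one of the three prescribed sets \eqref{x}, \eqref{y}, or \eqref{z} satisfying the diversity condition \eqref{101}, and the objects satisfy the 3D Itoh condition that the maximum voxel-to-voxel variation is below $\pi/\kappa$. All these are assumed in the statement of Corollary \ref{cor6}. Therefore Theorem \ref{tom2} applies and gives $g = f$, completing both (i) and (ii).

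There is really no technical obstacle here beyond bookkeeping, since the substance of the argument is packaged in the earlier theorems. The one point worth being explicit about is that the almost-sure events coming from the Mask Assumption in Theorems \ref{thm1}/\ref{thm2} are indexed by the finite set $\cT$ (or the finite set of pairs used in Corollary \ref{cor2}), so their intersection is still an almost-sure event; this is what justifies promoting the pointwise-in-$\bt$ conclusions to a single statement valid simultaneously for all $\bt \in \cT$, which is what Theorem \ref{tom2} requires as input.
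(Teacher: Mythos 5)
Your proposal is correct and follows exactly the route the paper intends: the corollary is stated immediately after the remark ``In view of Theorem \ref{thm1}, \ref{thm2} and \ref{tom2}...'' and its proof is precisely the composition of the pairwise reduction (Theorem \ref{thm1} or Corollary \ref{cor2}) with the phase unwrapping uniqueness of Theorem \ref{tom2}, as you describe. Your explicit remark that the finitely many almost-sure events indexed by $\cT$ intersect in an almost-sure event is a sensible bit of bookkeeping the paper leaves implicit.
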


{
\subsection{The hybrid approximation} \label{sec:hybrid} Let us sketch the extension of Theorem \ref{tom2} to the hybrid approximation \eqref{hybrid} with integers $q\ge 2$
for which the phase unwrapping problem is finding conditions on $\cT$ such that the relation
\beq
\Big(1+{\im\kappa f_\bt\over q}\Big)^q=\Big(1+{\im\kappa g_\bt\over q}\Big)^q,\quad \forall\bt \in \cT,\label{hybrid-unwrap}
\eeq
for $f,g\in O_n$ implies $g=f$. Eq.  \eqref{hybrid-unwrap} is equivalent to 
\beq
g_\bt-\om f_\bt={q\over \im\kappa} (\om-1)\label{hybrid-unwrap2}
\eeq
where $\om$ is a $q$-th root of unity. 

By the above analysis  an $\ep$-connected $\cT$ with a sufficiently small $\ep=\cO(1/n)$ implies that $\om$ in \eqref{hybrid-unwrap2} is independent of $\bt$. With a slight modification of the argument for Theorem \ref{tom2}, we arrive the conclusion that the right hand side of \eqref{hybrid-unwrap2} is zero and hence $\om=1$, implying $f=g$. 

}
\subsection{Tilt schemes for phase unwrapping}\label{sec:tilt}
In this section,  we consider a few examples as applications of  Theorem \ref{tom2} and Corollary \ref{cor6}. 

\subsubsection{Random tilt}
\label{sec:random-tilt}
We can satisfy condition \eqref{101} with overwhelming probability by randomly and independently selecting $n$ pairs of $(\alpha_l,\beta_l)$ that are distributed with probability density function bounded away from 0 and $\infty$  over any square contained in $[0,1]^2$ (see \cite{rand}). 

In the case of \eqref{x} with $\alpha_l,\beta_l\in [0,1)$, for instance, this random tilt series is distributed over the spherical rectangle of azimuth range $[0,\pi/4)$ and polar angle range 
$(\pi/4,\pi/2]$. We can enlarge the  random sampling area from this spherical rectangle to  the spherical triangle shown in Figure \ref{fig:sphere}. If the sampling is sufficiently dense, then the whole scheme 
would include the direction $(0,1,\beta_0)$, for some $\beta_0\in (0,1)$,  and $\ep$-connect to $(0,0,1)$ (which is included by assumption),

More generally,  the conditions of Theorem \ref{tom2} are satisfied by any tilt series of sufficiently dense sampling from a spherical triangle with vertexes  in three orthogonal directions (cf.  Remark \ref{rmk1:tom2}).   

Random schemes arise naturally in single-particle imaging. On the other hand, deterministic tilt schemes are
often employed in tomography.

\subsubsection{Deterministic tilt} 
\label{sec:deterministic-tilt}
\commentout{
 \begin{figure}
\centering
\caption{A sphere  representing all directions in the object frame. Any {\em two} of the three great circular arcs, when sufficiently sampled, provide enough information for unwrapping the object phases (cf. \eqref{1.60} with $\alpha=0$). This is an example of dual-axis tilting, each of range $\pi/2$. }
\label{fig:sphere}
\end{figure}
}

First, the single-axis tilting (with the conical angle $\pi/2$) is not covered by Theorem \ref{tom2} and contains  certain blindspot as exhibited  in the proof, i.e. it can not  completely resolve the ambiguities in phase unwrapping.

{Second, certain combinations of  \eqref{x}, \eqref{y}, \eqref{z}, can be made $\ep$-connected 
 (for sufficiently small $\ep$) in the following scheme:
 \beq\label{1.60}
\cT&=&\{(1,{l\over q},\alpha): l=0,\dots,q \}\cup\{ ({l\over q}, 1, \alpha):l=0,\dots,q \} \\
&&\cup \{(0,1,{l\over q} ):l=0,\dots,q \}
\cup\{ (0,{l\over q},1): l=0,\dots,q \},\quad q\in \IN\nn
\eeq
with a fixed $\alpha\in [0,1)$, where the first subset is from \eqref{x}, the second and third from \eqref{y}
and the  fourth from \eqref{z}. 
In the limit of $q\to\infty$, the scheme \eqref{1.60} has a continuous limit which can be illustrated more concretely in terms of  the spherical polar coordinates as in the following example. 

\begin{figure}
\centering
\subfigure[]{\includegraphics[width=8cm]{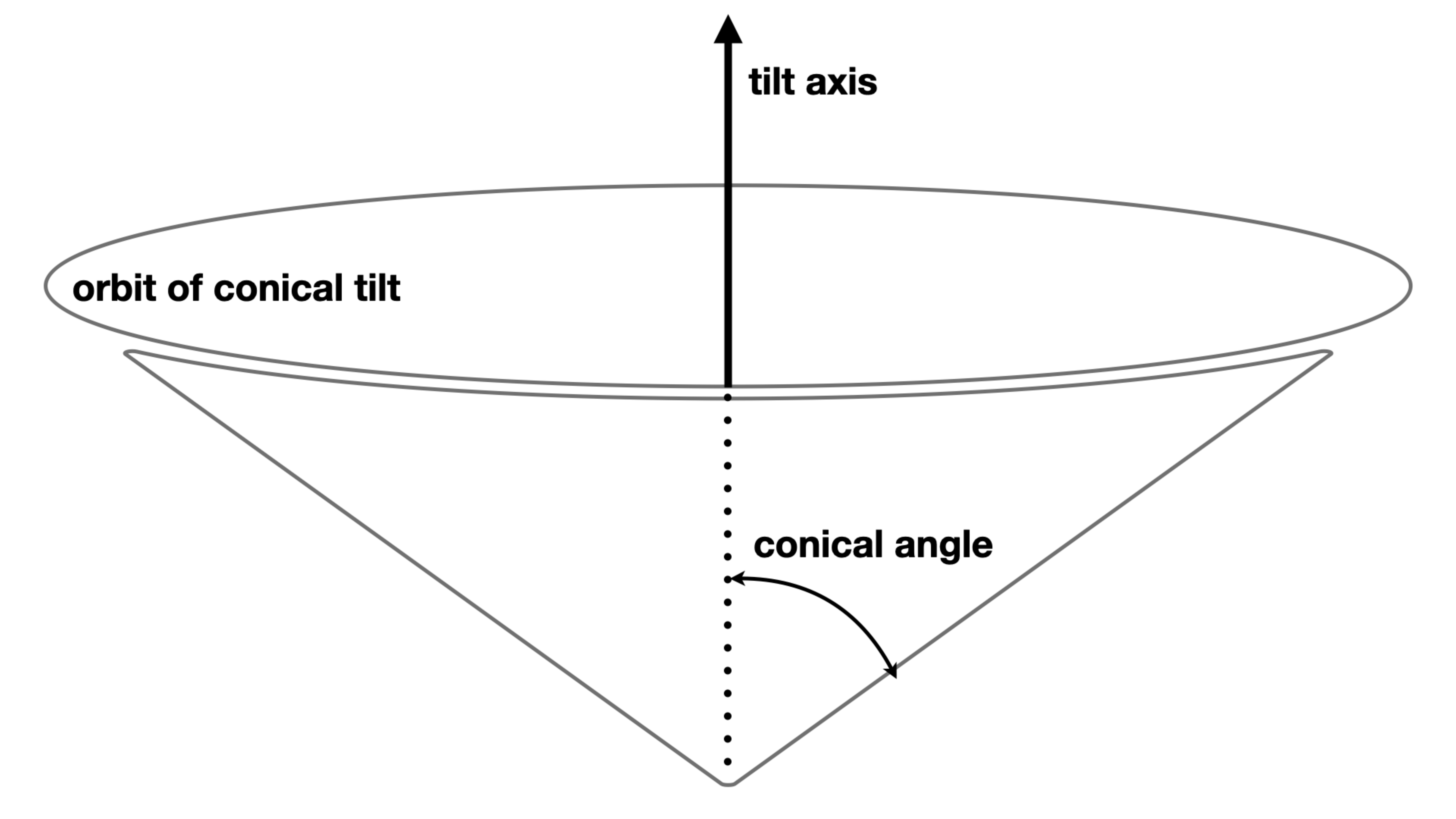}}\hspace{0.5cm}
\subfigure[]{\includegraphics[width=5cm]{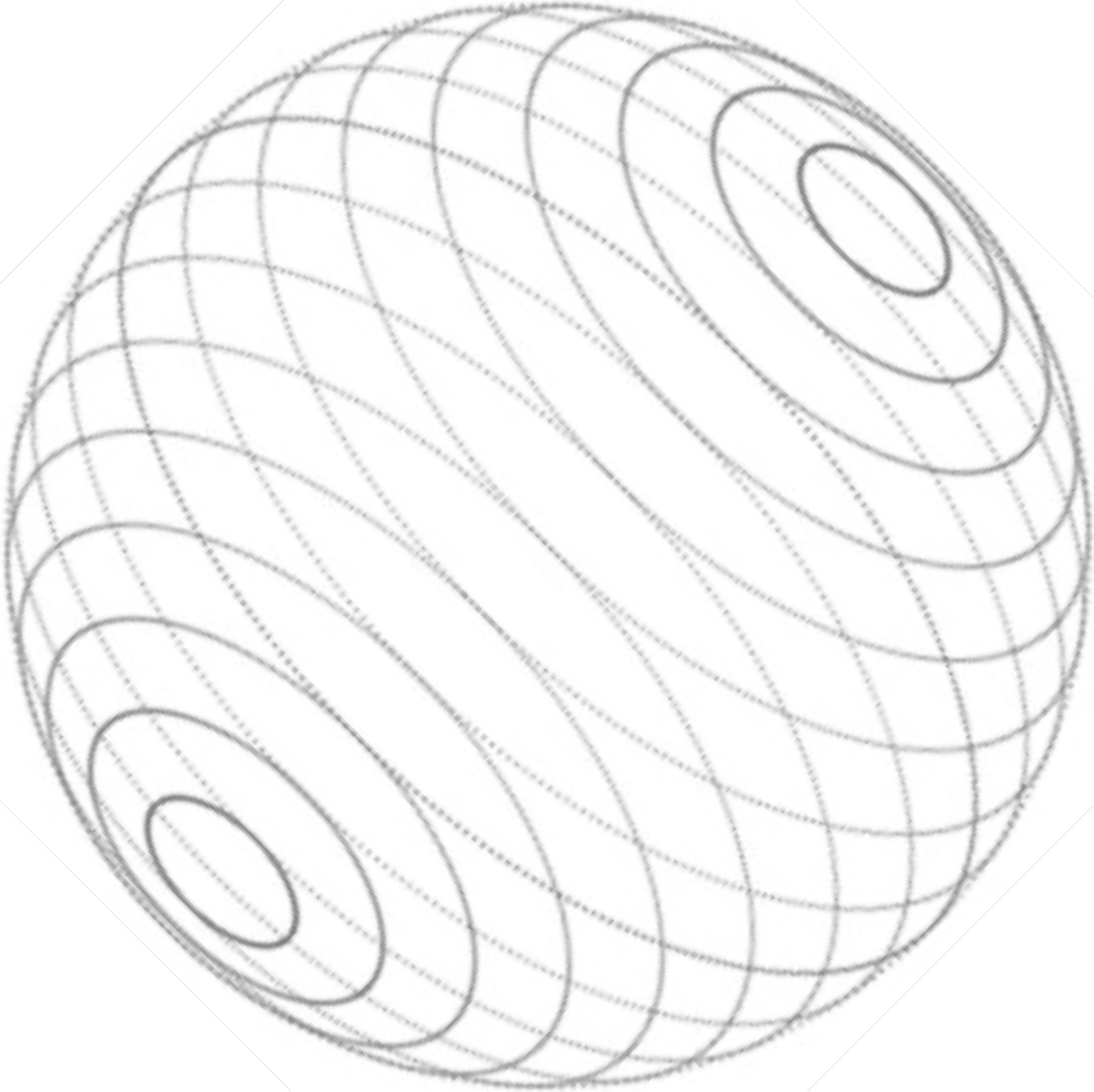}}
\caption{(a) Conical tilt geometry; (b) Conical tilt orbits  
of various conical angles about an axis of obliquity. A single-axis tilt orbit is a great circle, corresponding to a conical angle $\pi/2$, which can not uniquely unwrap phase. }
\commentout{
\caption{ (a) Lines of latitude representing data collection by conical tilting of different conical angles around the $z$-axis; (b) Conical tilts around an axis of obliquity 
which may be parametrized as in Theorem \ref{tom2}.  }
}
\label{fig:conical}
\end{figure}

 \begin{ex}\label{ex1}
{\rm The continuous limit of \eqref{1.60} consists of two circular arcs. 
The first arc, the limit of the first and second subsets in \eqref{1.60}, going from $(1,0,\alpha)$ to $(1,1,\alpha)$ and to $(0,1,\alpha)$,  is parametrized by the azimuthal angle $ \phi_z\in [0,\pi/2],$ at the polar angle  $\theta_z=\hbox{arc}\cot(\alpha)>\pi/4$ (since $\alpha\in [0,1)$) w.r.t. the polar axis $z$. 
The second arc, the limit of the third and fourth subsets in \eqref{1.60}, going from $(0,1,\alpha)$ to $ (0,1,1)$ and to $(0,0,1)$, is parametrized by the azimuthal angle $ \phi_x\in [\arctan (\alpha), \pi/2],$ at the polar angle $\theta_x=\pi/2$ w.r.t. the polar axis $x$. 
\commentout{The total length of the orbit is ${\pi\over 2\sqrt{1+\alpha^2}}+\arccos[ \alpha/\sqrt{1+\alpha^2}]$ which has the minimum value $ (\half+{1\over \sqrt{2}}){\pi\over 2} \approx 1.90$ at $\alpha=1$. }

In other words, the continuous limit of \eqref{1.60} is an union of a conical tilting (the first arc) of range $\pi/2$ at the conical angle $\hbox{arc}\cot(\alpha)$ and an orthogonal single-axis tilting (the second arc) of range $\hbox{arc}\cot(\alpha)$. 

\commentout{Single-axis tilting (maximal range $\pi$ due to the symmetry of projection) and conical tilting (maximal range  $2\pi$) are two simple data collection schemes in cryo-EM \cite{Frank06}. }

For  $\alpha=0$, the scheme is an orthogonal dual-axis tilting of a tilt range $\pi/2$ for each axis  \cite{dual-axis95}. The total length of the orbit is $\pi$.
\hfill $\square$. 
}
\end{ex}

Note that  the total radiation dose is proportional to the number of projections, which is $\cO(n)$ with a large constant (since $\ep=\cO(1/n)$ with a small constant), and, as $n\to \infty$, proportional to the orbit's total length on the unit sphere.

More conveniently,  instead of being split into a conical tilting and a single-axis tilting as in \eqref{1.60}, the schemes in Theorem \ref{tom2}  can be implemented as a single conical tilting (Figure \ref{fig:conical}) which  has a smooth circular orbit, instead of  a broken one. 

\begin{ex}\label{ex2}
{\rm 
Let $(1,0,0)$ and $(0,0,1)$, respectively, be the start and the end  of the orbit, with the midpoint $(0,1,0)$.  Any three directions of the conical tilt uniquely determine 
the direction of the tilt axis, $(1,1,1),$ with the conical angle, $\arccos(1/\sqrt{3})\approx 54.7^\circ,$ and  the tilt range,  $4\pi/3$.  The total length of the orbit is ${4\pi\over 9} \sqrt{10-2\sqrt{3}}\approx 3.57$ which is slightly larger than the
length  $\pi$ in Example \ref{ex1} for $\alpha=0$. 

The conical tilt going through $(1,0,\alpha)$, $(0,1,\alpha)$ and $(0,0,1)$
can be similarly constructed. We leave the details to the interested reader.   \hfill $\square$
}
\end{ex}

\section{Uniqueness with weak phase objects}
\label{sec:weak}

In this section we show that  for weak phase objects,  much less restrictive schemes than those of Theorem \ref{tom2} guarantee uniqueness of solution to 3D phase retrieval.

The following  is the uniqueness  result for discrete  projection tomography. 
\begin{thm}
\label{tom2-weak}
 Let  $\cT$ be {\bf any} one of the following sets:
\beq
\label{x'}&&\{(1, \alpha_l,\beta_l): l=1,\dots,n \}\\
\label{y'}&&\{(\alpha_l,1, \beta_l): l=1,\dots,n \}\\
\label{z'}&&\{(\alpha_l,\beta_l, 1): l=1,\dots,n \}
\eeq
under the condition \eqref{101}. 
\commentout{
with the property that 
\beq
\label{100}
&&(\alpha_0,\beta_0)\neq (0,0)\quad \& \\
&&\{\alpha_l \xi+\beta_l\eta: l=1,\dots, n\} \hbox{ has $n$ distinct members for each fixed $(\xi,\eta)\neq (0,0)$}. \label{101}
\eeq
} 
\commentout{Suppose that $\widehat f(0)\neq 0$ and 
that $g_\bt=e^{\im \theta_\bt} f_\bt, \bt\in \cT, $ for some constants $\theta_\bt\in \IR$.  Then 
$g=e^{\im \theta_0} f$  for some constant $\theta_0\in \IR$ independent of $\bt\in \cT$. 
}
Suppose that $g_\bt=e^{\im \theta_0} f_\bt,$ for some constant $\theta_0\in \IR$,
 independent of $\bt\in \cT$.  Then 
$g=e^{\im \theta_0} f$. 

\end{thm}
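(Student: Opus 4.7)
The plan is to reduce the theorem to showing that $h := g - e^{\im \theta_0} f$ vanishes identically. By hypothesis, $h_\bt \equiv 0$ on $\IZ_p^2$ for every $\bt\in \cT$, and my strategy is to exploit this via the Discrete Fourier Slice Theorem and a Vandermonde inversion, in direct parallel to the Vandermonde step in the proof of Theorem \ref{tom2}. Neither the $\ep$-connectedness nor the Itoh condition used there will be needed, because we are no longer modding out a $2\pi/\kappa$ ambiguity, so the argument will terminate as soon as a single Vandermonde inversion plus a support reduction have been performed.

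Without loss of generality I would take $\cT$ to be the set \eqref{x'}; the cases \eqref{y'} and \eqref{z'} are symmetric under the three versions of Proposition \ref{thm:slice}. Applying the slice theorem to $h$ yields
\[
\widehat h(-\alpha_l\eta-\beta_l\zeta,\eta,\zeta)=0,\quad l=1,\dots,n,\quad \eta,\zeta\in \IZ_p.
\]
I would then fix $(\eta,\zeta)\in \IZ_p^2\setminus\{0\}$ and use the decomposition \eqref{xg} to recast the $n$ constraints as an $n\times n$ Vandermonde system in $\{\widehat h_{\eta\zeta}(m)\}_{m\in \IZ_n}$ with nodes $\xi_l=-\alpha_l\eta-\beta_l\zeta$. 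The diversity condition \eqref{101} furnishes $n$ distinct nodes modulo $p$, so this Vandermonde is invertible and forces $\widehat h_{\eta\zeta}(m)=0$ for every $m\in \IZ_n$ and every $(\eta,\zeta)\neq 0$.

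To close the argument I would run a short support step. For each fixed $m\in \IZ_n$, the map $(\eta,\zeta)\mapsto \widehat h_{\eta\zeta}(m)$ is the 2D discrete Fourier transform on $\IZ_p^2$ of the slice $h(m,\cdot,\cdot)$; since it vanishes off the origin, that slice is constant on $\IZ_p^2$. But $h$ is supported in $\IZ_n^3\subsetneq \IZ_p^3$, so that constant must be zero, giving $h\equiv 0$ and $g=e^{\im\theta_0}f$. The one technical point to watch is that \eqref{101} must be read in the stronger sense given by the parenthetical remark following it—no nontrivial difference $(\alpha_l-\alpha_k,\beta_l-\beta_k)$ is a $\IZ_p$-fraction—so that the $\xi_l$ are distinct modulo $p$ and not merely as real numbers; this is precisely what makes the Vandermonde step go through and is the only delicate ingredient in the proof.
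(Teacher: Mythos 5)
Your proposal is correct and follows essentially the same route as the paper's proof: the discrete Fourier slice theorem converts the hypothesis into $n$ linear constraints on the partial Fourier transform $\widehat h_{\eta\zeta}(\cdot)$ of $h=g-e^{\im\theta_0}f$, and the diversity condition \eqref{101} (read, as you correctly note, in the mod-$p$ sense) makes the resulting Vandermonde system invertible. Your explicit final step handling the zero frequency $(\eta,\zeta)=(0,0)$ via the support constraint $\IZ_n^3\subsetneq\IZ_p^3$ is a welcome addition, since the paper's proof passes over that case silently.
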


\begin{rmk}
Theorem \ref{tom2-weak} is the finite, discrete counterpart of
the classical result that a compactly supported function  is uniquely determined by the projections in any infinite set of directions (\cite{Helgason}, proposition 7.8). 
\end{rmk}

\begin{proof} 
\commentout{By Proposition \ref{thm:slice}
$
\widehat f_{\bt}(0)=\widehat f_{\bt'}(0)=\widehat f(0)\neq 0,$
it follows from 
$\widehat g_{\bt}(0)=\widehat g_{\bt'}(0)$ that  $\theta_{\bt}=\theta_{\bt'}$. 
Namely, $
g_\bt=e^{\im\theta_0} f_\bt$ for some constant $\theta_0$ independent of $\bt\in \cT$. 
}

To fix the idea,  consider the case  \eqref{z'} for $\cT$.
 By the discrete Fourier slice theorem, we have
\beq
\label{1.14'}
\widehat{g}(\xi,\eta, -\alpha_l \xi-\beta_l \eta) &= &e^{\im \theta_0}\widehat{f}(\xi,\eta, -\alpha_l \xi-\beta_l \eta),\quad l=1,\dots, n,\quad  \xi,\eta\in \IZ. 
\eeq
In other words, for each $\xi,\eta\in \IZ,$ the corresponding partial Fourier transforms defined in \eqref{xg} satisfy 
\beq
\label{110}
\sum_{m\in \IZ_n}(\widehat{g}_{\xi\eta}(m)- e^{\im\theta_0}\widehat{f}_{\xi\eta}(m))e^{-\im 2\pi m( -\alpha_l \xi-\beta_l \eta)/p }=0,\quad l=1,...,n
\eeq
in terms of the notation for  partial Fourier transforms in the proof of Theorem \ref{tom2}. 
For each $\xi,\eta, $ \eqref{110} is a Vandermonde system which is nonsingular if and only if \eqref{101} holds.
This implies that
\[
\widehat{g}_{\xi\eta}(m)= e^{\im\theta_0}\widehat{f}_{\xi\eta}(m),\quad m\in \IZ_p,\quad\forall \xi,\eta\in \IZ. 
\]
Therefore, $g=e^{\im \theta_0}f$ as asserted.
\end{proof}

It may be interesting to compare  Theorem \ref{tom2-weak}  with Crowther's rough estimate
\beq
\label{123}
N={\pi \over 2}n 
\eeq
 for the number $N$ of projections needed for projection tomography with  a single-axis tilting of tilt range $\pi$
 (\cite{Jacobsen}, eq. (8.3)).

We have the following uniqueness results for 3D  phase retrieval  with a weak phase object.
\begin{cor}\label{cor7}  Let  $\cT$ be {\bf any} one of the direction sets in Theorem \ref{tom2-weak}. 
\begin{itemize} 

\item[(i)] Under the setting of Theorem \ref{thm3}, the 
$n$ pairs of coded and uncoded diffraction patterns corresponding to $\cT$  uniquely determine the weak phase object almost surely.


\item[(ii)] Under the setting of Corollary \ref{cor5},  the $n$ coded diffraction patterns corresponding to $\cT$  uniquely determine  the weak phase object with high probability (for $n\gg 1$). 
\end{itemize}
\end{cor}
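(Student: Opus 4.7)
The plan is to chain the per-direction pairwise reduction results with the projection-tomography uniqueness Theorem \ref{tom2-weak}. For each $\bt \in \cT$ I would first reduce the diffraction measurements to a projection identity of the form $g_\bt = e^{\im\theta_\bt} f_\bt$, then argue that $\theta_\bt$ is independent of $\bt$, and finally invoke Theorem \ref{tom2-weak} to recover $g = e^{\im\theta_0} f$.

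For part (i), at each $\bt \in \cT$ Theorem \ref{thm3} applied to the pair of coded and uncoded diffraction patterns yields $g_\bt = e^{\im \theta_\bt} f_\bt$ almost surely, with $\theta_\bt$ a priori depending on $\bt$. To remove this dependence I would evaluate the discrete Fourier slice identity of Proposition \ref{thm:slice} at the origin: $\widehat f_\bt(0) = \widehat f(0)$ and $\widehat g_\bt(0) = \widehat g(0)$ for every $\bt$, so $\widehat g(0) = e^{\im\theta_\bt}\widehat f(0)$. Provided $\widehat f(0) \neq 0$ (the same mild nonvanishing assumption already invoked in Theorem \ref{thm4} and Corollary \ref{cor4}), the phase $\theta_\bt = \theta_0$ is forced to be a single constant independent of $\bt$. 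A single application of Theorem \ref{tom2-weak} to the resulting $n$ identities $g_\bt = e^{\im\theta_0} f_\bt$ then concludes part (i).

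For part (ii), Corollary \ref{cor5} already delivers the simultaneous reduction $g_\bt = e^{\im\theta_0} f_\bt$ for every $\bt \in \cT$ with the same constant $\theta_0$, with probability at least $1 - |\cT|\, n^2 |(\tilde b - \tilde a)/(2\pi)|^{s/2}$. Since $|\cT| = n$ in our setting, this lower bound becomes $1 - n^3 |(\tilde b - \tilde a)/(2\pi)|^{s/2}$, which tends to $1$ exponentially in the minimum projection sparsity $s$; for a generic 3D object $s$ is at least linear in $n$, so the bound is high probability for $n \gg 1$. Once the simultaneous reduction holds, Theorem \ref{tom2-weak} again yields $g = e^{\im\theta_0} f$.

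The technical content of Corollary \ref{cor7} is essentially packaging: the heavy lifting lives in Theorem \ref{thm3}, Corollary \ref{cor5}, and Theorem \ref{tom2-weak}. The only subtle step is the global phase-alignment argument in part (i), which is a standard consequence of the Fourier slice theorem at zero frequency under the assumption $\widehat f(0) \neq 0$; no step appears to be a serious obstacle.
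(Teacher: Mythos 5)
Your proposal is correct and is essentially the paper's intended argument: the paper states Corollary \ref{cor7} without a separate proof precisely because it is the chaining of the per-direction reductions (Theorem \ref{thm3}, resp.\ Corollary \ref{cor5}) with Theorem \ref{tom2-weak}, and your phase-alignment step via $\widehat f_\bt(0)=\widehat f(0)$ is exactly the device the paper uses in the proofs of Corollary \ref{cor4} and Corollary \ref{cor5}. The one point worth highlighting is that for part (i) the alignment of the constants $\theta_\bt$ genuinely requires $\widehat f(0)\neq 0$ (or nonvanishing of $\widehat f$ at some common point of the Fourier slices), which is not literally part of ``the setting of Theorem \ref{thm3}''; you are right to make this hypothesis explicit, since without it the $\theta_\bt$ need not agree across directions and only a direction-dependent phase ambiguity survives.
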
 

In contrast to Corollary \ref{cor7} (ii), the setting of Corollary \ref{cor4} requires an extra coded diffraction pattern to remove
the isotropy ambiguity.
\begin{thm}\cite{Born-tomo}\label{thm:born}  Let  $\cT$ be {\bf any} one of the following direction sets 
\beq
\label{x''}&&\{(1, \alpha_l,\beta_l): l=1,\dots,n \}\cup \{(0,\alpha_0,\beta_0)\}\\
\label{y''}&&\{(\alpha_l,1, \beta_l): l=1,\dots,n \}\cup \{(\alpha_0,0,\beta_0)\}\\
\label{z''}&&\{(\alpha_l,\beta_l, 1): l=1,\dots,n \}\cup \{(\alpha_0,\beta_0,0)\}
\eeq
under the condition \eqref{101} and $(\alpha_0,\beta_0)\neq (0,0)$. Then in the setting of Corollary \ref{cor4},
$g=e^{\im\theta_0} f$  for some constant $\theta_0\in \IR$ almost surely.
\end{thm}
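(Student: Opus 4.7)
The plan is to apply Corollary \ref{cor4} to the full set $\cT$, handle the two alternatives it produces, and conclude via Theorem \ref{tom2-weak}. By Corollary \ref{cor4}, whose hypotheses are exactly what ``the setting of Corollary \ref{cor4}'' packages, almost surely one of two alternatives holds: either (A) there exists a single $\theta_0 \in \IR$ such that $g_\bt = e^{\im\theta_0} f_\bt$ for every $\bt \in \cT$, or (B) $f^*_\bt = f^*_{\bt'}$ for every non-parallel pair $\bt, \bt' \in \cT$.

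In alternative (A) I would take the subset $\cT_0 \subset \cT$ consisting of the first $n$ directions of the chosen scheme; for \eqref{z''} this is $\{(\alpha_l,\beta_l,1): l=1,\dots,n\}$, and \eqref{x''}, \eqref{y''} are analogous. By assumption $\cT_0$ satisfies condition \eqref{101}, so the hypotheses of Theorem \ref{tom2-weak} are met and $g = e^{\im\theta_0} f$ follows at once. In alternative (B) the extra direction is the new ingredient that collapses (B) back to (A). In (B) all projections $f^*_\bt$ coincide with one common 2D array $\phi$, so the data $|\cF(\mu \odot g^*_\bt)|^2 = |\cF(\mu \odot \phi)|^2$ becomes, for each $\bt$, a 2D coded-diffraction phase retrieval problem for $g^*_\bt$ with reference $\phi$. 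Under the Mask Assumption, a single-coded-pattern uniqueness result of the flavor of Theorem \ref{thm:born1} (the non-line support requirement being inherited from the hypotheses of Theorem \ref{thm4}) yields $g^*_\bt = e^{\im\theta_\bt}\phi$ for some $\theta_\bt \in \IR$. Evaluating at the origin via the Fourier slice theorem and using $\widehat f(0) = \widehat\phi(0) \neq 0$ gives
$$\widehat g(0) = \widehat g_\bt(0) = e^{\im\theta_\bt}\widehat\phi(0), \qquad \bt \in \cT,$$
which forces $\theta_\bt$ to be independent of $\bt$; call it $\theta_0$. This returns us to alternative (A), and the previous paragraph finishes the argument.

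The main obstacle I expect is the 2D phase-retrieval step inside alternative (B): one must cite a single-coded-pattern uniqueness result tailored to the present hypotheses (complex-valued $\phi$ whose support is not contained in a line) and then scrupulously track the sample-delivery translations $\bl_\bt, \bl'_\bt$ so that the 2D phase identity $g^*_\bt = e^{\im\theta_0} f^*_\bt$ lifts to the genuine 3D identity $g = e^{\im\theta_0} f$ rather than being absorbed into an unresolved shift. The role of the extra direction $(\alpha_0,\beta_0,0)$ (resp.\ $(0,\alpha_0,\beta_0)$, $(\alpha_0,0,\beta_0)$) is precisely to provide an additional Fourier slice transverse to the tilted family of slices of the main $n$ directions, which anchors the global phase across $\cT$ in a situation where (B) would otherwise leave it undetermined.
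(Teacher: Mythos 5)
Your handling of alternative (A) is fine and matches what the paper does implicitly: restrict to the $n$ tilted directions, invoke condition \eqref{101}, and apply Theorem \ref{tom2-weak}. The gap is in alternative (B). You try to \emph{upgrade} (B) back to (A) by treating each direction as a stand-alone 2D coded-diffraction phase retrieval problem and citing ``a single-coded-pattern uniqueness result of the flavor of Theorem \ref{thm:born1}.'' That result is only available under the sector condition, which is not among the hypotheses here; without it, a single coded pattern with a known mask leaves exactly the ambiguities of Lemma \ref{prop2} (a shift $\mbm_\bt$ entangled with the mask, or a conjugate reflection). Indeed, inspecting the proof of Theorem \ref{thm4} in Appendix \ref{appD}, alternative (B) is precisely the residual case $\mbm_\bt=\mbm_{\bt'}\neq 0$, in which one only gets $g^*_\bt(\bn)=e^{\im\theta_\bt}f^*_\bt(\bn+\mbm_\bt)\lamb_\bt(\bn+\mbm_\bt)$ --- a shifted, mask-modulated copy, \emph{not} $g^*_\bt=e^{\im\theta_\bt}f^*_\bt$. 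So (B) cannot be converted into (A) by re-examining the same data; the step ``yields $g^*_\bt=e^{\im\theta_\bt}\phi$'' would fail.

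What the paper does instead is show that (B) is \emph{impossible}. If $h_\bt:=f^*_\bt$ is the same 2D array for all $\bt\in\cT$, then the Vandermonde argument from the proof of Theorem \ref{tom2} (using the diversity condition \eqref{101} over the $n$ tilted directions) forces the 3D object to be supported on a coordinate plane; the extra direction $(0,\alpha_0,\beta_0)$ (resp.\ its analogues) lies \emph{in} that plane, so its projection --- and hence the common projection $h_\bt$ for every $\bt$ --- is supported on a line. This contradicts the standing hypothesis of Theorem \ref{thm4} that no $\supp(f_\bt)$ is a subset of a line, so only alternative (A) survives. Note also that this is the actual role of the extra direction: it is not there to ``anchor the global phase'' (that is done by $\widehat f(0)\neq 0$ inside Corollary \ref{cor4}), but to produce the line-support contradiction that kills alternative (B).
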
 

\begin{proof}
To  rule out the second alternative in Corollary \ref{cor4} that $f^*_{\bt}
=f^*_{\bt'}, \forall\bt,\bt'\in \cT,
$
define $h_\bt:=f^*_\bt$ which is independent of $\bt\in \cT$.
Now applying the analysis in the proof of Theorem \ref{tom2} to this $h_\bt$ for the scheme, e.g. \eqref{x''}.  The argument up to \eqref{amb2}
leads to the conclusion that
the projection of $h$ in the direction of $(0,\alpha_0,\beta_0),$ with $\alpha_0\neq 0, $ is part of  a line segment
and hence  $h_\bt$ is a line object for all $\bt\in \cT$. This violates
the assumption in Corollary \ref{cor4} that no projection is
part of a line. This implies that the first alternative of Corollary \ref{cor4} holds, i.e. ${g_\bt}= e^{\im \theta_0} f_\bt, \quad\forall \bt\in \cT.$
\end{proof}

\section{Noise robustness}\label{sec:num}
    \begin{figure}
  \centering
\subfigure[2D image]{\includegraphics[width=4.5cm]{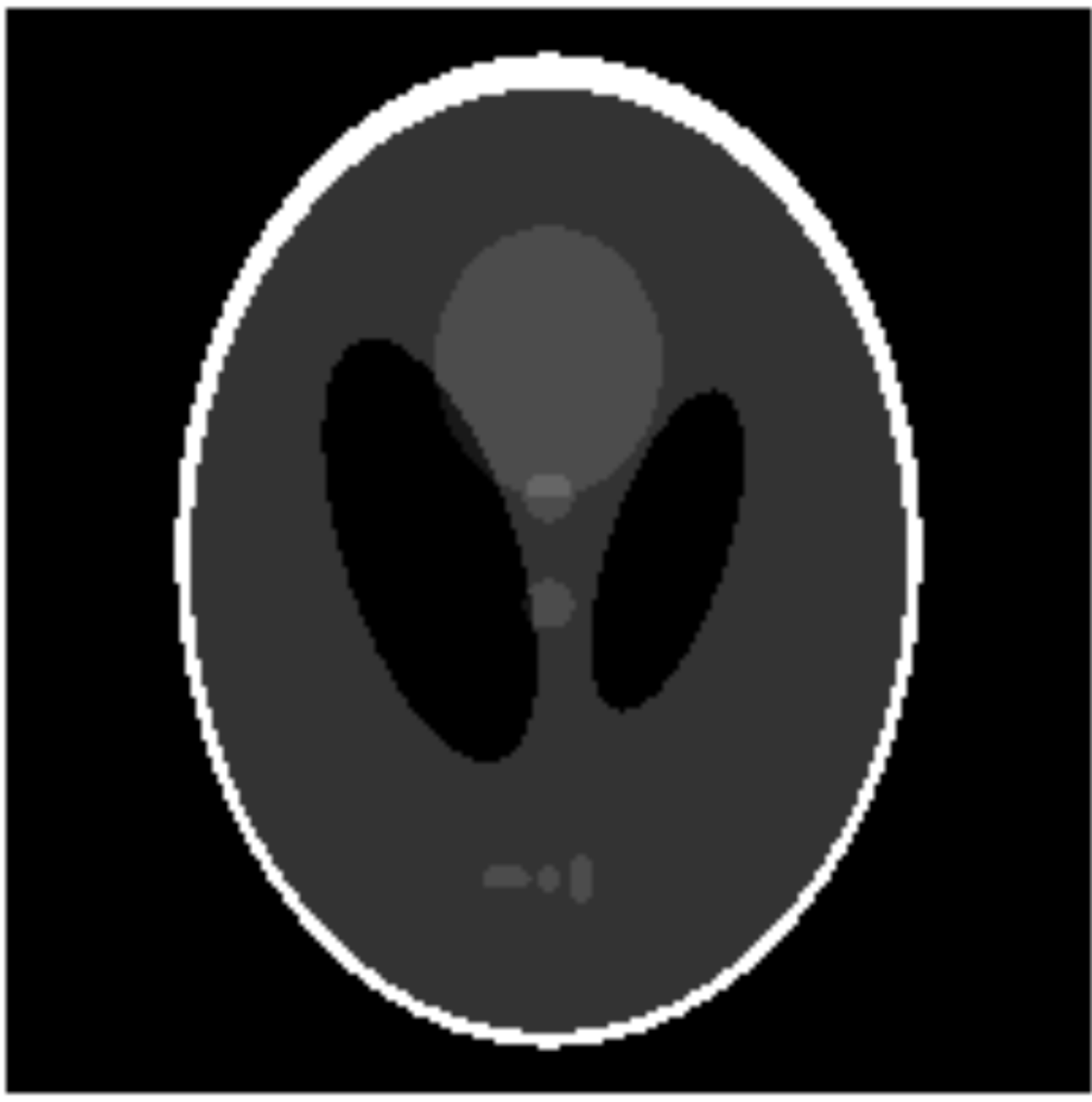}}\hspace{1cm}\hspace{1cm}
\subfigure[3D representation]{\includegraphics[width=3.5cm]{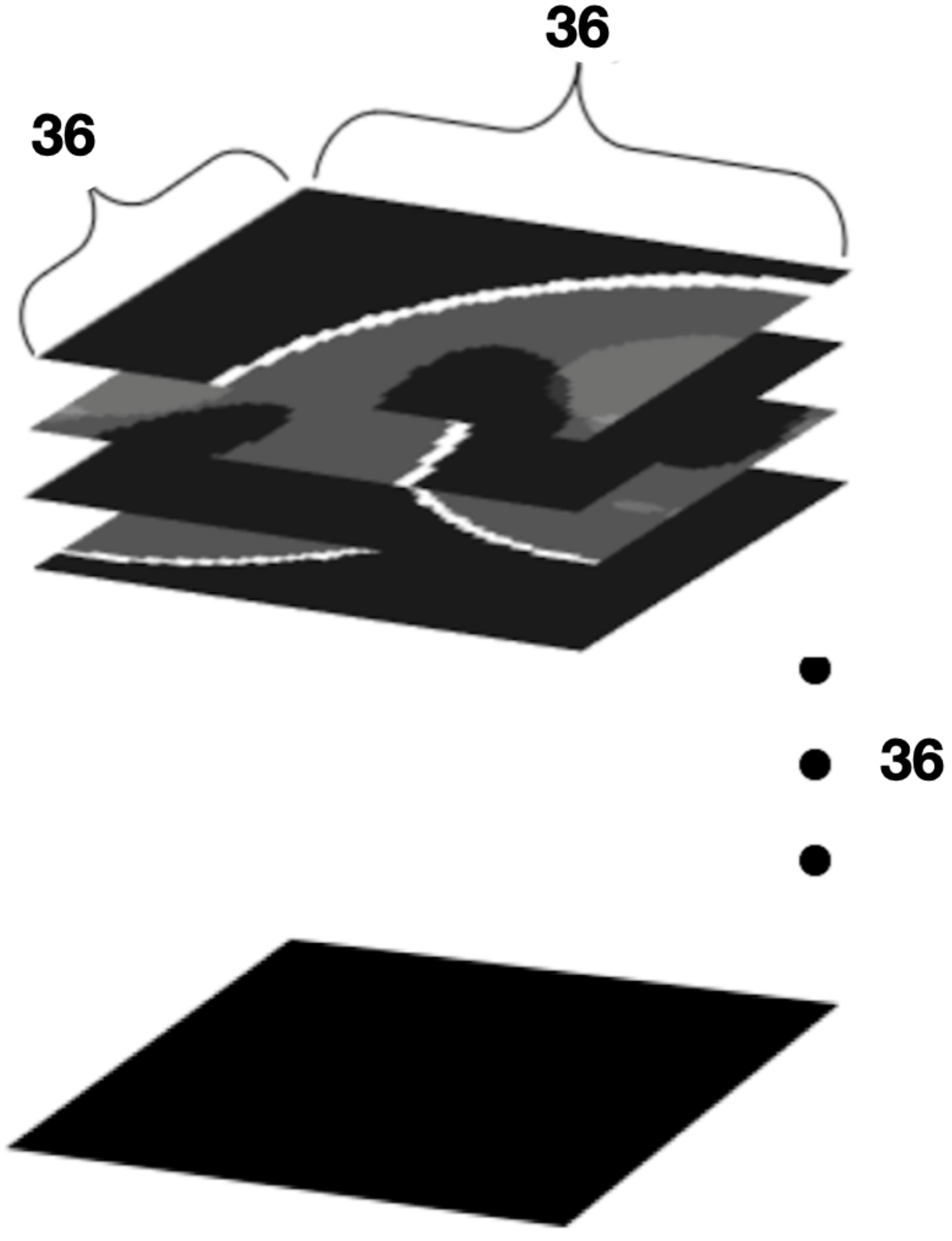}}
\caption{$216\times 216$ image  sliced and stacked unto a $36\times 36\times 36$ object. }
\label{fig:3D}
\end{figure}

Let us turn to the shot noise issue not addressed by the preceding uniqueness results.
At present, there are few theoretical results on noise robustness in phase retrieval except for
 simplified models \cite{Elser09}.
 
In practice, noise stability has as much to do with the reconstruction method as the information content of the given dataset. However, assessing and optimizing algorithms for 3D reconstruction from a large number of snapshots is by itself a challenging ongoing task  \cite{SSNR02}. Herein, 
we limit ourselves to testing the noise robustness of alternating projection  \cite{AP-phasing} (also known as Error-Reduction \cite{Fie82} or  Gerchberg-Saxton \cite{GS72} algorithm) in the case of weak phase objects to avoid the phase unwrapping problem altogether. 

First  the ideal, noiseless detection process with a weak-phase object can be written as $b^2=|\cA f_*|^2$ in terms of a measurement matrix $\cA$ representing the process $
 f\stackrel{\cA}{\longrightarrow} 
(\widehat  f_\bt\star \widehat \mu)_{\bt\in \cT}. 
 $

In alternating projection (AP), the reconstruction procedure alternates between   the object-constraint projection $\cP_1=\cA\cA^\dagger$, where $A^\dagger$ is the pseudo-inverse of  $\cA$,  and the data-constrained  projection $
 \cP_2 h=b\odot \hbox{sgn}(h),$ 
 where $\hbox{sgn}(h)$ is the phase factor vector of $h$. 
 AP can be represented as $(\cP_1\cP_2)^kf_0, k\in \IN,$ with a random initialization $f_0$ \cite{AP-phasing}. 
 
Notably  AP is the gradient descent with unit step size 
for 
\beq
\label{105}
 \min_{u}\|b-|\cA u|\|
\eeq
which is a simplified asymptotic surrogate for maximizing  the Poisson  log-likelihood function \cite{DR-phasing}. 
 \commentout{
 \begin{algorithm}
\SetKwFunction{Round}{Round}
\textbf{Input:} $f^1,  \cT$ and  $b$.
\\
\textbf{Loop:}\\
\For{$k=1:k_{\textup{max}}-1$}
{
Compute  $\cA f^{k}=\big(\widehat\mu \star\widehat f_\bt^{k}\big)_{\bt\in \cT}$.\\
Set $h^{k}=\big(b\odot\sgn(\widehat\mu \star\widehat f_\bt^{k})\big)_{\bt\in \cT}$ and $g^{k}=\cQ^\dagger h^{k}$.\\
Compute $f^{k+1}=\cR^\dagger g^{k}$.  \\
 }
{\bf Output:} {$  \mbf^{k_{\textup{max}}}$.}

\caption{\textbf{AP method}}
\label{AP-alg}
\end{algorithm}
 }

\subsection{Noise-to-signal ratio (NSR)}

To introduce the Poisson noise into
 our set-up, let $\widetilde b^2$ be the Poisson random vectors with the mean $sb^2$ where
the adjustable scale factor $s>0$ represents the overall strength of object-radiation interaction. 

The noise level is measured by the  noise-to-signal ratio (NSR)
\beq
\hbox{NSR}:= {\hbox{\# total average non-signal photons}\over \hbox{ \# total signal photons}}. \label{55}
\eeq
Let $z=(z_j):=\widetilde b^2-sb^2$. The total average noise photon count is proportional to 
  \beq\label{L1}
\sum_j  \IE |z_j|&\hbox{or more conveniently}& \sum_j \sqrt{\IE|z_j|^2}=\Big\| \sqrt{\hbox{var}(\widetilde b^2)}\Big\|_1
\eeq
 where $\|\cdot\|_1$ denotes the L1-norm. 
 In other words, the NSR \eqref{55} can be conveniently calculated as 
\commentout{
     \beqn\label{nsr3}
  \hbox{NSR}_j:=  {\| (\IE(\widetilde b_j^2-{s} b_j^2)^2)^{1/2}\|_1\over s \|b_j\|^2}, \quad j=1,\cdots, m,
   \eeqn
 for each diffraction pattern and as 
 }
        \beq\label{nsr}
  \hbox{NSR}&:=& {\Big\| \sqrt{\hbox{var}(\widetilde b^2)}\Big\|_1\over  \|\IE(\widetilde b^2)\|_1}= {\|{b}\|_1\over \sqrt{s}\|b^2\|_1}.\label{56}
   \eeq

\subsection{Numerical results}
    \begin{figure}
  \centering
\subfigure[$R$ vs NSR $\in$ (0,1)]{\includegraphics[width=5cm]{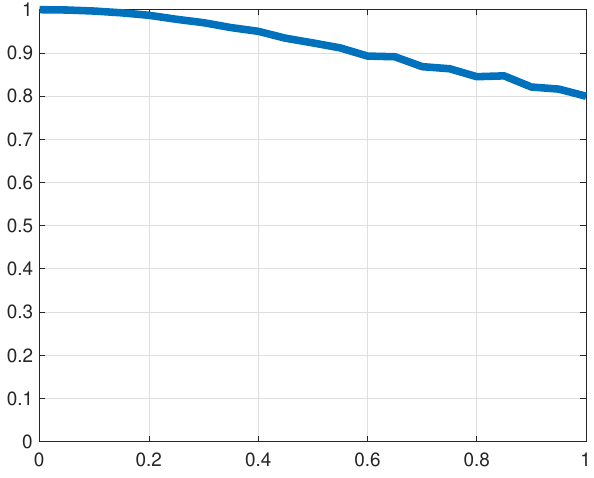}}\quad
\subfigure[NSR=0.5]{\includegraphics[width=4cm]{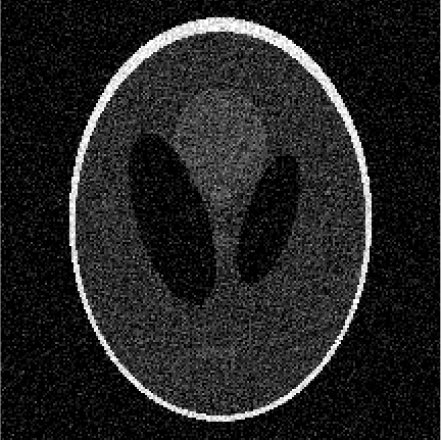}}\quad
\subfigure[NSR=1]{\includegraphics[width=4cm]{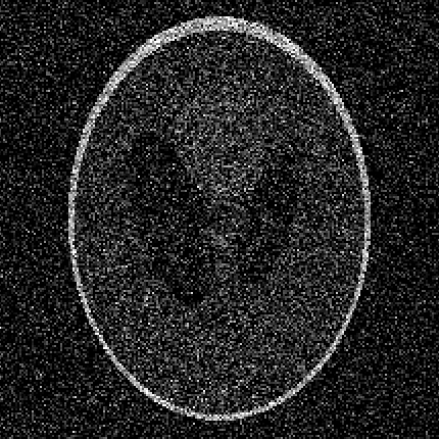}}
\caption{Randomly initialized AP reconstruction: (a) correlation vs NSR;  (b) flattened reconstruction at NSR =0.5 \& (c) NSR = 1. }
\label{fig:AP}
\end{figure}

In our simulations, we take the mask phase $ \phi$ to be an independent  uniform random variable over $[0,2\pi)$ for each pixel.

 We construct the real-valued 3D object from the $216\times 216$ phantom (Fig. \ref{fig:3D} (a)) by partitioning the phantom into 36 pieces, each of which is $36\times 36$ and stacking them into a $36\times 36\times 36$ cube (Fig. \ref{fig:3D}(b)).  
 This  is to facilitate the ``eye-ball" metric
for  qualitative evaluation of the reconstruction. For a quantitative metric,  we use the  absolute correlation
\[
R(f,f_*):={|\bar f\cdot f_*|\over \|f\| \|f_*\|}
\]
 between the true object $f_*$ and the reconstruction $f$.
 \commentout{ which is related to the relative error as \beqn
 \|f_*f_*^* -ff^*\|_{\rm F}
=\sqrt{2(\|f_*\|^4- |f^*f_*|^2)}
\eeqn
}

To avoid  the missing cone problem in tomography \cite{Frank06}, we use the random tilt scheme 
comprised  of the union of \eqref{x''}, \eqref{y''} and \eqref{z''}, in the form
 \beq
\label{tset}\cT&=&\{ \bt_i=(1,\alpha_i,\beta_i) \}_{i=1}^{n}\cup
\{ \bt_i=(\alpha_i,1, \beta_i) \}_{i=1+ n}^{2 n}\cup
\{ \bt_i=(\alpha_i, \beta_i,1) \}_{i=2 n+1}^{3n}\\
&&\hbox{with $\alpha_i,\beta_i,i=1,\cdots,3n,$ randomly chosen from
$(-1,1)$}.\nn
\eeq
which has  a significantly larger range of possibility than that portrayed in Figure \ref{fig:sphere}. 
  To reduce the burden of computer memory, we do not oversample the Fourier transform in our numerical simulation, i.e.
$b,\widetilde b \in \IR^{mp^2}$ with $m=3n$. Consequently, the total amount of measurement data is about $3/4$ of that assumed in Theorem \ref{tom2-weak}. 

To take advantage of the prior information that the object is real-valued,  we use $\cP_1=\cA\Re\cA^\dagger$ in AP reconstruction  where $\Re$ is the projection unto the real-part.  As shown in Figure \ref{fig:AP}, randomly initialized AP with the data set $\eqref{tset}$ is capable of handling high levels of noise.

The convergence rate, however, can be further improved by using more sophisticated algorithms  (see \cite{DR-phasing},  \cite{acta-phase} and references therein). 
When applicable, various sparsity priors  can enhance numerical reconstruction's  robustness to noise  \cite{sparsePR, GESPAR,sparse-Fienup, TV-tomo,AMP15}. 
Finally, the coded aperture itself needs not be known in advance and can be simultaneously calibrated by effective algorithms with a sufficiently large set $\cT$   \cite{blind-ptycho,phase-uncertain}.


\section{Discussion and extension}\label{sec:final}

\commentout{
Let us recap what has been accomplished above:
\begin{itemize}
\item Weak-phase objects: uniqueness for known orientations 

\begin{itemize}
\item 1 random mask: $n+1$ illuminations
\item 1 random mask + 1 uniform mask: $n$ illuminations.
\end{itemize}

\item Strong-phase objects $\rightarrow$ phase projection tomography $\sim$ phase unwrapping. 

\item Phase unwrapping uniqueness: $\cO(n)$ illuminations under Itoh condition. 

\end{itemize}
}

While our results are primarily aimed at diffraction tomography with known orientations, the idea of reducing phase retrieval to (phase) projection tomography by pair-wise measurements (Theorem \ref{thm1}, \ref{thm2}, \ref{thm3} and \ref{thm4}) have  potential applications to single-particle imaging which typically is subject to higher level of measurement uncertainties. 

Specifically  the proposed measurement schemes  embodied in Figure \ref{fig3} enable the reconstruction of (phase) projections
 in various unknown orientations $\bt$. 
Since it is easier to mitigate measurement uncertainties with projection data than diffraction data,  the measurement uncertainties in X-ray and electron experiments with small-sized objects such as nano-crystals and macromolecules may be handled by projection-based methods  (see  \cite{ variable-size14,variable-form21,serial15, fixed17,translational-disorder19}).  

\commentout{
While the schemes  in Theorem \ref{thm1}, \ref{thm3} and \ref{thm:born1} are numerically effective and stable (Section \ref{sec:num}), the schemes in Theorem \ref{thm2} and \ref{thm4} are expected to be less so
in view of the fact that the information overlap between projections in two directions is 
just a line in the Fourier domain (the common line). The remedy would be to include
more than two directions  in the reconstruction of (phase) projections.  How many directions are necessary? Theorem \ref{thm:born} suggests that the answer is $n+1$ directions. 
}

For example, for a weak phase object, classification and alignment can then be carried out with  single-particle cryo-EM methods such as  the common-line methods \cite{Crowther,common-line87, common-line11}, the maximum-likelihood  methods \cite{max-like98,max-like13} and the Bayesian methods \cite{Bayes02,Bayes12-1} based on projection data instead of diffraction patterns \cite{crypto-tomo03, common-line08, crypto-tomo09, Bayes-diffract09,manifold12}.

For a strong phase object, however, there remain several hurdles. The foremost  is developing effective numerical algorithms for 3D phase unwrapping which is not as well studied as 2D phase unwrapping \cite{2D-unwrap}.  

After the alignment of  the phase projections, it is tempting to apply the 2D phase unwrapping methods and try to recover the projection from the phase projection  in each direction. The projection in each direction, however, usually violate the 2D Itoh condition even when the 3D Itch condition holds.  If the 2D Itoh condition fails to hold at a large number of  pixels, then 2D phase unwrapping becomes more complicated, requiring additional prior information. Moreover, the unwrapped phases for different directions must be consistent with one another. Hence the 3D phase unwrapping problem should be approached  with all phase projections together instead of one direction at a time.

A similar approach called {\em ptychographic tomography}  has been proposed and studied numerically \cite{ptycho-tomo10, strong-phase11,ptycho-tomo16,ptycho-tomo17,ptycho-tomo18}.  The difference from  the present work is that in ptychography, instead of simultaneous pairwise measurements of the whole object,  multiple significantly overlapped diffraction patterns are measured in each direction by sequentially shifting the aperture over different parts of the object. As a consequence, ptychographic tomography  is limited to sizable objects capable of sustaining  multiple intense illuminations and hence not suitable for, e.g. single particle imaging.

\section*{Acknowledgments}
The research is supported by the Simons Foundation grant FDN 2019-24 and the NSF grant CCF-1934568. 
I thank Prof. Pengwen Chen of National Chung-Hsing University, Taiwan,  for producing Figure \ref{fig:AP}.

\begin{appendix}

\section{Proof of Theorem \ref{thm1}}\label{appA}

The following result is our basic tool. 
\begin{lem}\cite{unique}\label{prop1} Let $\mu$ be the phase mask (i.e. $\mu(\bn)=\exp[\im \phi(\bn)], \phi(\bn)\in \IR, \forall\bn$) with independent, continuous random variables $\phi(\bn)\in \IR$. If $e^{\im \kappa g_\bt}\odot \nu$   produces
the same diffraction pattern as $e^{\im \kappa f_\bt}\odot \mu$, then  for  some $\mbm_\bt \in \IZ^2, \theta_\bt\in \IR$
\beq
\label{1.70}
e^{\im \kappa g_\bt(\bn)}\nu(\bn)&=&\hbox{\rm either} \quad 
 e^{\im \theta_\bt} e^{\im\kappa f_\bt(\bn + \mbm_\bt  )}\mu(\bn + \mbm_\bt )\\
 && \quad
 \hbox{\rm or} \quad 
 e^{\im \theta_\bt} e^{-\im\kappa \overline{ f_\bt(-\bn +\mbm_\bt  )}}\overline{\mu(-\bn +\mbm_\bt  )}\nn
\eeq
for all $ \bn$.
\end{lem}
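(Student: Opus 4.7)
The plan is to lift the equality of diffraction patterns to an identity between Laurent polynomials in two variables, and then exploit unique factorization in that polynomial ring together with the almost-sure irreducibility of a random-phase-masked polynomial. Define
\[
F(z_1,z_2) := \sum_{\bn} e^{\im\kappa f_\bt(\bn)}\mu(\bn)\, z_1^{n_1}z_2^{n_2}, \quad G(z_1,z_2) := \sum_{\bn} e^{\im\kappa g_\bt(\bn)}\nu(\bn)\, z_1^{n_1}z_2^{n_2},
\]
and set $F^\sharp(z):=\overline{F(1/\bar z_1,1/\bar z_2)}$ and $G^\sharp$ analogously. Equality of the diffraction patterns, as computed via \eqref{auto} with oversampling at a Nyquist grid \eqref{nyquist1}, is equivalent to $|F|^2=|G|^2$ on the $2$-torus $\{|z_1|=|z_2|=1\}$; by analytic continuation this promotes to the Laurent-polynomial identity $F\, F^\sharp = G\, G^\sharp$.

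The crux is the almost-sure irreducibility of $F$. Since the phases $\phi(\bn)$ in $\mu(\bn)=e^{\im\phi(\bn)}$ are independent continuous random variables, the coefficients of $F$ (after clearing negative exponents with a monomial) are drawn from an absolutely continuous distribution on $\IC^{|\supp(F)|}$, while the set of reducible two-variable polynomials of a prescribed support is a proper algebraic subvariety of the coefficient space and hence has measure zero under any such distribution. This is in sharp contrast to the one-variable case, where nontrivial factorizations always exist, and is precisely what underwrites the random-illumination uniqueness theory of \cite{unique}. Granted irreducibility of $F$, unique factorization in $\IC[z_1^{\pm 1},z_2^{\pm 1}]$ (whose units are the nonzero monomials $c\, z_1^{m_1}z_2^{m_2}$) forces $G$ to be associate to either $F$ or $F^\sharp$: that is, either $G(z)=c\, z^{-\mbm_\bt}F(z)$ or $G(z)=c\, z^{\mbm_\bt}F^\sharp(z)$ for some $\mbm_\bt\in\IZ^2$ and $c\in\IC\setminus\{0\}$. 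Comparing $|F|=|G|$ on the torus forces $|c|=1$, so $c=e^{\im\theta_\bt}$ for some $\theta_\bt\in\IR$.

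Extracting the coefficient of $z^{\bn}=z_1^{n_1}z_2^{n_2}$ from each associate relation then translates it into a pointwise identity. The first branch directly yields $e^{\im\kappa g_\bt(\bn)}\nu(\bn)=e^{\im\theta_\bt}e^{\im\kappa f_\bt(\bn+\mbm_\bt)}\mu(\bn+\mbm_\bt)$. In the second branch, the involution $z\mapsto 1/\bar z$ sends the coefficient of $z^\bn$ in $F$ to the complex conjugate of its coefficient at $-\bn$, so that after the change of variables $\bn\mapsto \mbm_\bt-\bn$ one obtains the conjugate-reflection alternative in \eqref{1.70}. The only substantive step is the almost-sure irreducibility claim; everything else is routine polynomial bookkeeping.
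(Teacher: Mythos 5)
The paper does not actually prove this lemma: it is imported verbatim from \cite{unique}, and the appendix only remarks that it is a special case of a more general result there. Your reconstruction follows the same route as that source --- oversampling $\Rightarrow$ autocorrelation $\Rightarrow$ the Laurent--polynomial identity $FF^\sharp=GG^\sharp$ $\Rightarrow$ unique factorization in $\IC[z_1^{\pm1},z_2^{\pm1}]$ $\Rightarrow$ the two trivial ambiguities --- and all of the ``polynomial bookkeeping'' (units are unimodular monomials, $HH^\sharp=1$ forces $H=e^{\im\theta}z^{\mbm}$, coefficient extraction in the reflected branch) is correct.

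The genuine gap is in the one step you yourself flag as the crux. You assert that the coefficients of $F$ ``are drawn from an absolutely continuous distribution on $\IC^{|\supp(F)|}$.'' For the \emph{phase} mask of the Mask Assumption this is false: the coefficient at $\bn$ is $e^{\im\kappa f_\bt(\bn)}e^{\im\phi(\bn)}$, whose modulus $e^{-\kappa\,\Im f_\bt(\bn)}$ is deterministic and only whose argument is random. The joint law is therefore supported on the torus $T=\prod_{\bn}\{|c_\bn|=r_\bn\}$, an $N$-real-dimensional submanifold of $\IC^N\cong\IR^{2N}$ of Lebesgue measure zero, so the statement ``the reducible locus is a proper subvariety, hence null under any such distribution'' does not apply as written: a proper subvariety of $\IC^N$ can very well have positive measure (indeed full measure) with respect to a law concentrated on a lower-dimensional set. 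The repair --- which is precisely the nontrivial content separating the phase-only case from a mask with random amplitudes --- is to show that the reducible locus, being contained in a proper algebraic subvariety $V\subsetneq\IC^N$, cannot contain $T$: the torus with all radii positive is the distinguished boundary of a polydisc and hence a uniqueness set for polynomials in the coefficients, so any polynomial vanishing on $T$ vanishes identically. Consequently $V\cap T$ is a proper real-analytic subvariety of $T$, of measure zero for the (absolutely continuous) product law of the phases, and irreducibility holds almost surely. Without this extra argument your proof establishes the lemma only for masks whose amplitudes are also randomized, which is exactly the ``more general result'' the paper distinguishes from the phase-mask special case.
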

 Lemma  \ref{prop1} is a special case of the more general result in \cite{unique} which is not limited to
phase masks. Note that the statement holds for {\em any} real-valued continuous random variable $\phi(\bn)$.  
By more advanced techniques from algebraic geometry and probability, one can relax the conditions of continuity and independence on $\phi(\bn)$. 

After taking logarithm, \eqref{1.70} becomes 
\beq
\label{1.8'}
\kappa g_\bt(\bn)-\im \ln \nu(\bn)&=&\hbox{\rm either}\quad 
\theta_\bt+\kappa f_\bt(\bn + \mbm_\bt  )-\im \ln \mu(\bn + \mbm_\bt ) \\
&&\quad\hbox{or}\quad  \theta_\bt-\kappa  \overline{ f_\bt(-\bn +\mbm_\bt  )}-\im \ln \overline{\mu(-\bn +\mbm_\bt) }\nn
\eeq
modulo $2\pi.$

If $\mu$ is completely known, i.e. $\nu=\mu$, then \eqref{1.8'} becomes 
\beq
\label{1.8}
\kappa g_\bt(\bn)-\im \ln \mu(\bn)&=&\hbox{\rm either}\quad 
\theta_\bt+\kappa f_\bt(\bn + \mbm_\bt  )-\im \ln \mu(\bn + \mbm_\bt ) \\
&&\quad\hbox{\rm or}\quad  \theta_\bt-\kappa  \overline{f_\bt(-\bn +\mbm_\bt  )}-\im \ln \overline{\mu(-\bn +\mbm_\bt) }\nn
\eeq
modulo $2\pi$.

Since both diffraction patterns are from the same snapshot, we can reset the object frame so that 
$\bl_\bt=0$. 

Suppose the first alternative in \eqref{1.8} holds with $\mbm_\bt\not = 0$. By \eqref{43},  $e^{\im\kappa f_\bt}$ and  $e^{\im\kappa g_\bt}$ have  the same  autocorrelation function and hence
\beqn
\sum_\bn e^{\im \kappa (f_\bt(\bn+\bk)-\overline{f_\bt(\bn)})}&=& \sum_\bn e^{\im \kappa (f_\bt (\bn+\mbm_\bt+\bk)-\overline{f_\bt(\bn+\mbm_\bt)})}e^{\im (\phi(\bn+\mbm_\bt+\bk)-\phi(\bn+\bk))} e^{-\im (\phi(\bn+\mbm_\bt)-\phi(\bn))}
\eeqn
for all $\bk$, or equivalently
\beq
\label{2.1}
\lefteqn{\sum_\bn e^{\im \kappa (f_\bt(\bn+\mbm_\bt+\bk)-\overline{f_\bt(\bn+\mbm_\bt)})}}\\
&=& \sum_\bn e^{\im \kappa (f_\bt (\bn+\mbm_\bt+\bk)-\overline{f_\bt(\bn+\mbm_\bt)})}e^{\im (\phi(\bn+\mbm_\bt+\bk)-\phi(\bn+\mbm_\bt)-\phi(\bn+\bk)+\phi(\bn))}\nn
\eeq
by change of index, $\bn\to \bn+\mbm_\bt$, on the left hand side of equation. 
Define
\[
\Delta_\bk f_\bt(\bn+\mbm_\bt):=f_\bt(\bn+\mbm_\bt+\bk)-\overline{f_\bt(\bn+\mbm_\bt)}
\]
and rewrite  \eqref{2.1} as
\beq\label{2.2}
0&=&\sum_{\bn} \lt[e^{\im (\phi(\bn+\mbm_\bt+\bk)-\phi(\bn+\mbm_\bt)-\phi(\bn+\bk)+\phi(\bn))}-1\rt]e^{\im \kappa \Delta_\bk f_\bt (\bn+\mbm_\bt)},
\eeq
for all $\bk$. 
We want to show that the probability of the event \eqref{2.2} is zero. 

Let us consider those summands in \eqref{2.2}, for any fixed $\bk$,  that share
a common $\phi(\bl)$, for any fixed $\bl$,  in the expression. Clearly, there are at most four such terms:
\beq
&& \lt[e^{\im (\phi(\bl)-\phi(\bl-\bk)-\phi(\bl-\mbm_\bt)+\phi(\bl-\bk-\mbm_\bt))}-1\rt]e^{\im \kappa \Delta_\bk f_\bt (\bl-\bk)}\label{2.5}\\
&+& \lt[e^{\im (\phi(\bl+\bk)-\phi(\bl)-\phi(\bl+\bk-\mbm_\bt)+\phi(\bl-\mbm_\bt))}-1\rt]e^{\im \kappa \Delta_\bk f_\bt (\bl)}\nn\\
&+&\lt[e^{\im (\phi(\bl+\mbm_\bt)-\phi(\bl-\bk+\mbm_\bt)-\phi(\bl)+\phi(\bl-\bk))}-1\rt]e^{\im \kappa \Delta_\bk f_\bt (\bl-\bk+\mbm_\bt)} \nn\\
&+& \lt[e^{\im (\phi(\bl+\bk+\mbm_\bt)-\phi(\bl+\mbm_\bt)-\phi(\bl+\bk)+\phi(\bl))}-1\rt]e^{\im \kappa \Delta_\bk f_\bt (\bl+\mbm_\bt)}. \nn
\eeq
Since the continuous random variable $\phi(\bl)$ does not appear in other summands and hence is independent of them, \eqref{2.2} implies that \eqref{2.5} (and the rest of \eqref{2.2})  must vanish almost surely.

For $\bk$ that are linearly independent of $\mbm_\bt$, the four independent random variables
\beq
\label{51-2}
\phi(\bl-\bk-\mbm_\bt),\quad \phi(\bl+\bk-\mbm_\bt),\quad \phi(\bl-\bk+\mbm_\bt),\quad \phi(\bl+\bk+\mbm_\bt)
\eeq
appear separately in  exactly one summand in  \eqref{2.5}. Consequently, \eqref{2.5} (and hence \eqref{2.2}) almost surely does not vanishes 
for $\bk$ that are linearly independent of $\mbm_\bt$.

On the other hand, if  $\bk$ is parallel to $\mbm_{\bt}\neq 0$, then for  any 
\beqn
 \bk\not \in \{ \pm \mbm_{\bt},\pm \half  \mbm_{\bt},\pm 2  \mbm_{\bt}\}
 \eeqn
  the four terms in \eqref{51-2}
appear  separately in exactly one summand in \eqref{2.5}.  Consequently, \eqref{2.5} almost surely does not vanishes.

Thus whenever the first alternative in \eqref{1.8} holds true,  we have  $\mbm_\bt=0$ and $e^{\im \kappa g_\bt}=e^{\im\theta_\bt} e^{\im\kappa f_\bt}$ for some constant $\theta_\bt$ independent of the grid point. 

Next, we show that the second alternative in \eqref{1.70} is false.
By \eqref{43}, we have
\beqn
\lefteqn{\sum_\bn e^{\im \kappa (f_\bt(\bn+\bk)-\overline{f_\bt(\bn)})}}\\&=& \sum_\bn e^{-\im \kappa (\overline{f_\bt (-\bn+\mbm_\bt+\bk)}-f_\bt(-\bn+\mbm_\bt))}e^{-\im (\phi(-\bn+\mbm_\bt+\bk)+\phi(\bn+\bk))} e^{\im (\phi(-\bn+\mbm_\bt)+\phi(\bn))}
\eeqn
for all $\bk$, or equivalently
\beq
\label{2.1'}\lefteqn{\sum_\bn e^{\im \kappa (f_\bt(\bn+\mbm_\bt+\bk)-\overline{f_\bt(\bn+\mbm_\bt)})}}\\
&=&\sum_\bn e^{-\im \kappa (\overline{f_\bt (-\bn+\mbm_\bt+\bk)}-f_\bt(-\bn+\mbm_\bt))}e^{-\im (\phi(-\bn+\mbm_\bt+\bk)+\phi(\bn+\bk))} e^{\im (\phi(-\bn+\mbm_\bt)+\phi(\bn))}
\nn
\eeq
by change of index, $\bn\to -\bn+\mbm_\bt$, on the left hand side of equation. 
With 
\[
\Delta_\bk \overline{f_\bt(-\bn+\mbm_\bt)}:=\overline{f_\bt(-\bn+\mbm_\bt+\bk)}-f_\bt(-\bn+\mbm_\bt)
\]
we rewrite  \eqref{2.1'} as
\beq\label{2.2'}
0&=&\sum_\bn \lt[e^{\im (-\phi(-\bn+\mbm_\bt+\bk)+\phi(-\bn+\mbm_\bt)-\phi(\bn+\bk)+\phi(\bn))}-1\rt]e^{-\im \kappa \Delta_\bk \overline{f_\bt (-\bn+\mbm_\bt)}}
\eeq
for all $\bk$. 
We want to show that the  right hand side of \eqref{2.2'} almost surely does not vanish for any $\bk$. 

As before, consider those summands in \eqref{2.2'}, for any fixed $\bk$,  that share
a common $\phi(\bl)$, for any fixed $\bl$,  in the expression. Clearly, there are at most four such terms:
\beq
&& \lt[e^{\im (-\phi(\bl)+\phi(\bl+\bk)-\phi(-\bl+\mbm_\bt)+\phi(-\bl-\bk+\mbm_\bt))}-1\rt] e^{-\im \kappa \Delta_\bk \overline{ f_\bt (\bl)}}\label{2.5'}\\
&& +\lt[e^{\im (-\phi(\bl-\bk)+\phi(\bl)-\phi(-\bl+\bk+\mbm_\bt)+\phi(-\bl+\mbm_\bt))}-1\rt] e^{-\im \kappa \Delta_\bk \overline{ f_\bt (\bl-\bk)}}\nn\\
&&+\lt[e^{\im (-\phi(-\bl+\mbm_\bt)+\phi(-\bl+\bk+\mbm_\bt)-\phi(\bl)+\phi(\bl-\bk))}-1\rt]e^{-\im \kappa \Delta_\bk \overline{ f_\bt (-\bl+\mbm_\bt)}}\nn\\
&& +\lt[e^{\im (-\phi(-\bl-\bk+\mbm_\bt)+\phi(-\bl+\mbm_\bt)-\phi(\bl+\bk)+\phi(\bl))}-1\rt]e^{-\im \kappa \Delta_\bk \overline{ f_\bt (-\bl-\bk+\mbm_\bt)}}\nn
\eeq
With $\phi(\bl)$ appearing in no other terms, \eqref{2.2'} implies that \eqref{2.5'} must vanish almost surely.

Some observations are in order. First, both $\phi(\bl)$ and $\phi(-\bl+\mbm_\bt)$ appear exactly once in each summand in \eqref{87}. Second, the following pairings of the other phases
\beqn
&&\{\phi(\bl+\bk), \phi(-\bl-\bk+\mbm_\bt)\},\quad \{\phi(\bl-\bk), \phi(-\bl+\bk+\mbm_\bt)\}
\eeqn
also appear exactly twice in \eqref{87}. As long as $\bk\not=0$ and $2\bl\neq \mbm_\bt$, these two pairs are not identical and hence 
\beq
0&=& \lt[e^{\im (-\phi(\bl)+\phi(\bl+\bk)-\phi(-\bl+\mbm_\bt)+\phi(-\bl-\bk+\mbm_\bt))}-1\rt]e^{-\im \kappa \Delta_\bk \overline{ f_\bt (\bl)}}\label{88'}\\
&& +\overline{\lt[e^{\im (\phi(-\bl-\bk+\mbm_\bt)-\phi(-\bl+\mbm_\bt)+\phi(\bl+\bk)-\phi(\bl))}-1\rt]}e^{-\im \kappa \Delta_\bk \overline{ f_\bt (-\bl-\bk+\mbm_\bt)}}\nn\\
\label{89'}0&=&\lt[e^{\im (-\phi(\bl-\bk)+\phi(\bl)-\phi(-\bl+\bk+\mbm_\bt)+\phi(-\bl+\mbm_\bt))}-1\rt] e^{-\im \kappa \Delta_\bk \overline{ f_\bt (\bl-\bk)}}\nn\\
&&+\overline{\lt[e^{\im (\phi(-\bl+\mbm_\bt)-\phi(-\bl+\bk+\mbm_\bt)+\phi(\bl)-\phi(\bl-\bk))}-1\rt]}e^{-\im \kappa \Delta_\bk \overline{ f_\bt (-\bl+\mbm_\bt)}}\nn
\eeq
both of which are almost surely false because the two factors
\[
 \lt[e^{\im (-\phi(\bl)+\phi(\bl+\bk)-\phi(-\bl+\mbm_\bt)+\phi(-\bl-\bk+\mbm_\bt))}-1\rt],\quad \lt[e^{\im (-\phi(\bl-\bk)+\phi(\bl)-\phi(-\bl+\bk+\mbm_\bt)+\phi(-\bl+\mbm_\bt))}-1\rt]
 \]
 differ with their complex conjugates  in a random manner independently from $f_\bt$. 

Therefore,  the second alternative in \eqref{1.8} almost surely does not hold true.

In summary, the first alternative in \eqref{1.8} holds with $\mbm_\bt= 0$, namely
\beqn
 \kappa g_\bt(\bn)&=& \theta_\bt+ \kappa f_\bt(\bn)\quad\mod 2\pi
 \eeqn
 almost surely. 
 
The actual support of the projections \eqref{2.8}-\eqref{2.10}  for $0\le \alpha, \beta \le 1$ and odd integer $n$, for example, is contained in
\beq\label{support}
\bigcup_{i\in \IZ_n}(\IZ_n-\lfloor\alpha i\rfloor) \times (\IZ_n-\lfloor\beta i\rfloor)
\eeq 
where $\lfloor\cdot\rfloor$ denotes the floor function. In turn, the set in \eqref{support} is a subset of
\beq
\label{support2}\lt\{\bigcup_{i\in \IZ_n}(\IZ_n-\lfloor\alpha i\rfloor) \rt\}\times \lt\{\bigcup_{i\in \IZ_n} (\IZ_n-\lfloor\beta i\rfloor)\rt\}&=&\IZ_{\ell_\alpha}\times \IZ_{\ell_\beta}
\eeq
where 
\beq\label{zero4}
\ell_\alpha=2 \cdot \lfloor\half (1+|\alpha|) {(n-1)}\rfloor+1,\quad \ell_\beta=2\cdot \lfloor\half (1+|\beta|) {(n-1)}\rfloor+1.
\eeq
\commentout{
\beq
\label{zero}
f_{(1,\alpha,\beta)}(c_1,c_2)=f_{(\alpha,1,\beta)}(c_1,c_2)= f_{(\alpha,\beta,1)}(c_1,c_2)=0
\eeq
if $c_1, c_2$ are {\em not}  elements
of the sub-lattice
\beq
\label{zero2}
&& \lb -\lfloor \half(1+|\alpha|) {(n-1)}\rfloor, \lfloor\half (1+|\alpha|) {(n-1)}\rfloor\rb\\ 
& \times & \lb -\lfloor \half(1+|\beta|) {(n-1)}\rfloor, \lfloor\half (1+|\beta|){(n-1)}\rfloor\rb\nn
\eeq
 In other words, 
\beq
\label{zero3}
\supp[f_{(1,\alpha,\beta)}], \,\,\supp[f_{(\alpha,1,\beta)}], \,\,\supp[f_{(\alpha,\beta,1)}]\subset\IZ_{\ell_\alpha}\times \IZ_{\ell_\beta}
\eeq
}
The same support constraint $\IZ_{\ell_\alpha}\times \IZ_{\ell_\beta}$ with \eqref{zero4} applies to the case with $|\alpha|,|\beta|\le 1$ and odd integer $n$.

Now that $g_\bt(\bn)=f_\bt(\bn)=0$ for $\bn\in \IZ_p^2\setminus \IZ_{\ell_\alpha}\times \IZ_{\ell_\beta} $ for $\bt=(1,\alpha,\beta)$ due to the support constraint \eqref{support2}-\eqref{zero4}, so
$\theta_\bt$ must be an integer multiple of $2\pi$, i.e. $e^{\im \kappa g_\bt}=e^{\im\kappa f_\bt}$ almost surely. The proof is complete.

\section{Proof of Theorem \ref{thm2}}\label{appB}
Let
\beqn
f^*_{\bt'}(\bn)=f_{\bt'} (\bn+\bl_{\bt'}),&&f^*_\bt(\bn)=f_\bt (\bn+\bl_\bt)\\
g^*_{\bt'}(\bn)=g_{\bt'} (\bn+\bl_{\bt'}), && g^*_\bt(\bn)=g_\bt (\bn+\bl_\bt)
\eeqn
for some $\bl_{\bt'},\bl_\bt$. 
 
  Suppose that  the first alternative in \eqref{1.8} holds true for $\bt'$,  i.e.
\beq
 \kappa g^*_{\bt'}(\bn)-\im \ln \mu(\bn)&=&
\theta_{\bt'}+\kappa f^*_{\bt'}(\bn + \mbm_{\bt'} )-\im \ln \mu(\bn + \mbm_{\bt'} ) 
\eeq
modulo $2\pi$, or equivalently
\beq\label{1.10}
\kappa g^*_{\bt'}(\bn)+\phi(\bn)&=&
\theta_{\bt'}+\kappa f^*_{\bt'}(\bn + \mbm_{\bt'}  )+\phi(\bn + \mbm_{\bt '}) + \kappa
h_{\bt'}(\bn), 
\eeq
where $
h_{\bt'}(\bn)
$ is an integer multiple of $2\pi/\kappa$ for every $\bn$, 
implying
\beq\label{2.20'}
\kappa\widehat g^*_{\bt'}(\bk)+\widehat\phi(\bk)= e^{\im 2\pi\mbm_{\bt'}\cdot\bk/p} (\kappa\widehat f^*_{\bt'}(\bk)+\widehat\phi(\bk))+p^2\theta_{\bt'}D^2_p(\bk)+\kappa \widehat h_{\bt'}(\bk). 
\eeq

First we show that the second alternative in \eqref{1.8} can not hold  for $\bt\not=\bt'$. 
Suppose otherwise, i.e.
\beq\label{101-2}
\kappa g^*_\bt(\bn)+\phi(\bn)&=&
 \theta_\bt+\kappa  \overline{ f_\bt(-\bn +\mbm_\bt  )}-\phi(-\bn +\mbm_\bt)  \quad  \hbox{\rm mod\,\,} 2\pi .
\eeq
implying  
\beq\label{2.20}
\kappa\widehat g^*_\bt(\bk)+\widehat\phi(\bk)&=& e^{-\im 2\pi \mbm_\bt\cdot\bk/p}(\kappa \overline{\widehat f^*_{\bt}(\bk)}-\widehat\phi(-\bk))+p^2\theta_\bt D^2_p(\bk)+\kappa\widehat h_\bt (\bk)
\eeq
where $
h_{\bt}(\bn)
$ is an integer multiple of $2\pi$ for every $\bn$.  

Since  \[
\widehat g_{\bt'}(\bk')=\widehat g_\bt(\bk),\quad \widehat f_{\bt'}(\bk')=\widehat f_\bt(\bk),\quad (\bk,\bk')\in  C_{\bt,\bt'},\]
we have 
\beq
\label{70'}
\widehat g^*_{\bt'}(\bk') e^{-\im 2\pi\bk'\cdot\bl_{\bt'}/p}&=&\widehat g^*_{\bt}(\bk) e^{-\im 2\pi\bk\cdot\bl_{\bt}/p} \\
\widehat f^*_{\bt'}(\bk') e^{-\im 2\pi\bk'\cdot\bl_{\bt'}/p}&=&\widehat f^*_{\bt}(\bk) e^{-\im 2\pi\bk\cdot\bl_{\bt}/p}. 
\eeq

Eq. \eqref{70'}, together with \eqref{2.20'} and \eqref{2.20}, imply that for  $ (\bk,\bk')\in C_{\bt,\bt'}$
\beqn
&& e^{-\im 2\pi\bk'\cdot\bl_{\bt'}/p}\lt[e^{\im 2\pi\mbm_{\bt'}\cdot\bk'/p} (\kappa\widehat f^*_{\bt'}(\bk')+\widehat\phi(\bk'))+p^2\theta_{\bt'}D^2_p(\bk')+\kappa \widehat h_{\bt'}(\bk')-\widehat\phi(\bk')\rt]\\
&=&e^{-\im 2\pi\bk\cdot\bl_{\bt}/p} \lt[e^{-\im 2\pi \mbm_\bt\cdot\bk/p}(\kappa \overline{\widehat f^*_{\bt}(\bk)}-\widehat\phi(-\bk))+p^2\theta_\bt D^2_p(\bk)+\kappa\widehat h_\bt (\bk)-\widehat\phi(\bk)\rt], \nn
\eeqn
and hence 
\beq
 \label{a.45}&&\lt[e^{\im 2\pi(\mbm_{\bt'}-\bl_{\bt'})\cdot\bk'/p}-e^{-\im 2\pi\bk'\cdot\bl_{\bt'}/p}\rt] \widehat\phi(\bk')+ e^{-\im 2\pi (\mbm_\bt+\bl_\bt)\cdot\bk/p} \widehat\phi(-\bk)+e^{-\im 2\pi\bl_\bt\cdot\bk/p}\widehat\phi(\bk)\\
&=&-\kappa e^{\im 2\pi(\mbm_{\bt'}-\bl_{\bt'})\cdot\bk'/p} \widehat f^*_{\bt'}(\bk')+\kappa e^{-\im 2\pi (\mbm_\bt+\bl_\bt)\cdot\bk/p} \overline{\widehat f^*_{\bt}(\bk)}+\kappa e^{-\im 2\pi\bk\cdot\bl_{\bt}/p}\widehat h_\bt (\bk)\nn\\
&& -\kappa e^{-\im 2\pi\bk'\cdot\bl_{\bt'}/p}\widehat h_{\bt'}(\bk')-p^2e^{-\im 2\pi\bk'\cdot\bl_{\bt'}/p}\theta_{\bt'}D^2_p(\bk')+p^2e^{-\im 2\pi\bk\cdot\bl_{\bt}/p}\theta_\bt D^2_p(\bk).  \nn
\eeq 
The left hand side of \eqref{a.45} is a sum of independent, continuous random variables while  the right hand side is a discrete random variable for a fixed $f$. Therefore, \eqref{a.45} is false almost surely. 

This leaves  the first alternative of \eqref{1.8} the only viable alternative for $\bt$, 
i.e. 
\beq\label{70}
\kappa g^*_\bt(\bn)+\phi(\bn)&=&
\theta_\bt+\kappa f^*_\bt(\bn + \mbm_\bt  )+\phi(\bn + \mbm_\bt )\quad \hbox{\rm mod}\,\, 2\pi,
\eeq
for some $\mbm_\bt$, and hence 
\beqn
&&{e^{\im 2\pi(\mbm_{\bt}-\bl_{\bt})\cdot\bk/p} (\kappa\widehat f^*_{\bt}(\bk)+\widehat\phi(\bk))-e^{-\im 2\pi \bl_{\bt}\cdot\bk/p} \widehat\phi(\bk)+p^2\theta_{\bt}e^{-\im 2\pi\bk\cdot\bl_{\bt}/p}D^2_p(\bk)+\kappa e^{-\im 2\pi\bk\cdot\bl_{\bt}/p}\widehat h_{\bt}(\bk)}\\
&=&e^{\im 2\pi(\mbm_{\bt'}-\bl_{\bt'})\cdot\bk'/p} (\kappa\widehat f^*_{\bt'}(\bk')+\widehat\phi(\bk'))-e^{-\im 2\pi \bl_{\bt'}\cdot\bk'/p} \widehat\phi(\bk')\nn\\
&&+p^2\theta_{\bt'}e^{-\im 2\pi\bk'\cdot\bl_{\bt'}/p}D^2_p(\bk')+\kappa e^{-\im 2\pi\bk'\cdot\bl_{\bt'}/p}\widehat h_{\bt'}(\bk')\nn
\eeqn
for $ (\bk,\bk')\in C_{\bt,\bt'}$. 
Reorganizing the above equation, we have
\beq
&&(e^{\im 2\pi\mbm_{\bt}\cdot\bk/p}-1)e^{-\im 2\pi \bl_{\bt}\cdot\bk/p} \widehat \phi(\bk)+(1-e^{\im 2\pi\mbm_{\bt'}\cdot\bk'/p} )e^{-\im 2\pi \bl_{\bt'}\cdot\bk'/p} \widehat\phi(\bk')\label{2.28}\\
&=&e^{\im 2\pi(\mbm_{\bt'}-\bl_{\bt'})\cdot\bk'/p} \kappa\widehat f^*_{\bt'}(\bk')-e^{\im 2\pi(\mbm_{\bt}-\bl_{\bt})\cdot\bk/p}\kappa \widehat f^*_{\bt}(\bk)+p^2\theta_{\bt'}e^{-\im 2\pi\bk'\cdot\bl_{\bt'}/p}D^2_p(\bk')\nn\\
&&-p^2\theta_{\bt}e^{-\im 2\pi\bk\cdot\bl_{\bt}/p}D^2_p(\bk)+\kappa e^{-\im 2\pi\bk'\cdot\bl_{\bt'}/p}\widehat h_{\bt'}(\bk')-\kappa e^{-\im 2\pi\bk\cdot\bl_{\bt}/p}\widehat h_{\bt}(\bk)\nn
\eeq
for $ (\bk,\bk')\in C_{\bt,\bt'}$. 
If, for some $ (\bk,\bk')\in C_{\bt,\bt'}$, 
\beq
\label{line}
(e^{\im 2\pi\mbm_{\bt}\cdot\bk/p}-1)e^{-\im 2\pi \bl_{\bt}\cdot\bk/p} \neq 0\quad\hbox{or}\quad \quad e^{-\im 2\pi \bl_{\bt'}\cdot\bk/p}(1-e^{\im 2\pi\mbm_{\bt'}\cdot\bk'/p} )\neq 0,
\eeq
then the left hand side of \eqref{2.28} is a continuous random variable  while  the right hand side takes value in a discrete set for a given $f$. This is a contradiction with probability one,
implying for all $ (\bk,\bk')\in C_{\bt,\bt'}$
\beqn
(e^{\im 2\pi\mbm_{\bt}\cdot\bk/p}-1)e^{-\im 2\pi \bl_{\bt}\cdot\bk/p} = e^{-\im 2\pi \bl_{\bt'}\cdot\bk'/p}(e^{\im 2\pi\mbm_{\bt'}\cdot\bk'/p}-1)=0
\eeqn
and consequently, \beq
\label{line2}
\mbm_{\bt}\cdot \bk= \mbm_{\bt'}\cdot \bk'=0 \quad \mod p.\eeq
By assumption, some  $(\bk,\bk')\in C_{\bt,\bt'}$ have components whose ratio is not a fraction over $\IZ_p$, \eqref{line2} can not hold true for $\mbm_\bt, \mbm_{\bt'}\in \IZ_p^2$. 

By \eqref{70} and \eqref{1.10}, 
\beqn
\kappa g^*_\bt(\bn)&=&
\theta_\bt+\kappa f^*_\bt(\bn)\quad \hbox{\rm mod}\,\, 2\pi,\\
\kappa g^*_{\bt'}(\bn)&=&
\theta_{\bt'}+\kappa f^*_{\bt'}(\bn ) \quad \hbox{\rm mod}\,\, 2\pi,
\eeqn
which imply, by the set-up of zero-padding, $\theta_\bt=\theta_{\bt'}=0$ and hence  
\beqn
\kappa g^*_\bt(\bn)=
\kappa f^*_\bt(\bn),&&
\kappa g^*_{\bt'}(\bn)=
\kappa f^*_{\bt'}(\bn ), 
\eeqn
for all $\bn\in \IZ_p^2$.

Let us rule out  the remaining undesirable alternative: For all $\bn\in \IZ_p^2$,
\beq
\kappa g^*_\bt(\bn)+\phi(\bn)&=&
 \theta_\bt+\kappa  \overline{ f^*_\bt(-\bn +\mbm_\bt  )}-\phi(-\bn +\mbm_\bt)  \quad  \hbox{\rm mod\,\,} 2\pi\\
\kappa g^*_{\bt'}(\bn)+\phi(\bn)&=&
 \theta_{\bt'}+\kappa  \overline{ f^*_{\bt'}(-\bn +\mbm_{\bt'} )}-\phi(-\bn +\mbm_{\bt'})  \quad  \hbox{\rm mod\,\,} 2\pi .
\eeq
For  $ (\bk,\bk')\in C_{\bt,\bt'}, $ we have
\beqn
&& e^{-\im 2\pi\bk'\cdot\bl_{\bt'}/p}\lt[e^{-\im 2\pi\mbm_{\bt'}\cdot\bk'/p} (\kappa\overline{\widehat f^*_{\bt'}(\bk')}-\widehat\phi(-\bk'))+p^2\theta_{\bt'}D^2_p(\bk')-\widehat\phi(\bk')+\kappa\widehat h_{\bt'} (\bk')\rt]\\
&=&e^{-\im 2\pi\bk\cdot\bl_{\bt}/p} \lt[e^{-\im 2\pi \mbm_\bt\cdot\bk/p}(\kappa \overline{\widehat f^*_{\bt}(\bk)}-\widehat\phi(-\bk))+p^2\theta_\bt D^2_p(\bk)-\widehat\phi(\bk)+\kappa\widehat h_{\bt} (\bk)\rt], \nn
\eeqn
and hence
\beqn&&e^{-\im 2\pi (\mbm_\bt+\bl_\bt)\cdot\bk/p} \widehat\phi(-\bk)+e^{-\im 2\pi\bl_\bt\cdot\bk/p}\widehat\phi(\bk)-e^{-\im 2\pi (\mbm_{\bt'}+\bl_{\bt'})\cdot\bk'/p} \widehat\phi(-\bk')-e^{-\im 2\pi\bl_{\bt'}\cdot\bk'/p}\widehat\phi(\bk') \\
&=&-\kappa e^{-\im 2\pi(\mbm_{\bt'}+\bl_{\bt'})\cdot\bk'/p} \overline{\widehat f^*_{\bt'}(\bk')}+\kappa e^{-\im 2\pi (\mbm_\bt+\bl_\bt)\cdot\bk/p} \overline{\widehat f^*_{\bt}(\bk)}+\kappa e^{-\im 2\pi\bk\cdot\bl_{\bt}/p}\widehat h_\bt (\bk)\nn\\
&& -\kappa e^{-\im 2\pi\bk'\cdot\bl_{\bt'}/p}\widehat h_{\bt'}(\bk')-p^2e^{-\im 2\pi\bk'\cdot\bl_{\bt'}/p}\theta_{\bt'}D^2_p(\bk')+p^2e^{-\im 2\pi\bk\cdot\bl_{\bt}/p}\theta_\bt D^2_p(\bk).  \nn
\eeqn
The left hand side is a continuous random variable while the right hand side is a discrete random variable for a fixed $f$. This is impossible and hence the undesirable alternative is ruled out. The proof is complete.

\section{Proof of Theorem \ref{thm3}}\label{appC}

The proof of Theorem \ref{thm3} is analogous to that of Theorem \ref{thm1}, except with the additional complication of possible vanishing of the object function under the Born approximation. 

Similar to Lemma \ref{prop1}, for the diffraction pattern given by \eqref{born-pattern} we have the following
characterization. 
\begin{lem}\cite{unique}\label{prop2}  Let $\mu=e^{\im \phi(\bn)}$ with independent, continuous random variables $\phi(\bn)\in \IR$.  Suppose that $\supp(f_\bt)$ is not a subset of a line and another masked object projection $\nu \odot g_\bt$  produces
the same diffraction pattern as $ \mu \odot f_\bt$. Then for some $\bp$ and $\theta_\bt,$
\beq
\label{r1}
g_\bt(\bn)\nu(\bn)&=&\hbox{\rm either} \quad 
 e^{\im \theta_\bt} f_\bt(\bn + \mbm_\bt  )\mu(\bn + \mbm_\bt )\\
 && \quad
 \hbox{\rm or} \quad 
 e^{\im \theta_\bt} \overline{ f_\bt(-\bn +\mbm_\bt  )}\overline{\mu(-\bn +\mbm_\bt  )}\nn
\eeq
for all $ \bn$.
\end{lem}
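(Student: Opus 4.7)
The plan is to reduce the equality of diffraction intensities to an algebraic (polynomial) identity in two complex variables and then invoke the essentially-unique factorization of 2D Laurent polynomials, along with a genericity argument for the random phase mask. This is essentially the strategy used for Lemma \ref{prop1} (cited to \cite{unique}), adapted to the linear rather than exponential setting.

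First, I would introduce the generating polynomials
\[
F(z_1,z_2)=\sum_{\bn}\mu(\bn)f_\bt(\bn)\,z_1^{n_1}z_2^{n_2},\qquad
G(z_1,z_2)=\sum_{\bn}\nu(\bn)g_\bt(\bn)\,z_1^{n_1}z_2^{n_2},
\]
and recall that the squared Fourier magnitude on the torus $|z_1|=|z_2|=1$ is the restriction of the Laurent polynomial $F(z)\widetilde F(z)$, where the ``reciprocal'' polynomial $\widetilde F(z_1,z_2):=z_1^{N_1}z_2^{N_2}\overline{F(1/\bar z_1,1/\bar z_2)}$ is defined so that its coefficients are the complex conjugates of $F$ with reversed index. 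Equality of the two coded diffraction patterns, after oversampling to the Nyquist grid \eqref{nyquist1}, therefore yields the polynomial identity $F(z)\widetilde F(z)=G(z)\widetilde G(z)$ in $\IC[z_1^{\pm 1},z_2^{\pm 1}]$.

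Next I would appeal to the unique factorization of 2D Laurent polynomials. Up to monomial factors the ring $\IC[z_1,z_2]$ is a UFD, so the two sides of the above identity must have the same irreducible factors, counted with multiplicity. Write $F=cz_1^{a_1}z_2^{a_2}\prod_i F_i$ as a product of irreducibles; then each $F_i$ appears on the right-hand side either in $G$ or in $\widetilde G$ (since the conjugate-reversal operation $F\mapsto\widetilde F$ permutes the set of irreducibles). The main algebraic step is to show that, \emph{almost surely with respect to the random mask}, the polynomial $F$ is itself irreducible (apart from monomial units), so that no splitting of its factors between $G$ and $\widetilde G$ is possible. Granted this, $G$ must equal either $e^{\im\theta_\bt}z^{\mbm_\bt}F$ or $e^{\im\theta_\bt}z^{\mbm_\bt}\widetilde F$ for some shift $\mbm_\bt\in\IZ^2$ and phase $\theta_\bt\in\IR$. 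Reading these two alternatives coefficient-by-coefficient recovers exactly the two cases in \eqref{r1}.

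The main obstacle is the irreducibility claim. The argument is that the locus in the coefficient space $\IC^{|\supp f_\bt|}$ on which the corresponding polynomial is reducible is a proper Zariski-closed (hence Lebesgue-null) subvariety, provided the support is not contained in a single line; the line-support hypothesis is precisely what prevents $F$ from degenerating to a 1D polynomial in a single variable, where reducibility is generic. Since the coordinates of $\mu\odot f_\bt$ are obtained from the fixed nonzero coefficients $f_\bt(\bn)$ multiplied by independent continuous random phases $e^{\im\phi(\bn)}$, the pushforward measure is absolutely continuous on the coefficient torus, and so the probability of landing in the exceptional reducibility variety is zero. This is exactly where the Mask Assumption and the ``$\supp(f_\bt)$ not a subset of a line'' hypothesis enter, and it is the only delicate step; the rest of the proof is algebraic book-keeping, as in \cite{unique}.
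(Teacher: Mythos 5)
The paper does not actually prove this lemma: it is imported verbatim from \cite{unique}, so there is no in-paper argument to compare against. Your route --- encode $\mu\odot f_\bt$ and $\nu\odot g_\bt$ as Laurent polynomials, observe that equality of the (oversampled) diffraction patterns forces $F(z)F^*(1/z)=G(z)G^*(1/z)$ as an identity of Laurent polynomials, and then use unique factorization in $\IC[z_1^{\pm1},z_2^{\pm1}]$ together with almost-sure irreducibility of $F$ to conclude that $G$ is a monomial times a unimodular constant times either $F$ or its conjugate reflection --- is exactly the Hayes-type argument on which \cite{unique} is built, so the overall strategy is the intended one, and the final coefficient-by-coefficient translation into the two alternatives of \eqref{r1} is correct.

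The gap is that you assert, rather than prove, the one statement that carries all the mathematical weight: that for $\supp(f_\bt)$ not contained in a line, $F$ is almost surely irreducible modulo monomials. Two separate things are missing. First, you need that the reducible polynomials with support contained in $S=\supp(f_\bt)$ form a \emph{proper} subvariety of $\IC^{S}$; your justification only explains why line supports fail (the polynomial becomes effectively univariate and splits into linear factors), not why every non-line support admits an irreducible specialization. That direction requires an actual argument (or a precise citation of the corresponding irreducibility proposition in \cite{unique}), and it is the heart of the lemma. Second, even granting properness, ``Zariski-closed, hence Lebesgue-null in $\IC^{|S|}$'' does not by itself give probability zero: the random coefficients $f_\bt(\bn)e^{\im\phi(\bn)}$ live on a real $|S|$-dimensional torus with \emph{fixed} moduli, which is itself Lebesgue-null in $\IC^{|S|}$, so a null variety could a priori contain it. One must check that the torus is not contained in the reducibility variety; this holds because a nonzero polynomial in the coefficients, restricted to $c_\bn=f_\bt(\bn)e^{\im\phi(\bn)}$ with all $f_\bt(\bn)\neq 0$, becomes a nonzero trigonometric polynomial in the phases, whose zero set is Haar-null, and the law of the independent continuous $\phi(\bn)$ is absolutely continuous with respect to Haar measure. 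With these two points supplied, the remaining bookkeeping --- matching irreducible factors under the conjugate-reversal involution, absorbing the monomial into the shift $\mbm_\bt$ and the unimodular constant into $\theta_\bt$ --- goes through as you describe.
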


If $\mu$ is completely known, then $\nu=\mu$ and \eqref{r1} becomes 
\beq
\label{1.8-2}
g_\bt(\bn)\mu(\bn)&=&\hbox{\rm either}\quad 
e^{\im \theta_\bt} f_\bt(\bn + \mbm_\bt  )\mu(\bn + \mbm_\bt ) \\
&&\quad\hbox{\rm or}\quad  e^{\im \theta_\bt} \overline{f_\bt(-\bn +\mbm_\bt  )\mu(-\bn +\mbm_\bt  )}. \nn
\eeq

First suppose that the first alternative in \eqref{1.8-2} and we want to show that $\mbm_\bt=0$, which then implies that $g_\bt(\cdot)=e^{\im\theta_\bt} f_\bt(\cdot)$.

Equality of the uncoded diffraction \eqref{44} implies that the autocorrelation of $g_\bt$ equals that of $f_\bt$ and hence by \eqref{1.8-2}
\beqn
\sum_{\bn\in\IZ_p^2} f_\bt(\bn+\bk)\overline{f_\bt (\bn)}&=& \sum_{\bn\in \IZ_p^2} f_\bt(\bn +\bk+ \mbm_\bt  )\overline{f_\bt(\bn + \mbm_\bt  )}\mu(\bn +\bk+ \mbm_\bt )\mu(\bn)\overline{\mu(\bn+\bk)\mu(\bn + \mbm_\bt ) }
\eeqn
which, after change of index $\bn\to \bn+\mbm_\bt$ on the left hand side,  becomes
\beq
\label{83}0&=& \sum_{\bn\in \IZ_p^2} f_\bt(\bn +\bk+ \mbm_\bt  )\overline{f_\bt(\bn + \mbm_\bt  )}\lt[ e^{\im(\phi(\bn+\bk+\mbm_\bt)-\phi(\bn+\mbm_\bt)+\phi(\bn)-\phi(\bn+\bk))}-1\rt]
\eeq
for all $\bk\in \IZ_{2p-1}^2$. It is convenient to consider the autocorrelation function as $(2p-1)$-periodic function
and endow $\IZ_{2p-1}^2$ with the periodic boundary condition. 

Let us consider those summands on the right side of  \eqref{83}, for any fixed $\bk$,  that share
a common $\phi(\bl)$, for any fixed $\bl$. Clearly, there are at most four such terms:
\beq
&& \lt[e^{\im (\phi(\bl)-\phi(\bl-\bk)-\phi(\bl-\mbm_\bt)+\phi(\bl-\bk-\mbm_\bt))}-1\rt]f_\bt(\bl )\overline{f_\bt(\bl-\bk)}\label{84}\\
&+& \lt[e^{\im (\phi(\bl+\bk)-\phi(\bl)-\phi(\bl+\bk-\mbm_\bt)+\phi(\bl-\mbm_\bt))}-1\rt]f_\bt(\bl+\bk )\overline{f_\bt(\bl)}\nn\\
&+&\lt[e^{\im (\phi(\bl+\mbm_\bt)-\phi(\bl-\bk+\mbm_\bt)-\phi(\bl)+\phi(\bl-\bk))}-1\rt]f_\bt(\bl +\mbm_\bt)\overline{f_\bt(\bl-\bk+\mbm_\bt)}\nn\\
&+& \lt[e^{\im (\phi(\bl+\bk+\mbm_\bt)-\phi(\bl+\mbm_\bt)-\phi(\bl+\bk)+\phi(\bl))}-1\rt]f_\bt(\bl+\bk+\mbm_\bt )\overline{f_\bt(\bl+\mbm_\bt)}. \nn
\eeq
Since the continuous random variable $\phi(\bl)$ does not appear in other summands and hence is independent of them, \eqref{83} implies that \eqref{84} (and the rest of \eqref{83}) vanishes almost surely.

Suppose $\mbm_\bt\not=0$ and consider  any $\bk$ that is linearly independent of $\mbm_\bt$.  The four independent random variables
\beq
\label{85}
\phi(\bl-\bk-\mbm_\bt),\quad \phi(\bl+\bk-\mbm_\bt),\quad \phi(\bl-\bk+\mbm_\bt),\quad \phi(\bl+\bk+\mbm_\bt)
\eeq
appear separately in  exactly one summand in  \eqref{84}. 
Consequently, \eqref{84} can not vanish,  unless 
\beq\label{85-1}
f_\bt(\bl )\overline{f_\bt(\bl-\bk)}=0,&&\overline{f_\bt(\bl)}f_\bt(\bl+\bk )=0\\
\label{85-2}f_\bt(\bl +\mbm_\bt)\overline{f_\bt(\bl-\bk+\mbm_\bt)}=0, &&\overline{f_\bt(\bl+\mbm_\bt)} f_\bt(\bl+\bk+\mbm_\bt )=0
\eeq
in \eqref{84}.

On the other hand, if  $\bk$ is parallel to $\mbm_{\bt}\neq 0$, then for  any 
\beq
\label{85-3}
 \bk\not \in \{ \pm \mbm_{\bt_0},\pm \half  \mbm_{\bt_0},\pm 2  \mbm_{\bt_0}\}
 \eeq
  the four terms in \eqref{85}
appear  separately in exactly one summand in \eqref{84}.  Consequently, \eqref{84} (and hence \eqref{83}) almost surely does not vanishes unless \eqref{85-1} and \eqref{85-2} hold.

Consider $\bk=0$ which satisfies \eqref{85-3} if $\mbm_\bt\neq 0$. Clearly  \eqref{85-1}-\eqref{85-2}  with $\bk=0$ implies 
 that $f_\bt\equiv 0$, which violate the assumption that $\supp(f_\bt)$ is not a subset of a line. Thus $\mbm_\bt=0$ in the first alternative in \eqref{1.8-2}.

Next we prove  that the second alternative in \eqref{1.8-2} is false for all $\mbm_\bt$. 
Otherwise, by  \eqref{44} we have
\beqn
\lefteqn{\sum_{\bn\in\IZ_p^2} f_\bt(\bn+\bk)\overline{f_\bt (\bn)}}\\
&=& \sum_{\bn\in \IZ_p^2} \overline{f_\bt(-\bn-\bk+ \mbm_\bt  )}{f_\bt(-\bn + \mbm_\bt  )}\overline{\mu(-\bn -\bk+ \mbm_\bt )\mu(\bn+\bk)}{\mu(\bn)\mu(-\bn + \mbm_\bt ) }
\eeqn
which, after change of index $\bn\to -\bn-\bk+\mbm_\bt$ on the left hand side,  becomes
\beq
\label{86}0&=&\sum_{\bn\in \IZ_p^2}  \overline{f_\bt(-\bn-\bk+ \mbm_\bt  )}{f_\bt(-\bn + \mbm_\bt  )} \\
&&\quad \quad\cdot \lt[ e^{\im(-\phi(-\bn-\bk+\mbm_\bt)+\phi(-\bn+\mbm_\bt)-\phi(\bn+\bk)+\phi(\bn))}-1\rt]\nn
\eeq
for all $\bk\in \IZ_{2p-1}^2$. 

Consider those summands in \eqref{86}, for any fixed $\bk$,  that share
a common $\phi(\bl)$, for any fixed $\bl$. Clearly, there are at most four such terms:
\beq
&& \lt[e^{\im (-\phi(\bl)+\phi(\bl+\bk)-\phi(-\bl+\mbm_\bt)+\phi(-\bl-\bk+\mbm_\bt))}-1\rt]\overline{f_\bt(\bl)}f_\bt(\bl+\bk) \label{87}\\
&& +\lt[e^{\im (-\phi(\bl-\bk)+\phi(\bl)-\phi(-\bl+\bk+\mbm_\bt)+\phi(-\bl+\mbm_\bt))}-1\rt]\overline{f_\bt(\bl-\bk)}f_\bt(\bl) \nn\\
&&+\lt[e^{\im (-\phi(-\bl+\mbm_\bt)+\phi(-\bl+\bk+\mbm_\bt)-\phi(\bl)+\phi(\bl-\bk))}-1\rt]\overline{f_\bt(-\bl+\mbm_\bt)}f_\bt(-\bl+\bk+\mbm_\bt) \nn\\
&& +\lt[e^{\im (-\phi(-\bl-\bk+\mbm_\bt)+\phi(-\bl+\mbm_\bt)-\phi(\bl+\bk)+\phi(\bl))}-1\rt]\overline{f_\bt(-\bl-\bk+\mbm_\bt)}f_\bt(-\bl+\mbm_\bt)\nn
\eeq
which must vanish under  \eqref{86}. 

Some observations are in order. First, both $\phi(\bl)$ and $\phi(-\bl+\mbm_\bt)$ appear exactly once in each summand in \eqref{87}. Second, the following pairings of the other phases
\beq
\label{99}
&&\{\phi(\bl+\bk), \phi(-\bl-\bk+\mbm_\bt)\},\quad \{\phi(\bl-\bk), \phi(-\bl+\bk+\mbm_\bt)\}
\eeq
also appear exactly twice in \eqref{87}. As long as 
\beq
\label{87'}
\bk&\not =&0\\
\&\quad  \bl&\neq& \mbm_\bt/2, \nn
\eeq
the two sets in \eqref{99}  are not identical and, since each contains at least one element that is independent of the other, we have
\beq
0&=& \lt[e^{\im (-\phi(\bl)+\phi(\bl+\bk)-\phi(-\bl+\mbm_\bt)+\phi(-\bl-\bk+\mbm_\bt))}-1\rt]\overline{f_\bt(\bl)}f_\bt(\bl+\bk) \label{88}\\
&& +\overline{\lt[e^{\im (\phi(-\bl-\bk+\mbm_\bt)-\phi(-\bl+\mbm_\bt)+\phi(\bl+\bk)-\phi(\bl))}-1\rt]}\overline{f_\bt(-\bl-\bk+\mbm_\bt)}f_\bt(-\bl+\mbm_\bt)\nn\\
\label{89}0&=&\lt[e^{\im (-\phi(\bl-\bk)+\phi(\bl)-\phi(-\bl+\bk+\mbm_\bt)+\phi(-\bl+\mbm_\bt))}-1\rt]\overline{f_\bt(\bl-\bk)}f_\bt(\bl) \\
&&+\overline{\lt[e^{\im (\phi(-\bl+\mbm_\bt)-\phi(-\bl+\bk+\mbm_\bt)+\phi(\bl)-\phi(\bl-\bk))}-1\rt]}\overline{f_\bt(-\bl+\mbm_\bt)}f_\bt(-\bl+\bk+\mbm_\bt).\nn
\eeq
Because the two factors
\[
 \lt[e^{\im (-\phi(\bl)+\phi(\bl+\bk)-\phi(-\bl+\mbm_\bt)+\phi(-\bl-\bk+\mbm_\bt))}-1\rt],\quad \lt[e^{\im (-\phi(\bl-\bk)+\phi(\bl)-\phi(-\bl+\bk+\mbm_\bt)+\phi(-\bl+\mbm_\bt))}-1\rt]
 \]
 differ with their complex conjugates  in a random manner independently from $f_\bt$, both \eqref{88} and \eqref{89} are almost surely false unless
 \beq\label{90}
 \overline{f_\bt(\bl)}f_\bt(\bl+\bk)=0,&&f_\bt(\bl) \overline{f_\bt(\bl-\bk)}=0,\\
f_\bt(-\bl+\mbm_\bt) \overline{f_\bt(-\bl-\bk+\mbm_\bt)}=0,
 &&\overline{f_\bt(-\bl+\mbm_\bt)}f_\bt(-\bl+\bk+\mbm_\bt)=0.\label{91}
 \eeq
 
 On the other hand, if $\bl=\mbm_\bt/2$ but $\bk\neq 0$, then
 \beq
 \label{92}\bl+\bk=-\bl+\bk+\mbm_\bt \neq  -\bl-\bk+\mbm_\bt=\bl-\bk,
 \eeq
 and hence \eqref{87} = 0  becomes
 \beq
0&=& \lt[e^{\im (-2\phi(\bl)+\phi(\bl+\bk)+\phi(\bl-\bk))}-1\rt]\overline{f_\bt(\bl)}f_\bt(\bl+\bk)\label{95} + \overline{\lt[e^{\im (-2\phi(\bl)+\phi(\bl+\bk)+\phi(\bl-\bk))}-1\rt] }\overline{f_\bt(\bl-\bk)}f_\bt(\bl)\nn
\eeq
implying \eqref{90}. In other words, \eqref{90} holds true for $\bk\neq 0$.

 Now we show that \eqref{90}  for $\bk\neq 0$  implies 
 that $f_\bt$ has at most one nonzero pixel.
 
Suppose that $f_\bt(\bl)\not = 0$ for some $\bl$. Then by \eqref{90}, $f_\bt(\bn)=0$ for all other $\bn\neq \bl$, i.e.
$f_\bt$ is a singleton  which contradicts the assumption that  $\supp(f_\bt)$ is not a subset of a line.

Consequently the second alternative in \eqref{1.8-2} is false almost surely. 
 The proof  is complete.

\section{Proof of Theorem \ref{thm4}} \label{appD}

The argument is a more detailed, corrected exposition of that for Theorem 5.1 in \cite{Born-tomo} where
the condition $\widehat f(0)\neq 0$ is missing.

Recall that  for  $ (\bk,\bk')\in C_{\bt, \bt'}:=L_{\bt,\bt'}(f)\cap L_{\bt,\bt'}(g)$
\beqn
\widehat g^*_{\bt'}(\bk') e^{-\im 2\pi\bk'\cdot\bl_{\bt'}/p}&=&\widehat g^*_{\bt}(\bk) e^{-\im 2\pi\bk\cdot\bl_{\bt}/p} \\
\widehat f^*_{\bt'}(\bk') e^{-\im 2\pi\bk'\cdot\bl_{\bt'}/p}&=&\widehat f^*_{\bt}(\bk) e^{-\im 2\pi\bk\cdot\bl_{\bt}/p}. 
\eeqn

Suppose that  the first alternative in \eqref{1.8-2} holds true for $\bt'$,  i.e.
\beq
\label{100-2}
g^*_{\bt'}(\bn)&=&
e^{\im \theta_{\bt'}} f^*_{\bt'}(\bn+\mbm_{\bt'})\lamb_{\bt'}(\bn+\mbm_{\bt'})
\eeq
with 
\[\lamb_{\bt'}(\bn)=\mu(\bn)/\mu(\bn-\mbm_{t'}),
\]
implying
\[
\widehat{g^*_{\bt'}}(\bk')=e^{\im \theta_{\bt'}}e^{\im 2\pi \mbm_{\bt'} \cdot\bk'/p} (\widehat f^*_{\bt'}\star \widehat\lamb_{\bt'})(\bk')
\]
where $\star$ denotes the discrete convolution over $\IZ_p^2$.

We now prove that the second alternative in \eqref{1.8-2} can not hold for $\bt$. Otherwise, for some $\mbm_\bt$  \beq
\label{1.27}
g^*_\bt(\bn)&= & e^{\im \theta_\bt} \overline{ f^*_\bt(-\bn +\mbm_\bt)\nu_\bt(-\bn+\mbm_\bt)}
\eeq
with
\[
\nu_\bt(\bn)={ \mu(\bn)}/\overline{\mu(-\bn+\mbm_{t})}. 
\]
implying  
\[
\widehat{g^*_\bt}(\bk)
=\overline{(\widehat f^*_\bt\star \widehat \nu_\bt)(\bk )} e^{-\im 2\pi \mbm_\bt \cdot\bk/p}.
\]

For $(\bk,\bk')\in C_{\bt,\bt'}$, 
\[
e^{\im \theta_{\bt'}}e^{\im 2\pi (\mbm_{\bt'}-\bl_{\bt'}) \cdot\bk'/p} \widehat f^*_{\bt'}\star \widehat\lamb_{\bt'}(\bk')e^{-\im 2\pi\bk'\cdot\bl_{\bt'}/p}= e^{\im \theta_\bt} e^{-\im 2\pi (\mbm_\bt+\bl_\bt) \cdot\bk/p}\overline{\widehat f^*_\bt\star \widehat \nu_\bt }(\bk),\]
implying 
\beq\label{1.30}
0&=&e^{\im \theta_{\bt'}}e^{\im 2\pi (\mbm_{\bt'}-\bl_{\bt'}) \cdot\bk'/p}\sum_{\bn\in \IZ^2_n} e^{\im \phi(\bn)}
e^{-\im \phi(\bn-\mbm_{\bt'})}f^*_{\bt'}(\bn)e^{-\im2\pi \bn\cdot\bk'/p}\\
&-&e^{\im \theta_\bt} e^{-\im 2\pi (\mbm_\bt+\bl_\bt) \cdot\bk/p} \sum_{\bn\in \IZ_n^2} e^{-\im \phi(\bn)}e^{-\im\phi(-\bn+\mbm_\bt)}\overline{f^*_\bt(\bn)}e^{\im 2\pi \bn\cdot\bk/p}. \nn
\eeq

We now show that eq. \eqref{1.30} can not hold for any $\mbm_{\bt'}, \mbm_\bt$. 

For fixed $\bl$,  only one term in \eqref{1.30} contains $e^{\im\phi(\bl)}$: 
\[
e^{\im \theta_{\bt'}}e^{\im 2\pi (\mbm_{\bt'}-\bl_{\bt'}) \cdot\bk'/p}e^{\im \phi(\bl)}
e^{-\im \phi(\bl-\mbm_{\bt'})}f^*_{\bt'}(\bl)e^{-\im2\pi \bl\cdot\bk'/p}
\]
which must vanish by itself following \eqref{1.30} unless the random factors cancel out, i.e.
$\mbm_{\bt'}=0$. 

If $\mbm_{\bt'}\neq 0$ then $f^*_{\bt'}(\bl)=0$ for all $\bl$, contrary to the assumption of a non-line $f^*_{\bt'}$. 
On the other hand, if $\mbm_{\bt'}=0$, the summands of the first sum in \eqref{1.30} are non-random (as $e^{\im\phi(\bn)}e^{-\im \phi(\bn-\mbm_{\bt'})}=1$) while those of the second sum are random (as $e^{-\im \phi(\bn)}e^{-\im\phi(-\bn+\mbm_\bt)}=e^{-\im \phi(\bn)-\im \phi(-\bn)}$) for some $\bn$. Consequently, both sums must vanish separately, in particular,
\[
\sum_{\bn\in \IZ^2_n} f^*_{\bt'}(\bn)e^{-\im2\pi \bn\cdot\bk'/p}=\widehat f^*_{\bt'} (\bk')=0,\quad (\bk,\bk')\in C_{\bt,\bt'}, 
\]
implying $\widehat f(0)=0$ which violates our assumption.

 Consequently,  the only viable alternative for $\bt$ under \eqref{100-2}  is
\beq
\label{1.13}
g^*_\bt(\bn)&=&
e^{\im \theta_\bt} f^*_\bt(\bn+\mbm_\bt )\lamb_\bt(\bn+\mbm_\bt),\quad \forall \bt \in \cT, 
\eeq
{for some} $\mbm_\bt$. 

For $ (\bk,\bk')\in C_{\bt, \bt'}$,
\beq
e^{\im\theta_{\bt}}e^{\im 2\pi (\mbm_{\bt}-\bl_\bt) \cdot\bk}\widehat f^*_{\bt} \star\widehat\lamb_{\bt}(\bk)&=&
e^{\im\theta_{\bt'}}e^{\im 2\pi (\mbm_{\bt'}-\bl_{\bt'}) \cdot\bk'}\widehat f^*_{\bt'} \star\widehat\lamb_{\bt'}(\bk'),\label{1.61}
\eeq
 implying 
\beq\label{1.30'}
0&=&e^{\im \theta_{\bt}}e^{\im 2\pi (\mbm_{\bt}-\bl_\bt) \cdot\bk/p}\sum_{\bn\in \IZ^2_n} e^{\im \phi(\bn)}
e^{-\im \phi(\bn-\mbm_{\bt})}f^*_{\bt}(\bn)e^{-\im2\pi \bn\cdot\bk/p}\\
&-&\nn
e^{\im \theta_{\bt'}}e^{\im 2\pi (\mbm_{\bt'}-\bl_{\bt'}) \cdot\bk'/p}\sum_{\bn\in \IZ^2_n} e^{\im \phi(\bn)}
e^{-\im \phi(\bn-\mbm_{\bt'})}f^*_{\bt'}(\bn)e^{-\im2\pi \bn\cdot\bk'/p}. 
\eeq

Given $\bl$, only the following two terms contain $e^{\im \phi(\bl)}$
\beqn
&&e^{\im \theta_{\bt}}e^{\im 2\pi (\mbm_{\bt}-\bl_\bt) \cdot\bk/p}e^{\im \phi(\bl)}
e^{-\im \phi(\bl-\mbm_{\bt})}f^*_{\bt}(\bl)e^{-\im2\pi \bl\cdot\bk/p}\\
&-&e^{\im \theta_{\bt'}}e^{\im 2\pi (\mbm_{\bt'}-\bl_{\bt'}) \cdot\bk'/p} e^{\im \phi(\bl)}
e^{-\im \phi(\bl-\mbm_{\bt'})}f^*_{\bt'}(\bl)e^{-\im2\pi \bl\cdot\bk'/p}
\eeqn
which must vanish by \eqref{1.30'} unless $\mbm_\bt=0$ or $\mbm_{\bt'}=0$.

If $\mbm_\bt=\mbm_{\bt'}=0,$ then $g^*_\bt(\bn)=e^{\im\theta_\bt} f^*_\bt(\bn)$ and  $g^*_{\bt'}(\bn)=e^{\im\theta_{\bt'} }f^*_{\bt'}(\bn)$ for all $\bn$. 

By Proposition \ref{thm:slice}
$
\widehat f_{\bt}(0)=\widehat f_{\bt'}(0)=\widehat f(0)\neq 0,$
it follows from 
$\widehat g_{\bt}(0)=\widehat g_{\bt'}(0)$ that  $\theta_{\bt}=\theta_{\bt'}$.

If only one of them vanishes, say $\mbm_\bt=0, \mbm_{\bt'}\neq 0,$ then the first sum in \eqref{1.30'} is non-random
while the second sum is random and hence both must vanish separately. In particular
\[
\sum_{\bn\in \IZ^2_n} f^*_{\bt}(\bn)e^{-\im2\pi \bn\cdot\bk/p}=\widehat f^*_\bt (\bk) = 0,\quad (\bk,\bk')\in C_{\bt,\bt'} 
\]
implying $\widehat f(0)=0$ which violates our assumption.

The remaining case, $\mbm_\bt\neq 0\,\,\&\,\,\mbm_{\bt'}\neq 0$ is further split into two sub-cases:
$\mbm_\bt\neq\mbm_{\bt'}$ and $\mbm_\bt= \mbm_{\bt'}.$

Suppose $\mbm_\bt\neq\mbm_{\bt'}$ and both are nonzero. Then the random factors in \eqref{1.30'}
\[
e^{\im \phi(\bn)}
e^{-\im \phi(\bn-\mbm_{\bt})}, \quad e^{\im \phi(\mbm)}
e^{-\im \phi(\mbm-\mbm_{\bt'})} 
\]
can not balance out to satisfy \eqref{1.30'}. 

Consider the remaining undesirable  possibility under \eqref{1.61}: $\mbm_\bt= \mbm_{\bt'}\neq 0.$ Let $\mbm_0:= \mbm_\bt= \mbm_{\bt'}$. Then \eqref{1.30'} becomes 
\beqn
0&=&e^{\im \theta_{\bt}}e^{\im 2\pi (\mbm_{0}-\bl_\bt) \cdot\bk/p}\sum_{\bn\in \IZ^2_n} e^{\im \phi(\bn)}
e^{-\im \phi(\bn-\mbm_{0})}f^*_{\bt}(\bn)e^{-\im2\pi \bn\cdot\bk/p}\\
&-&\nn
e^{\im \theta_{\bt'}}e^{\im 2\pi (\mbm_{0}-\bl_{\bt'}) \cdot\bk'/p}\sum_{\bn\in \IZ^2_n} e^{\im \phi(\bn)}
e^{-\im \phi(\bn-\mbm_{0})}f^*_{\bt'}(\bn)e^{-\im2\pi \bn\cdot\bk'/p}\\
&=& \sum_{\bn\in \IZ^2_n} e^{\im \phi(\bn)}
e^{-\im \phi(\bn-\mbm_{0})} \lt[e^{\im \theta_{\bt}}e^{\im 2\pi (\mbm_{0} -\bl_\bt)\cdot\bk/p}f^*_{\bt}(\bn)e^{-\im2\pi \bn\cdot\bk/p}\rt.\\
&&\hspace{3cm} \lt. -e^{\im \theta_{\bt'}}e^{\im 2\pi (\mbm_{0}-\bl_{\bt'}) \cdot\bk'/p} f^*_{\bt'}(\bn)e^{-\im2\pi \bn\cdot\bk'/p}\rt],
\eeqn
implying 
\beq
\label{135}
e^{\im \theta_{\bt}}e^{\im 2\pi (\mbm_{0}-\bl_\bt) \cdot\bk/p}f^*_{\bt}(\bn)e^{-\im2\pi \bn\cdot\bk/p}
=e^{\im \theta_{\bt'}}e^{\im 2\pi (\mbm_{0}-\bl_{\bt'}) \cdot\bk'/p} f^*_{\bt'}(\bn)e^{-\im2\pi \bn\cdot\bk'/p},\quad \forall \bn,
\eeq
for  $ (\bk,\bk')\in C_{\bt,\bt'}$. 

Rewriting \eqref{135} for $f^*_\bt (\bn)\neq 0$ (then $f^*_{\bt'}(\bn)\neq 0$), we have
\[
e^{\im 2\pi (\bl_{\bt'}\cdot\bk'-\bl_\bt\cdot\bk)/p} e^{\im 2\pi (\mbm_{0}-\bn) \cdot(\bk-\bk')/p}
=e^{\im (\theta_{\bt'}-\theta_\bt)} f^*_{\bt'}(\bn)/f^*_{\bt}(\bn)
\]
whose left hand side is a linear phase factor  and whose right hand side
is independent of  $ (\bk,\bk')\in C_{\bt,\bt'}$. Hence
\[
\bl_{\bt'}\cdot\bk'-\bl_\bt\cdot\bk+(\mbm_0-\bn)\cdot(\bk-\bk')=a\quad\mod p
\]
for some constant $a\in \IR$ for all $\bn$ such that $f^*_\bt(\bn)f^*_{\bt'}(\bn)\neq 0$, and consequently,    by \eqref{135} 
 \beq\label{143}
 e^{\im \theta_{\bt}}f^*_{\bt}(\bn)
=e^{\im a}e^{\im \theta_{\bt'}} f^*_{\bt'}(\bn),\quad \forall \bn.
\eeq
By the common line property, $\theta_\bt=\theta_{\bt'}+a$ and  $f^*_{\bt}
= f^*_{\bt'}$.

Let us turn to  the last undesirable alternative:
\beq
\label{1.14}
g^*_\bt(\bn)\mu(\bn)&=&
 e^{\im \theta_\bt} \overline{ f^*_\bt(-\bn +\mbm_\bt) \mu(-\bn+\mbm_\bt)},\\
 g^*_{\bt'}(\bn)\mu(\bn)&=&
 e^{\im \theta_{\bt'}} \overline{ f^*_{\bt'}(-\bn +\mbm_{\bt'}) \mu(-\bn+\mbm_{\bt'})}.
\eeq

For  $ (\bk,\bk') \in C_{\bt,\bt'}$,
\[
e^{\im \theta_{\bt'}}e^{-\im 2\pi (\mbm_{\bt'}+\bl_{\bt'}) \cdot\bk'/p} \overline{\widehat f^*_{\bt'}\star \widehat\nu_{\bt'}}(\bk')= e^{\im \theta_\bt} e^{-\im 2\pi (\mbm_\bt+\bl_\bt) \cdot\bk/p}\overline{\widehat f^*_\bt\star \widehat \nu_\bt }(\bk),\]
 implying 
\beq\label{1.30''}
0&=&e^{\im \theta_{\bt'}}e^{-\im 2\pi (\mbm_{\bt'}+\bl_{\bt'}) \cdot\bk'/p}\sum_{\bn\in \IZ^2_n} e^{-\im \phi(\bn)}
e^{-\im \phi(-\bn+\mbm_{\bt'})}\overline{f^*_{\bt'}(\bn)}e^{\im2\pi \bn\cdot\bk'/p}\\
&-&e^{\im \theta_\bt} e^{-\im 2\pi (\mbm_\bt+\bl_\bt) \cdot\bk/p} \sum_{\bn\in \IZ_n^2} e^{-\im \phi(\bn)}e^{-\im\phi(-\bn+\mbm_\bt)}\overline{f^*_\bt(\bn)}e^{\im 2\pi \bn\cdot\bk/p}. \nn
\eeq
For fixed $\bl$, only the following four terms contain $e^{-\im \phi(\bl)}$
\beq\label{140}
&&e^{\im \theta_{\bt'}}e^{-\im 2\pi (\mbm_{\bt'}+\bl_{\bt'}) \cdot\bk'/p}e^{-\im \phi(\bl)} \lt[
e^{-\im \phi(\bl-\mbm_{\bt'})}\overline{f^*_{\bt'}(\bl)}e^{\im2\pi \bl\cdot\bk'/p}+ e^{-\im\phi(\mbm_{\bt'}-\bl)}
 \overline{f^*_{\bt'}(\mbm_{\bt'}-\bl)}e^{\im (\mbm_{\bt'}-\bl)\cdot\bk'/p}\rt]\\
&&- e^{\im \theta_{\bt}}e^{-\im 2\pi (\mbm_{\bt}+\bl_\bt) \cdot\bk/p}e^{-\im\phi(\bl)}\lt[
e^{-\im \phi(\bl-\mbm_{\bt})}\overline{f^*_{\bt}(\bl)}e^{\im2\pi \bl\cdot\bk/p}+ e^{-\im\phi(\mbm_\bt-\bl)}
 \overline{f^*_\bt(\mbm_\bt-\bl)}e^{\im (\mbm_\bt-\bl)\cdot\bk/p}\rt]\nn
\eeq
which must sum to zero by \eqref{1.30''}. 

But  the expression in \eqref{140} can not be zero unless $\mbm_\bt=\mbm_{\bt'} (:=\mbm_0)$ and the following equations hold for  $ (\bk,\bk')\in C_{\bt,\bt'}$, 
\beq\label{145}
e^{\im \theta_{\bt'}}e^{-\im 2\pi (\mbm_{0}+\bl_{\bt'}) \cdot\bk'/p} 
\overline{f^*_{\bt'}(\bl)}e^{\im2\pi \bl\cdot\bk'/p}&=& e^{\im \theta_{\bt}}e^{-\im 2\pi (\mbm_{0}+\bl_\bt) \cdot\bk/p}
\overline{f^*_{\bt}(\bl)}e^{\im2\pi \bl\cdot\bk/p}\\
\quad {e^{\im \theta_{\bt'}}e^{-\im 2\pi (\mbm_{0}+\bl_{\bt'}) \cdot\bk'/p} 
\overline{f^*_{\bt'}(\mbm_{0}-\bl)}e^{\im2\pi (\mbm_{0}-\bl)\cdot\bk'/p}}
&=& e^{\im \theta_{\bt}}e^{-\im 2\pi (\mbm_{0}+\bl_\bt) \cdot\bk/p}
\overline{f^*_{\bt}(\mbm_0-\bl)}e^{\im2\pi (\mbm_0-\bl)\cdot\bk/p}. \label{146}
\eeq

For $f^*_{\bt'}(\bl)f^*_{\bt}(\bl)\neq 0$, \eqref{145}  implies that for  $ (\bk,\bk')\in C_{\bt,\bt'}$ and some constant $a\in \IR$,
\beq
\label{142}
a+ (\mbm_{0}+\bl_{\bt'}-\bl) \cdot\bk'= (\mbm_{0}+\bl_{\bt}-\bl) \cdot\bk\quad\mod p, 
 \eeq
and consequently, 
  \beq
 \label{147}
 e^{-\im \theta_{\bt'}}e^{-\im\theta_0}
{f^*_{\bt'}(\bl)}= e^{-\im \theta_{\bt}}{f^*_{\bt}(\bl)}.
\eeq
The same analysis for \eqref{146} leads to the equivalent equation   \eqref{147}.
By the common line property, $\theta_\bt=\theta_{\bt'}+a$ and  $f^*_{\bt}
= f^*_{\bt'}$. 

\commentout{
Taking  the Fourier transform on \eqref{143}, we have
\[
e^{\im\theta_\bt} \widehat f_\bt(\bk)=e^{\im\theta_{\bt'}} \widehat f_{\bt'} (\bk).
\]
By Fourier slice theorem, we have $\widehat f_\bt(\bk)= \widehat f_{\bt'} (\bk'),\quad \bk,\bk'\in P_\bt\cap P_{\bt'}$. 
}

The two undesirable ambiguities \eqref{143} and \eqref{147} is  summarized by the second alternative in \eqref{alt}. 
The proof is complete.

\end{appendix}

\end{document}